\documentclass[article]{IEEEtran}

\usepackage{research5IEEE}

\usepackage[utf8]{inputenc} \ifCLASSINFOpdf
  
\else
  
\fi

\newcommand{\BigPrvcond}[2]{\Pr\Bigl[#1 \kern-0.1em \Bigm| \kern-0.1em#2\Bigr]}

\newtheorem{assumption}{Assumption}
\newtheorem{theorem}{Theorem}
\newtheorem{lemma}[theorem]{Lemma}
\newtheorem{corollary}[theorem]{Corollary}

\newtheorem{proposition}[theorem]{Proposition}
\newtheorem{remark}[theorem]{Remark}

\newcommand{\logplus}{\operatorname{log}^{+}}
\newcommand{\RealsP}{\Reals_{+}}
\newcommand{\RealsPP}{\Reals_{++}}

\usepackage{definitions}

\begin{document}

\title{Constrained Source Coding with Side Information}

\author{Amos Lapidoth, Andreas Mal\"ar, and Mich\`ele
  Wigger\thanks{The material in this paper was presented in part at
    the 2011 Information Theory and Applications Workshop and at the
    2011 IEEE International Symposium on Information Theory.

A.~Lapidoth is with the Department of
    Information Technology and Electrical Engineering, ETH Zurich,
    Switzerland. (email: lapidoth@isi.ee.ethz.ch). A.~Mal\"ar was with
    the Department of Information Technology and Electrical
    Engineering, ETH Zurich, Switzerland.  He is now with Malcom AG, Zurich, Switzerland 
 (email: andreas@malcom.ch). 
  M.~Wigger is with the Communications and Electronics Department, 
Telecom ParisTech, Paris, France (email: michele.wigger@telecom-paristech.fr).

The work of A.~Mal\"ar  was supported by
    an IDEA League student grant. The work of M.~Wigger was supported by the "Emergences" grant of the
    city of Paris.}
}
\maketitle

\allowdisplaybreaks[4]

\begin{abstract}
  The source-coding problem with side information at the decoder is
  studied subject to a constraint that the encoder---to whom the side
  information is unavailable---be able to compute the decoder's
  reconstruction sequence to within some distortion.

  For discrete memoryless sources and finite single-letter distortion
  measures, an expression is given for the minimal description rate as
  a function of the joint law of the source and side information and
  of the allowed distortions at the encoder and at the decoder. The minimal
  description rate is also computed for a memoryless Gaussian source
  with squared-error distortion measures.
  
  A solution is also provided to a more general problem where there
  are more than two distortion constraints and each distortion
  function may be a function of three arguments: the source symbol,
  the encoder's reconstruction symbol, and the decoder's reconstruction
  symbol.
\end{abstract}

\IEEEpeerreviewmaketitle

\section{Introduction}

\IEEEPARstart{L}{ike} Wyner and Ziv \cite{wynerziv76}, we study a
setting where a sequence generated by a source is to be described
succinctly to a reconstructor (``decoder'') with access to some side
information. Wyner and Ziv showed that, although the side information
is not available at the describing terminal (``encoder''), it can be
beneficial in improving the trade-off between the rate of description
and the reconstruction distortion.  They fully characterized this
trade-off for memoryless sources with single-letter distortion
measures. Unlike the case without side information---since the side
information is used in the reconstruction process, and since the side
information is not available at the describing terminal---the
describing terminal cannot tell how the source sequence it observes
will be reconstructed. In some settings, this is
unacceptable. Steinberg \cite{steinberg09} therefore studied the
common-reconstruction problem where an additional restriction is
imposed that the reconstruction sequence be computable with
probability nearly one at the describing terminal. This greatly limits
the extent by which the reconstruction can depend on the side
information. More generally, there is a tension between the degree by
which the reconstructing terminal utilizes the side information and the
precision with which the describing terminal can compute the
reconstruction sequence.  It is this tension that we study in this
paper.

\begin{figure}[tbp]
\centering

\setlength{\unitlength}{0.8cm}
\begin{picture}(7.7,3.8)(-4.5,-2.4)

\put(-5,-0.75){\vector(1,0){1}}
\put(-4.8,-0.5){$X^n$}

\put(2.5,1){\vector(0,-1){1}}
\put(2.7,0.5){$Y^n$}

\put(-1.5,-0.5){$M$}
\put(-2,-0.75){\vector(1,0){3.5}}

\put(2.5,-1.5){\vector(0,-1){1}}
\put(2.75,-2.25){$\hat{X}_{\d}^n$}

\put(-3,-1.5){\vector(0,-1){1}}
\put(-2.8,-2.25){$\hat{X}_{\e}^n$}

\put(-0.3,-3.5){$\displaystyle\frac{1}{n}\sum_{i=1}^{n}\Exp\big[d_\d(X_i,\hat{X}_{\d,i})\big]
\leq \Dd$}
\put(-6,-3.5){$\displaystyle\frac{1}{n}\sum_{i=1}^{n}\Exp\big[d_{\e}(\hat{X}_{\d,i},\hat{X}_{\e,i})\big]
\leq \De;$}

\put(-3.75,-0.85){encoder}
\put(1.8,-0.85){decoder}

\put(-4.0,-1.5){\line(0,1){1.5}}
\put(-4.0,0){\line(1,0){2}}
\put(-2,-1.5){\line(0,1){1.5}}
\put(-2,-1.5){\line(-1,0){2}}
\put(1.5,-1.5){\line(0,1){1.5}}
\put(1.5,0){\line(1,0){2}}
\put(3.5,-1.5){\line(0,1){1.5}}
\put(3.5,-1.5){\line(-1,0){2}}

\end{picture}
\vspace{10.5mm}
  \caption{Constrained Wyner-Ziv coding.}
  \label{fig:problem-diagram}
\end{figure}

To quantify this tension, we require that the describing terminal
generate an estimate of the sequence that will be produced at the
reconstructing terminal (Figure~\ref{fig:problem-diagram}). We then
study the distortions that can be simultaneously achieved at the
describing terminal (''the encoder distortion'') and at the
reconstructing terminal (''the decoder distortion'') as a function of
the description rate.  If the encoder's distortion function is the
Hamming distance and if the allowed distortion is zero, then our
problem reduces in essence to Steinberg's common-reconstruction
problem.\footnote{For a precise statement see
  Remark~\ref{rem:steinberg} in Section~\ref{sec:related} ahead.} And
if the allowed encoder distortion is infinite, our problem reduces to
Wyner and Ziv's problem.  We can thus view our problem as a
generalization of the Wyner-Ziv problem and Steinberg's common
reconstruction problem.

For discrete memoryless sources and finite single-letter distortion
functions, we provide a single-letter characterization of the
trade-off between the description rate and the distortions at the
encoder and decoder sides. We also calculate this trade-off for a
memoryless Gaussian source and squared-error distortion
functions. Finally,
%
%
%
in Section~\ref{sec:Kdistortions}, we generalize the results to account
for more than two constraints and to allow each distortion function to
depend on three arguments: the source symbol, the encoder's
reconstruction symbol, and the decoder's reconstruction symbol.

Steinberg's work was also extended in other ways.
Kittichokechai, Oechtering, and Skoglund \cite{Kittipongetal}
determined the rate-distortion function under a common-reconstruction
constraint for a modified Wyner-Ziv setup where the encoder can
influence the decoder's side information via an
action-generator. Timo, Grant, and Kramer \cite{TimoGrantKramer2010},
\cite{TimoGrantKramer2012} and Ahmadi, Tandon, Simeone, and Poor
\cite{AhmadiTandonSimeonePoor2012}, \cite{AhmadiTandonSimeonePoor2013}
derived the rate-distortions function under a common-reconstruction
constraint for two special cases of the Heegard-Berger/Kaspi problem
(the Wyner-Ziv problem with two decoders):
\cite{AhmadiTandonSimeonePoor2012}, \cite{AhmadiTandonSimeonePoor2013}
for physically degraded side informations, and
\cite{TimoGrantKramer2010}, \cite{TimoGrantKramer2012} for
complementary
side informations. 
Ahmadi, Tandon, Simeone, and Poor \cite{AhmadiTandonSimeonePoor2012},
\cite{AhmadiTandonSimeonePoor2013} also presented the
rates-distortions function under a common-reconstruction constraint
for a cascade source-coding problem when the side informations are
physically degraded. Finally, already in~\cite{steinberg09}, Steinberg
studied the implications of the common-reconstruction constraint on
the simultaneous transmission of data and state and on joint
source-channel coding for the degraded broadcast channel.

The paper is organized as follows. In the rest of this section we
introduce our notation. In Section~\ref{sec:setup} we treat discrete
sources and general distortions, and in Section~\ref{sec:Gauss}
Gaussian sources with quadratic distortions. In
Section~\ref{sec:Kdistortions} we revisit discrete sources but this
time with more and more general distortion constraints.

\subsection{Notation}

Random variables are denoted by upper-case letters and their
realizations by lower-case letters. Vectors are denoted by bold-face
letters: random vectors by upper-case bold-face letters, and
deterministic vectors by lower-case bold-face letters. Sets and events
are denoted by calligraphic letters, i.e., $\set{A}$. An $n$-tuple
$(A_1,\ldots,A_n)$ is denoted $A^{n}$, and the $n$-fold Cartesian
product of the set $\set{A}$ is denoted $\set{A}^n$. The convex hull of a set $\set{A}$ is denoted by $\textnormal{conv}(\set{A})$. To indicate that
the random variables $A$ and $C$ and conditionally independent given
$B$ we write
\begin{equation*}
  A \markov B \markov C.
\end{equation*}

The transpose of a vector~$\vect{a}$ is denoted by $\trans{\vect{a}}$;
its Euclidean norm by $\|\vect{a}\|$; and the Euclidean inner product
between the vectors $\vect{a}$ and $\vect{b}$ by
$\inner{\vect{a}}{\vect{b}}$. The set of real numbers is denoted
$\Reals$ and its $d$-fold Cartesian product $\Reals^{d}$. The
nonnegative reals are denoted $\RealsP$, and the positive reals
$\RealsPP$.  The respective $d$-fold Cartesean products are denoted
$\RealsP^{d}$ and $\RealsPP^{d}$.
We use $\mathrm{I}(\cdot)$ to denote the indicator function:
$\mathrm{I}(\textnormal{statement})$ is equal to one if the statement
is true and is equal to zero if it is false. Throughout the paper
$\log(\cdot)$ denotes base-2 logarithm, and $\logplus(\xi) =
\max\{\log \xi, 0\}$. The abbreviation IID stands for independently
and identically distributed.

\section{Discrete Memoryless Source and General Distortions}\label{sec:setup}
\subsection{Problem Statement}\label{subsec:setup}
Our setting is illustrated in~Figure~\ref{fig:problem-diagram} and is
specified by a tuple
\begin{equation*}
  \bigl(\mathcal{X}, \mathcal{Y}, \hat{\mathcal{X}}, P_{XY},
  \dd, \de, \Dd, \De \bigr),
\end{equation*}
where $\mathcal{X}, \mathcal{Y}, \hat{\mathcal{X}}$ are finite sets,
$P_{XY}$ is a probability distribution on $\mathcal{X}\times
\mathcal{Y}$; $\dd(\cdot,\cdot)$ and $\de(\cdot,\cdot)$ are nonnegative functions 
\begin{align}
\dd & \colon \mathcal{X} \times \hat{\mathcal{X}} \to \distcodomain \label{eq:ahava_d}\\
\de & \colon \hat{\mathcal{X}} \times \hat{\mathcal{X}} \to
\distcodomain; \label{eq:ahava_e}
\end{align}
and $\Dd$ and $\De$ are nonnegative real numbers.

The sets $\mathcal{X}$,  $\mathcal{Y}$, and $\hat{\mathcal{X}}$ model the source, side information, and reconstruction alphabets. A source sequence $X^n\in\set{X}^n$ is observed
 at the encoder (but not at the decoder) and a side-information sequence $Y^n\in\set{Y}^n$
 at the decoder (but not at the encoder). 
 The sequence of pairs $\{(X_i,Y_i)\}_{i=1}^{n}$ is assumed to be drawn IID according to the joint law $P_{XY}$. 
 
The encoder describes the source sequence $X^{n}$ to the decoder by an index
\begin{IEEEeqnarray}{rCl}\label{eq:enc}
\idx & = & \encn(X^n)
\end{IEEEeqnarray} 
where 
\begin{equation}
  \label{eq:amos_def_enc}
 \encn \colon\mathcal{X}^n \to \idxset 
\end{equation}
is the encoding function and 
\begin{equation}
  \idxset \triangleq\{1,\ldots, \idxsetsize\}. 
\end{equation}
Based on the index $\idx$ and its side information~$Y^{n}$, the
decoder forms a reconstruction sequence
\begin{IEEEeqnarray}{rCl}\label{eq:decD}
\hat{X}_\d^n & = & \decn(\idx, Y^n)
\end{IEEEeqnarray}
where 
\begin{equation}
 \decn\colon \idxset \times \mathcal{Y}^n \to
\hat{\mathcal{X}}^n 
\end{equation}
is the decoder's reconstruction function.  The encoder's estimate of
the decoder's reconstruction sequence is
\begin{IEEEeqnarray}{rCl}\label{eq:decE}
\hat{X}_{e}^n & = & \ercn(X^n)
\end{IEEEeqnarray}
for some  
\begin{equation}
 \ercn\colon \mathcal{X}^n \to \hat{\mathcal{X}}^n. 
\end{equation}

The goal of the communication is that the decoder's reconstruction
$\hat{X}_{\d}^n$ matches the source sequence $X^n$ up to a distortion
no larger than $\Dd$ and the encoder's estimate $\hat{X}_{\e}^n$
matches the decoder's reconstruction $\hat{X}_{\d}^n$ up to a
distortion no larger than $\De$.  The distortions are measured by the
bounded, nonnegative, single-letter distortion functions $\dd(\cdot,\cdot)$ and $\de(\cdot,\cdot)$.

We say that a nonnegative triple $(R, \Dd, \De)$ is \emph{achievable} if for every
$\epsilon>0$ and sufficiently large $n$ there exists a message set of size 
\begin{equation}\label{eq:M}
|\set{M}| \leq 2^{n(R+\epsilon)}
\end{equation} and a triple of functions $(\encn, \decn, \ercn)$ as above  
such that  
the \emph{decoder-side reconstruction constraint}
\begin{equation}
\label{eq:distD}
\frac{1}{n} \sum_{i=1}^{n} \Exp \big[ \dd (X_i, \hat{X}_{\d,i}) \big]  \leq 
\Dd + \epsilon 
\end{equation}
and the \emph{encoder-side reconstruction constraint}
\begin{equation}
\label{eq:distE}  
\frac{1}{n} \sum_{i=1}^{n} \Exp \big[ \de (\hat{X}_{\d,i},\hat{X}_{\e,i}) \big]
\leq  \De+\epsilon
\end{equation}
are both met.

Our problem is not very interesting if the distortion constraints
cannot be met even when the source sequence is revealed losslessly to
the reconstructor. Consequently, we shall make the following
assumption throughout:
\begin{assumption}
  \label{assumption}
  The distortion functions $d_\d$ and $d_\e$ are such that for each
  $x\in\mathcal{X}$ there exist $\hat{x}_\d, \hat{x}_\e \in
  \hat{\set{X}}$ satisfying $d_\d(x, \hat{x}_\d)=0$
  and 
  $d_\e(\hat{x}_\d, \hat{x}_\e)=0$. 
  \end{assumption}
  As we shall see, this assumption ensures that the triple $(R, D_\d,
  D_\e)$ is achievable whenever $R\geq H(X|Y)$.

{
We are interested in finding the smallest rate $R$ such that a given distortion pair $D_\d, D_\e$ is achievable. For given $\Dd, \De \geq 0$, let  $\set{R}(D_\d, D_\e)$ denote the set of rates $R\geq 0$ such that the tuple $(R,D_\d, D_\e)$ is achievable:
\begin{equation}
\set{R}({D_\d, D_\e})\triangleq \{R \geq 0 \colon (R, D_\d, D_\e) \; \textnormal{is achievable}\}.
\end{equation}
Notice that by the assumption above, the set $\set{R}({D_\d, D_\e})$ contains all rates $R\geq H(X|Y)$ and is thus nonempty.
We can now define \emph{rate-distortions function} as
\begin{equation}\label{eq:rdfunct}
  \RopDD \triangleq \min_{R \in \set{R}({D_\d, D_\e})} R,
\end{equation}
where the minimum exists because the set $\set{R}{(D_\d, D_\e)}$ is nonempty, closed, and bounded from below by 0.}


\subsection{Related Setups}
\label{sec:related}


Wyner and Ziv's classic lossy source-coding problem with
side information~\cite{wynerziv76} is similar to our problem except
that Wyner and Ziv do not impose the encoder-side reconstruction
constraint~\eqref{eq:distE}. Informally, our problem thus reduces to
the Wyner-Ziv problem if we set $\De$ to infinity. Wyner and Ziv's
result can be summarized as follows:
 \begin{theorem}[Wyner and Ziv~\cite{wynerziv76}]\label{thm:wz}
   The rate-distortion function $R_{\textnormal{WZ}}(D_\d)$ in the
   Wyner-Ziv setup is given by
 \begin{equation}\label{eq:rdwz}
R_{\textnormal{WZ}}(D_\d)=\min_{ Z, \fdc} \bigl( I(X;Z)-I(Y;Z) \bigr)
\end{equation}
where $(X,Y) \sim P_{XY}$, and where the minimization is over all functions
$\fdc\colon \mathcal{Y} \times \mathcal{Z} \to \hat{\mathcal{X}}$ and 
discrete random variable $Z$  for which:
$Z$ takes values in an auxiliary alphabet
$\mathcal{Z}$ of size at most $|\mathcal{X}|+1$; 
\begin{equation}
Z\markov X\markov Y
\end{equation}
forms a Markov chain; and 
\begin{IEEEeqnarray}{rCl}
\Exp\big[d_\d\bigl(X, \fdc(Y,Z)\bigr)\big] & \leq & \Dd.
\end{IEEEeqnarray} 
\end{theorem}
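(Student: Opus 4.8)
The plan is to prove the two inequalities separately: a \emph{converse}, showing that every achievable rate is at least the right-hand side of~\eqref{eq:rdwz}, and a \emph{direct part}, exhibiting a scheme that attains it. Both follow classical lines, so I sketch only the structure.

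For the converse, I would fix a code with $|\set{M}|\le 2^{n(R+\epsilon)}$ such that $\frac1n\sum_i\Exp[\dd(X_i,\hat{X}_{\d,i})]\le D_\d+\epsilon$, start from $n(R+\epsilon)\ge H(\idx)\ge I(X^n;\idx\mid Y^n)$, expand by the chain rule into $\sum_{i=1}^{n} I(X_i;\idx\mid Y^n,X^{i-1})$, use the IID structure of $\{(X_i,Y_i)\}$ to replace $H(X_i\mid Y^n,X^{i-1})$ by $H(X_i\mid Y_i)$, and identify the auxiliary random variable $Z_i\triangleq(\idx,Y^{i-1},Y_{i+1}^n,X^{i-1})$, so that the bound becomes $\sum_i\bigl(H(X_i\mid Y_i)-H(X_i\mid Z_i,Y_i)\bigr)=\sum_i I(X_i;Z_i\mid Y_i)$. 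Two facts then need checking: the Markov chain $Z_i\markov X_i\markov Y_i$, which holds because, given $X_i$, the symbol $Y_i$ is conditionally independent of all other source and side-information symbols and hence of $\idx$ and of $Z_i$; and that $\hat{X}_{\d,i}$ is a function of $(Z_i,Y_i)$, which is immediate since it is a function of $(\idx,Y^n)$. A standard time-sharing step---let $Q$ be uniform over $\{1,\dots,n\}$ and independent of everything else, and set $Z\triangleq(Z_Q,Q)$, $X\triangleq X_Q$, $Y\triangleq Y_Q$---together with the single-letter identity $I(X;Z)-I(Y;Z)=I(X;Z\mid Y)$, valid under $Z\markov X\markov Y$, then yields $R+\epsilon\ge I(X;Z)-I(Y;Z)$ while $\Exp[\dd(X,\fdc(Y,Z))]\le D_\d+\epsilon$ for the $\fdc$ so induced. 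Letting $\epsilon\downarrow0$ completes the converse, up to the cardinality bound.

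For the direct part, I would fix a feasible pair $(Z,\fdc)$ for the right-hand side of~\eqref{eq:rdwz} and use binning: generate $2^{n(I(X;Z)+\delta)}$ codewords with entries drawn IID according to $P_Z$, and distribute them uniformly over $2^{n(I(X;Z)-I(Y;Z)+2\delta)}$ bins. The encoder finds a codeword $Z^n$ jointly typical with the observed $X^n$---one exists with high probability by the covering lemma, since the codebook rate exceeds $I(X;Z)$---and transmits its bin index; the decoder searches that bin for a codeword jointly typical with $Y^n$. Existence of such a codeword follows from the Markov lemma applied to $Z\markov X\markov Y$; uniqueness follows because the bin holds only about $2^{n(I(Y;Z)-\delta)}$ codewords, so the expected number of \emph{spurious} $Y^n$-typical ones vanishes. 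On the high-probability event that both steps succeed, $(X^n,Y^n,Z^n)$ is jointly typical, so the typical-average lemma gives $\frac1n\sum_i\dd\bigl(X_i,\fdc(Y_i,Z_i)\bigr)\le(1+\epsilon)\,\Exp[\dd(X,\fdc(Y,Z))]\le(1+\epsilon)D_\d$, whereas on the complementary event the per-letter distortion is at most $\max_{x,\hat{x}}\dd(x,\hat{x})<\infty$ and contributes negligibly. Averaging over codebooks and then derandomizing yields a deterministic code of rate $I(X;Z)-I(Y;Z)+2\delta$; sending $\delta\downarrow0$ and optimizing over $(Z,\fdc)$ gives achievability, and a routine compactness argument---the feasible conditional laws form a compact set and the objective and the constraint are continuous---shows the infimum in~\eqref{eq:rdwz} is attained.

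It remains to reduce the auxiliary alphabet to $|\set{Z}|\le|\set{X}|+1$: this is the standard support-lemma (Carath\'{e}odory) argument, expressing the marginal of $X$, the objective $I(X;Z)-I(Y;Z)$, and the distortion $\Exp[\dd(X,\fdc(Y,Z))]$ as continuous functionals of $P_{Z\mid X}$; preserving the $|\set{X}|-1$ marginal constraints together with the objective and the distortion requires at most $|\set{X}|+1$ mass points for $Z$, under which replacement the objective does not increase. I expect the main obstacle to be the decoder analysis in the direct part---obtaining simultaneously \emph{existence} (via the Markov lemma) and \emph{uniqueness} of the $Y^n$-typical codeword within the chosen bin, and controlling the residual distortion on the rare error event---whereas the converse is essentially bookkeeping once the auxiliary $Z_i$ has been guessed correctly.
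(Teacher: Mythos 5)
You have proved a statement that the paper itself does not prove: Theorem~\ref{thm:wz} is quoted from Wyner and Ziv~\cite{wynerziv76} as a known result, so there is no in-paper proof to compare against directly. Your sketch is the classical argument and is essentially correct; the closest material in the paper is the proof of its own Theorem~\ref{thm:finite}, whose converse (Section~\ref{sec:th1}) follows the same outline as yours with two differences of detail. First, the paper takes $Z_i=(\idx,Y^{i-1},Y_{i+1}^n)$ and disposes of $X^{i-1}$ by ``conditioning cannot increase entropy,'' whereas you absorb $X^{i-1}$ into $Z_i$; both choices preserve $Z_i\markov X_i\markov Y_i$ (given $X_i$, the pair $Y_i$ is independent of $(X_{\backslash i},Y_{\backslash i})$ and hence of $M$), and yours turns one inequality into an equality at no cost. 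Second, the paper single-letterizes through convexity and monotonicity of the single-letter function (its Proposition~\ref{prop:key_properties}), while you use a time-sharing variable $Q$ folded into $Z$; these are interchangeable, though your route still silently needs continuity of the right-hand side of~\eqref{eq:rdwz} in $\Dd$ to pass from $R+\epsilon\geq R_{\textnormal{WZ}}(\Dd+\epsilon)$ to $R\geq R_{\textnormal{WZ}}(\Dd)$ as $\epsilon\downarrow 0$ (the analogue of the continuity claim the paper proves for $\RitDD$), a standard but worth-stating step. Your direct part (covering, binning, Markov lemma, typical-average lemma, bounded distortion on the error event) matches the scheme the paper only sketches for Theorem~\ref{thm:finite}. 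One refinement on the cardinality bound: the distortion functional depends on $z$ not only through $P_{X|Z=z}$ but also through the reproduction map $\phi(\cdot,z)$, so the support-lemma argument should be phrased so that each retained letter keeps its map (or $\phi$ is chosen optimally per letter); with that phrasing your count of $|\mathcal{X}|+1$ is the standard one. None of these observations is a genuine gap.
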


Since imposing the encoder-side reconstruction
constraint~\eqref{eq:distE} cannot increase the set of achievable
rates,
\begin{equation}
  \label{eq:Us_geq_WZ}
  \RopDD \geq R_{\textnormal{WZ}}(D_\d).
\end{equation}
Equality holds whenever the encoder-side
reconstruction constraint~\eqref{eq:distE} does not pinch. 
For example, when $\hat{\mathcal{X}}=\mathcal{X}$;
$D_\d=D_\e$; and
 \begin{equation}
 d_\e(\hat{x},x) = d_\d(x,\hat{x}), \quad x,\hat{x} \in \mathcal{X}.
 \end{equation}
 Indeed, in this case the encoder can set $\hat{X}_{\e,i}$ to be
 $X_{i}$. This results in~\eqref{eq:distE} being identical to
 \eqref{eq:distD} and thus superfluous.

 Steinberg's setup in \cite{steinberg09} is obtained from ours by
 replacing the encoder-side distortion constraint~\eqref{eq:distE} by
 the more stringent perfect-reconstruction constraint
\begin{equation}\label{eq:steinberg}
\Prv{ \hat{X}_\e^n \neq \hat{X}_\d^n }\leq \epsilon.
\end{equation}
\begin{theorem}[Steinberg~\cite{steinberg09}]\label{thm:steinberg}
The rate-distortion function $R_{\textnormal{cr}}(D_\d)$ in
Steinberg's setup is given by 
 \begin{equation}
   \label{eq:amos_ayef200}
  R_\textnormal{cr}(D_{\mathrm{d}}) \triangleq
  \min_{\hat{X}}
  \big( I(X;\hat{X}) - I(Y;\hat{X}) \big),
\end{equation}
where the minimization is over all $\hat{X}$ taking value in $\hat{\mathcal{X}}$ and satisfying
\begin{equation}
\hat{X} \markov X \markov Y
\end{equation}
 and
 \begin{equation}
 \E{d_{\mathrm{d}}(X,\hat{X})} \leq D_{\mathrm{d}}.
 \end{equation}
 \end{theorem}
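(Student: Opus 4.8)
The plan is to prove the coding theorem in the two customary steps: an achievability part, exhibiting for every $R>R_{\textnormal{cr}}(D_\d)$ a sequence of codes meeting the decoder-side constraint~\eqref{eq:distD} and the perfect-reconstruction constraint~\eqref{eq:steinberg}, and a converse part, showing that every achievable rate is at least $R_{\textnormal{cr}}(D_\d)$. For achievability I would run Wyner and Ziv's scheme with one modification. Fix a $\hat X$ (nearly) attaining the minimum in~\eqref{eq:amos_ayef200}, so $\hat X\markov X\markov Y$ and $\Exp[\dd(X,\hat X)]\le D_\d$. Draw a codebook of about $2^{n(I(X;\hat X)+\delta)}$ codewords i.i.d.\ $\sim P_{\hat X}$, and have the encoder pick a codeword $\hat X^n$ jointly typical with $X^n$ (which exists with high probability, the codebook rate exceeding $I(X;\hat X)$); the modification is that the encoder \emph{also outputs this codeword} as its estimate, $\hat X_\e^n:=\hat X^n$. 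Bin the codebook into about $2^{n(I(X;\hat X)-I(Y;\hat X)+2\delta)}$ cells and transmit only the bin index, at rate $\approx I(X;\hat X)-I(Y;\hat X)$. The decoder outputs the unique codeword in the announced bin that is jointly typical with $Y^n$: by the Markov lemma $(\hat X^n,Y^n)$ is jointly typical with high probability, so the intended codeword qualifies, while the expected number of \emph{other} qualifying codewords in the bin is $\approx 2^{n(I(Y;\hat X)-\delta)}\cdot 2^{-nI(Y;\hat X)}=2^{-n\delta}\to0$. Hence $\hat X_\d^n=\hat X_\e^n$ with high probability, so~\eqref{eq:steinberg} holds, and joint typicality of $(X^n,\hat X^n)$ with boundedness of $\dd$ gives $\frac1n\sum_i\Exp[\dd(X_i,\hat X_{\d,i})]\le D_\d+o(1)$. (Incidentally, this identifies $R_{\textnormal{cr}}(D_\d)$ as the Wyner--Ziv function of Theorem~\ref{thm:wz} with the reconstruction function restricted to ignore the side information.)

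For the converse, take any code with $\Pr[\hat X_\e^n\neq\hat X_\d^n]\le\epsilon$ and $\frac1n\sum_i\Exp[\dd(X_i,\hat X_{\d,i})]\le D_\d+\epsilon$, message $M=\encn(X^n)$ of rate $R$, and set $\delta_n:=\frac1n+\epsilon\log|\hat{\mathcal X}|$, so that Fano's inequality gives $H(\hat X_\e^n\mid\hat X_\d^n)\le n\delta_n$. The chain I would run is
\begin{equation}
n(R+\epsilon)\ \ge\ H(M)\ \ge\ I(X^n;M\mid Y^n)\ \ge\ I(X^n;\hat X_\d^n\mid Y^n)\ \ge\ I(X^n;\hat X_\e^n\mid Y^n)-n\delta_n,
\end{equation}
the third inequality being the data-processing inequality ($\hat X_\d^n$ is a function of $(M,Y^n)$) and the last following from the chain rule for mutual information together with $H(\hat X_\e^n\mid\hat X_\d^n)\le n\delta_n$. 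Then by the chain rule, the data-processing inequality, memorylessness of $(X^n,Y^n)$, and the fact that conditioning cannot increase entropy,
\begin{equation}
I(X^n;\hat X_\e^n\mid Y^n)\ \ge\ \sum_{i=1}^n I(X_i;\hat X_{\e,i}\mid Y_i)\ =\ \sum_{i=1}^n\bigl(I(X_i;\hat X_{\e,i})-I(Y_i;\hat X_{\e,i})\bigr),
\end{equation}
where the equality uses the Markov chain $\hat X_{\e,i}\markov X_i\markov Y_i$, valid because $\hat X_{\e,i}$ is a function of $X^n$ while $Y_i$ depends on $X^n$ only through $X_i$.

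To finish, observe that for each $i$ the pair $(X_i,Y_i)\sim P_{XY}$ with the choice $\hat X_{\e,i}$ is feasible in the optimization defining $R_{\textnormal{cr}}$ at distortion level $D_i:=\Exp[\dd(X_i,\hat X_{\e,i})]$, so $I(X_i;\hat X_{\e,i})-I(Y_i;\hat X_{\e,i})\ge R_{\textnormal{cr}}(D_i)$; and $R_{\textnormal{cr}}(\cdot)$ is convex, which I would check by writing $I(X;\hat X)-I(Y;\hat X)=\sum_yP_Y(y)\,I(X;\hat X\mid Y{=}y)$ and noting that each summand is a mutual information with the fixed input law $P_{X\mid Y=y}$ and the variable channel $P_{\hat X\mid X}$, hence convex in $P_{\hat X\mid X}$ over the convex feasible set. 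Consequently $\frac1n\sum_i R_{\textnormal{cr}}(D_i)\ge R_{\textnormal{cr}}\!\bigl(\frac1n\sum_iD_i\bigr)$, and since $\dd\le d_{\max}$ and~\eqref{eq:steinberg} hold, $\frac1n\sum_iD_i\le\frac1n\sum_i\Exp[\dd(X_i,\hat X_{\d,i})]+d_{\max}\Pr[\hat X_\e^n\neq\hat X_\d^n]\le D_\d+(1+d_{\max})\epsilon$. Combining all of the above, sending $n\to\infty$ and then $\epsilon\to0$, and using continuity of the convex $R_{\textnormal{cr}}$, gives $R\ge R_{\textnormal{cr}}(D_\d)$.

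The main obstacle is the converse, and precisely the tension flagged in the introduction: the rate is naturally controlled through the decoder's reconstruction $\hat X_\d^n$, which uses $Y^n$, whereas the single-letter formula needs a reconstruction that is a function of $X$ alone so that $\hat X\markov X\markov Y$ holds. The perfect-reconstruction constraint~\eqref{eq:steinberg} is exactly what lets one pass from $\hat X_\d^n$ to the $Y$-free surrogate $\hat X_\e^n$, but carrying this out cleanly means invoking Fano's inequality and then tracking, simultaneously, the slack it injects into the rate bound and into the distortion estimate. Once that bookkeeping --- together with the routine convexity and continuity of $R_{\textnormal{cr}}$ --- is in place, the remaining steps are standard.
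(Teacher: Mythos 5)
There is nothing in the paper to compare against here: Theorem~\ref{thm:steinberg} is quoted from Steinberg~\cite{steinberg09} and the paper offers no proof of it (its own related content, Corollary~\ref{cor:steinberg} and Appendix~\ref{app:RDcr}, \emph{uses} Steinberg's formula rather than establishing it). Judged on its own, your argument is sound and is essentially the classical proof of this result: achievability by Wyner--Ziv binning with the reconstruction forced to be the codeword itself, so that the decoder's map ignores $Y^n$ and the encoder can output the identical sequence, making~\eqref{eq:steinberg} hold with high probability; and a converse in which Fano's inequality, applied to the common-reconstruction constraint, lets you replace the $Y$-dependent $\hat{X}_{\d}^n$ by the $Y$-free $\hat{X}_{\e}^n$, after which the single-letterization with the Markov chain $\hat{X}_{\e,i} \markov X_i \markov Y_i$, the convexity and monotonicity of $R_{\textnormal{cr}}$, and the bounded-distortion transfer $\frac1n\sum_i\E{d_\d(X_i,\hat X_{\e,i})} \leq D_\d + (1+d_{\max})\epsilon$ are all correct. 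Two small points deserve explicit mention rather than a wave of the hand: (i) the $d_{\max}$ step and the vanishing contribution of error events in achievability require $d_\d$ bounded, which the paper does assume but which you should state as a hypothesis; and (ii) your final limit needs right-continuity of $R_{\textnormal{cr}}$ at $D_\d$, which follows from convexity only at interior points $D_\d>0$; at $D_\d=0$ convexity alone does not rule out a jump, and one needs a compactness/lower-semicontinuity argument of the kind the paper carries out for its own function $\RitDD$ in Appendix~\ref{app:key_properties}. With these caveats made explicit, the proof stands, and it correctly identifies $R_{\textnormal{cr}}(D_\d)$ as the Wyner--Ziv expression of Theorem~\ref{thm:wz} restricted to reconstruction functions that ignore the side information.
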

 \begin{remark}\label{rem:steinberg}
 Constraint~\eqref{eq:steinberg} is equivalent to the block-distortion constraint 
 \begin{equation}\label{eq:steinberg2}
 \E{ \textnormal{I}\{ \hat{X}_\e^n \neq \hat{X}_{\d}^n\}} \leq\epsilon.
 \end{equation}
 Thus, when in our setup $d_\e(\cdot,\cdot)$ is the Hamming distortion
 and $D_\e=0$, then Steinberg's setup differs from ours only in that
 \eqref{eq:steinberg} is a block-distortion constraint whereas
 \eqref{eq:distE} is an average-per-symbol distortion constraint.
\end{remark}



\subsection{Results}
To describe the rate-distortions function for the setup of
Section~\ref{subsec:setup}, we introduce the function $\RitDD$.  The
expression for $\RitDD$ in 
is similar to the expression for $R_{\textnormal{WZ}}(D_\d)$
in~\eqref{eq:rdwz} except that in the expression for $\RitDD$ we have
the additional constraint; see~\eqref{eq:distEr0} ahead.

Given the joint law $P_{XY}$ of the source and side information, and
given the distortion functions $d_{\d}, d_{\e}$, this function is
defined as
\begin{equation}
\label{eq:rdf}
\RitDD=\min_{ Z, \fdc, \fec} \bigl( I(X;Z)-I(Y;Z) \bigr)
\end{equation}
where the minimization is over all discrete random variables~$Z$
taking value in some finite auxiliary alphabet $\mathcal{Z}$
and forming the Markov chain
\begin{equation}
\label{eq:amos400}
Z \markov X \markov Y
\end{equation} 
and over the functions $\fdc\colon \mathcal{Y} \times \mathcal{Z} \to
\hat{\mathcal{X}}$ and $\fec\colon \mathcal{X} \times \mathcal{Z} \to
\hat{\mathcal{X}}$ satisfying
\begin{IEEEeqnarray}{rCl}
\Exp\big[d_\d\bigl(X, \fdc( Y,Z)\bigr)\big] & \leq & \Dd \label{eq:distDr0}\\ 
\Exp\big[d_\e\bigl(\fdc(Y,Z),\fec(X,Z)\bigr)\big] & \leq & \De.
\label{eq:distEr0}
\end{IEEEeqnarray}

Note that, thanks to Assumption~\ref{assumption}, the feasible set
in~\eqref{eq:rdf} is not empty: we can choose $Z$ as $X$ and $\fdc$,
$\fec$ as the functions whose existence is guaranteed by the
assumption. This choice demonstrates that
\begin{equation}
  \label{eq:amos_bounded}
  \RitDD \leq H(X|Y).
\end{equation}

Using the convex cover method \cite{ElGamalKim2011} it
can be shown that:
\begin{remark}
  Allowing for sets $\mathcal{Z}$ of cardinality greater than
  $|\mathcal{X}| + 3$ does not decrease the value of the optimization
  problem.
\end{remark}

A consequence of this remark is that the minimum in \eqref{eq:rdf}
is achieved: indeed, we may choose $\set{Z}$ as the set $\{1, \ldots,
|\mathcal{X}| + 3\}$ with result that there are only a finite number
of functions $\fdc$, $\fec$, and the problem is reduced to minimizing a
continuous function over a compact set.

The key properties of $\RitDD$ are summarized in the following
proposition:
\begin{proposition}[Key Properties of the Function $\RitDD$]
  \label{prop:key_properties}
  The function $\RitDD\colon \RealsP^{2} \to \RealsP$ is bounded from
  above by $H(X|Y)$ and is nondecreasing in the distortions
\begin{multline*}
\Bigl( \Dd' \geq \Dd \; \text{and} \; \De' \geq \De \Bigr) 
\Rightarrow
 \Bigl( \Rit{\Dd'}{\De'} \leq \RitDD \Bigr).
\end{multline*}
Moreover, it is convex and continuous.
\end{proposition}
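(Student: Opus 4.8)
The plan is to establish each of the four claimed properties in turn, starting with the two that were essentially already argued in the surrounding text and then concentrating on convexity, from which continuity on the interior follows for free and continuity up to the boundary needs a separate monotonicity-type argument.

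\medskip

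\textbf{Boundedness and monotonicity.} The bound $\RitDD \le H(X|Y)$ is exactly \eqref{eq:amos_bounded}, obtained by taking $Z = X$ together with the functions guaranteed by Assumption~\ref{assumption}; I would simply recall this. Monotonicity is immediate from the definition: if $\Dd' \ge \Dd$ and $\De' \ge \De$, then the feasible set of triples $(Z, \fdc, \fec)$ in the optimization \eqref{eq:rdf} defining $\Rit{\Dd'}{\De'}$ contains the feasible set for $\RitDD$, since every $(Z,\fdc,\fec)$ satisfying \eqref{eq:distDr0}--\eqref{eq:distEr0} at level $(\Dd, \De)$ also satisfies them at level $(\Dd', \De')$; minimizing the same objective over a larger set can only decrease the value.

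\medskip

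\textbf{Convexity.} This is the heart of the proposition, and I would prove it by the standard time-sharing / auxiliary-variable construction. Fix $(\Dd^{(0)}, \De^{(0)})$ and $(\Dd^{(1)}, \De^{(1)})$ in $\RealsP^2$ and $\lambda \in [0,1]$, and let $(Z_0, \fdc^{(0)}, \fec^{(0)})$ and $(Z_1, \fdc^{(1)}, \fec^{(1)})$ be optimizers achieving $\Rit{\Dd^{(0)}}{\De^{(0)}}$ and $\Rit{\Dd^{(1)}}{\De^{(1)}}$ (these exist by the cardinality remark preceding the proposition). Introduce a time-sharing variable $Q$, independent of $(X,Y)$, with $\Pr[Q=1]=\lambda$ and $\Pr[Q=0]=1-\lambda$, and set $\tilde Z = (Z_Q, Q)$ — i.e., conditionally on $Q=j$ use $Z_j$, which is consistent with $Z_j \markov X \markov Y$ so that $\tilde Z \markov X \markov Y$ still holds — and define $\tilde\fdc(Y,\tilde Z) = \fdc^{(j)}(Y, Z_j)$ and $\tilde\fec(X,\tilde Z) = \fec^{(j)}(X, Z_j)$ when $Q=j$. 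Then the distortion constraints hold at the convex-combination levels by linearity of expectation, namely $\E{d_\d(X,\tilde\fdc(Y,\tilde Z))} = \lambda \Dd^{(1)} + (1-\lambda)\Dd^{(0)}$ and similarly for $d_\e$, so $(\tilde Z, \tilde\fdc, \tilde\fec)$ is feasible for the optimization at level $(\lambda \Dd^{(1)} + (1-\lambda)\Dd^{(0)}, \lambda \De^{(1)} + (1-\lambda)\De^{(0)})$. Finally I would compute the objective: since $Q$ is independent of $(X,Y)$,
\begin{IEEEeqnarray*}{rCl}
I(X;\tilde Z) - I(Y;\tilde Z) &=& I(X; Z_Q, Q) - I(Y; Z_Q, Q) \\
&=& I(X; Z_Q \mid Q) - I(Y; Z_Q \mid Q) \\
&=& \lambda\bigl(I(X;Z_1) - I(Y;Z_1)\bigr) + (1-\lambda)\bigl(I(X;Z_0) - I(Y;Z_0)\bigr),
\end{IEEEeqnarray*}
where the second line uses $I(X;Q) = I(Y;Q) = 0$. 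Hence $\Rit{\lambda \Dd^{(1)} + (1-\lambda)\Dd^{(0)}}{\lambda \De^{(1)} + (1-\lambda)\De^{(0)}} \le \lambda \Rit{\Dd^{(1)}}{\De^{(1)}} + (1-\lambda)\Rit{\Dd^{(0)}}{\De^{(0)}}$, which is convexity. One subtlety to flag: the merged auxiliary $\tilde Z$ may take values in a larger alphabet, but the cardinality remark lets us re-compress it back down to size $|\mathcal X|+3$ without increasing the value, so the construction stays inside the class over which \eqref{eq:rdf} is defined.

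\medskip

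\textbf{Continuity.} A finite convex function on $\RealsP^2$ is automatically continuous on the interior $\RealsPP^2$, so the only work is continuity up to the boundary (where some $\Dd$ or $\De$ equals $0$). Here I would argue, as is standard for rate-distortion functions, using monotonicity plus convexity: the function is nonincreasing along each coordinate direction into the interior, bounded, and convex, so one-sided limits from the interior exist and the usual squeeze argument — combined with approaching a boundary point along a segment from the interior and invoking convexity to bound the value — shows the limit equals the boundary value. The main obstacle in the whole proof is making this boundary-continuity step clean; everything else is routine. If a fully rigorous boundary argument is cumbersome, an acceptable alternative is to observe that the same time-sharing construction shows $\RitDD$ is the lower concave (here, convex) envelope behavior under a direct $\varepsilon$-perturbation of the constraints: given a target $(\Dd, \De)$ on the boundary and an optimizer, one can perturb the distortion levels by a small $\delta$ and, by a continuity/Berge-type argument on the compact feasible set (which is nonempty and varies continuously in the constraint levels under Assumption~\ref{assumption}), conclude $\limsup_{\delta \downarrow 0} \Rit{\Dd+\delta}{\De+\delta} \le \RitDD$, which together with monotonicity gives continuity.
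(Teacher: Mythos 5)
Your boundedness, monotonicity, and convexity arguments are correct and essentially identical to the paper's: the same time-sharing variable $Q$, the same merged $Z=(Q,Z^{(Q)})$ and functions, and the same mutual-information decomposition. The gap is in the continuity step, which is precisely the part the paper treats with care. Your primary argument—``monotonicity plus convexity plus boundedness, then a squeeze along a segment into the interior''—does not prove continuity at the boundary of $\RealsP^{2}$: the function $f(0)=1$, $f(x)=0$ for $x>0$ is convex, nonincreasing and bounded, yet discontinuous at $0$. Convexity along a segment only yields $\lim_{\delta\downarrow 0}\Rit{\Dd}{\De+\delta}\leq \Rit{\Dd}{\De}$ (an upper-semicontinuity-type bound, which monotonicity already gives); the nontrivial direction is \emph{lower} semicontinuity, i.e.\ that the value at a boundary point such as $\De=0$ is not strictly larger than the limit of the values at nearby relaxed distortion levels.

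Your fallback argument has the same problem in sharper form: the inequality you propose to conclude, $\limsup_{\delta\downarrow 0}\Rit{\Dd+\delta}{\De+\delta}\leq\RitDD$, is an immediate consequence of monotonicity and says nothing new; combining it ``with monotonicity'' cannot yield continuity because both point in the same (trivial) direction. What must be shown is $\liminf_{\kappa}\Rit{\Dd^{(\kappa)}}{\De^{(\kappa)}}\geq\RitDD$ whenever $(\Dd^{(\kappa)},\De^{(\kappa)})\to(\Dd,\De)$. The paper proves exactly this by a compactness argument that your Berge-type remark gestures at but does not carry out: fix $\mathcal{Z}=\{1,\ldots,|\mathcal{X}|+3\}$, take optimizers $\fdc^{(\kappa)},\fec^{(\kappa)},P^{(\kappa)}_{Z|X}$ at the perturbed levels, pass to a subsequence along which the (finitely many possible) functions $\fdc,\fec$ are fixed and $P^{(\kappa)}_{Z|X}$ converges, observe that the limiting law with these functions is feasible at $(\Dd,\De)$ because the distortion constraints pass to the limit, and invoke continuity of mutual information. (The paper also uses the locally simplicial structure of $\RealsP^{2}$, via Rockafellar, to get upper semicontinuity relative to the whole quadrant, which handles approach directions that decrease a distortion.) Without some such argument establishing lower semicontinuity, your continuity claim is unsupported, and continuity is needed later in the converse of Theorem~\ref{thm:finite}.
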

\begin{proof} 
See Appendix~\ref{app:key_properties}.
\end{proof}
Our main result can be now stated as:
\begin{theorem}
\label{thm:finite}
The rate-distortions function for the setup in
Section~\ref{subsec:setup} is equal to $\RitDD$
\begin{equation}
  \label{eq:amos_main100}
  \RopDD = \RitDD.
\end{equation}
 \end{theorem}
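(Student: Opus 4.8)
The plan is to prove the two inequalities $\RopDD\le\RitDD$ (achievability) and $\RopDD\ge\RitDD$ (converse) separately. For achievability I would fix a triple $(Z,\fdc,\fec)$ attaining the minimum in~\eqref{eq:rdf}---so that $Z\markov X\markov Y$, the single-letter constraints~\eqref{eq:distDr0}--\eqref{eq:distEr0} hold, and $I(X;Z)-I(Y;Z)=\RitDD$---and run the Wyner--Ziv binning scheme built on the auxiliary $Z$: generate roughly $2^{nI(X;Z)}$ codewords IID according to $P_Z$, distribute them uniformly at random into roughly $2^{n(I(X;Z)-I(Y;Z))}$ bins, let the encoder pick a codeword jointly typical with $X^n$ and transmit the index $M$ of its bin, and let the decoder recover that codeword from $(M,Y^n)$ by joint-typicality decoding inside the bin. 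The decoder forms $\hat{X}_\d^n$ by applying $\fdc$ symbol-by-symbol to $Y^n$ and the recovered codeword; the one ingredient not already present in Wyner--Ziv is that the encoder forms $\hat{X}_\e^n$ by applying $\fec$ symbol-by-symbol to $X^n$ and the \emph{same} codeword it selected. By the covering, Markov, and packing lemmas~\cite{ElGamalKim2011}, and provided the codebook rate exceeds $I(X;Z)$ while the per-bin rate stays below $I(Y;Z)$, with probability approaching one a suitable codeword exists, the source, the side information, and the selected codeword are jointly typical, and the decoder recovers exactly the encoder's codeword. On this event the typical-average lemma converts~\eqref{eq:distDr0}--\eqref{eq:distEr0} into~\eqref{eq:distD}--\eqref{eq:distE} up to a vanishing slack, and on its complement the bounded distortion functions contribute negligibly to the expectations; letting the two rate margins tend to zero then shows that $(\RitDD,\Dd,\De)$ is achievable, i.e.\ $\RopDD\le\RitDD$. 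I expect this direction to be routine.

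For the converse, take an achievable triple $(R,\Dd,\De)$, a corresponding sequence of codes, write $M=\encn(X^n)$, and set, in the usual Wyner--Ziv manner, $Z_i\triangleq(M,Y^{i-1},Y_{i+1}^n,X^{i-1})$. Using that the source is IID (so $H(X_i\mid Y^n,X^{i-1})=H(X_i\mid Y_i)$) and that $Z_i\markov X_i\markov Y_i$, the standard chain of steps yields
\[
 n(R+\epsilon)\ \ge\ H(M)\ \ge\ I(M;X^n\mid Y^n)\ =\ \sum_{i=1}^{n}I(X_i;Z_i\mid Y_i)\ =\ \sum_{i=1}^{n}\bigl(I(X_i;Z_i)-I(Y_i;Z_i)\bigr).
\]
Moreover $\hat{X}_{\d,i}$, being a function of $(M,Y^n)$, can be written $\fdc^{(i)}(Y_i,Z_i)$ for a suitable $\fdc^{(i)}$. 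The new difficulty---which I expect to be the main obstacle---is that the encoder reconstruction $\hat{X}_{\e,i}$, the $i$-th symbol of $\ercn(X^n)$, depends on \emph{all} of $X^n$, in particular on the coordinates $X_{i+1}^n$ that are not contained in $(X_i,Z_i)$; it is therefore not a function of $(X_i,Z_i)$ and cannot serve directly as the function $\fec^{(i)}$ required by~\eqref{eq:rdf}.

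The way around this is to replace $\hat{X}_{\e,i}$ by the best $(X_i,Z_i)$-measurable estimate of it: take $\fec^{(i)}(x_i,z_i)$ to be a value $\hat{x}\in\hat{\mathcal{X}}$ minimizing $\Exp\bigl[d_\e(\hat{X}_{\d,i},\hat{x})\mid X_i=x_i,\,Z_i=z_i\bigr]$. The key observation is that $\hat{X}_{\e,i}$ is a deterministic function of $(X_i,Z_i,X_{i+1}^n)$---since $X^{i-1}$ is a coordinate of $Z_i$---while $\hat{X}_{\d,i}$ is a function of $(Y_i,Z_i)$, and, because conditionally on $X_i$ the variable $Y_i$ is independent of all the remaining sources and of $M$, one has the Markov chain $Y_i\markov(X_i,Z_i)\markov X_{i+1}^n$; hence $\hat{X}_{\d,i}$ and $\hat{X}_{\e,i}$ are conditionally independent given $(X_i,Z_i)$, and therefore $\Exp\bigl[d_\e(\hat{X}_{\d,i},\fec^{(i)}(X_i,Z_i))\bigr]\le\Exp\bigl[d_\e(\hat{X}_{\d,i},\hat{X}_{\e,i})\bigr]$. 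Consequently, for each $i$ the triple $(Z_i,\fdc^{(i)},\fec^{(i)})$ is feasible in~\eqref{eq:rdf} at the distortion levels $\tilde{D}_{\d,i}\triangleq\Exp[d_\d(X_i,\hat{X}_{\d,i})]$ and $\tilde{D}_{\e,i}\triangleq\Exp[d_\e(\hat{X}_{\d,i},\fec^{(i)}(X_i,Z_i))]$, so $I(X_i;Z_i)-I(Y_i;Z_i)\ge\Rit{\tilde{D}_{\d,i}}{\tilde{D}_{\e,i}}$. Averaging over $i$ and invoking the convexity and the monotonicity of $\RitDD$ in the distortions (Proposition~\ref{prop:key_properties}), together with $\frac1n\sum_i\tilde{D}_{\d,i}\le\Dd+\epsilon$ and $\frac1n\sum_i\tilde{D}_{\e,i}\le\De+\epsilon$ from~\eqref{eq:distD} and~\eqref{eq:distE}, gives $R+\epsilon\ge\Rit{\Dd+\epsilon}{\De+\epsilon}$; letting $\epsilon\downarrow 0$ and using continuity of $\RitDD$ yields $R\ge\RitDD$, hence $\RopDD\ge\RitDD$. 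Together with the achievability bound this establishes~\eqref{eq:amos_main100}.
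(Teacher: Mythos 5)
Your proposal is correct and follows essentially the same route as the paper: the same Wyner--Ziv-style binning achievability with the encoder applying $\fec$ componentwise to its selected codeword and $X^n$, and the same converse built on auxiliaries of the form $(M,Y^{i-1},Y_{i+1}^n)$, with the encoder-side constraint single-letterized by replacing $\hat{X}_{\e,i}$ (which depends on all of $X^n$) by an $(X_i,Z_i)$-measurable reconstruction that does no worse, followed by convexity, monotonicity, and continuity of $\RitDD$. The only cosmetic differences are that you absorb $X^{i-1}$ into $Z_i$ (the paper instead drops it via ``conditioning cannot increase entropy'') and that you take the Bayes-optimal symbol minimizing $\Exp\bigl[d_\e\bigl(\hat{X}_{\d,i},\hat{x}\bigr)\mid X_i,Z_i\bigr]$ where the paper substitutes the best virtual realization ${x}_{\backslash i}^*$ of the remaining source symbols; both constructions rest on the same Markov chain $Y_i \markov (X_i,Z_i) \markov X_{\backslash i}$ and yield the same bound.
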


\begin{proof}[Proof of Theorem~\ref{thm:finite}]
  The coding scheme that establishes achievability is a variation on
  the coding scheme of Wyner and Ziv~\cite{wynerziv76} and is thus
  only sketched. Its analysis is omitted.

  Fix $Z, \phi, \psi$ satisfying~\eqref{eq:amos400} and
  \eqref{eq:distEr0}, and fix also a blocklength~$n$ and some (small)
  $\epsilon>0$. Let $\set{C}$ be a random blocklength-$n$ codebook
  with $\lfloor 2^{ n (I(X;Z)- I(Y;Z) +2\epsilon)}\rfloor$ bins, each
  containing approximately $2^{ n ( I(Y;Z) -\epsilon)}$ codewords with
  the total number of codewords thus being $\lfloor 2^{n
    (I(X;Z)+\epsilon)}\rfloor$.
%
  Generate the codewords independently with the components of each
  codeword being drawn IID $P_Z$. Number the bins $1$ through $\lfloor
  2^{ n (I(X;Z)- I(Y;Z) +2\epsilon)}\rfloor$.

  Upon observing the source sequence $X^n$, the encoder seeks a
  codeword $Z^{*n}$ in $\mathcal{C}$ that is jointly typical with
  $X^n$. If successful, it sends the number of the bin containing
  $Z^{*n}$ as the message~$M$. It also 
  produces the reconstruction sequence~$\hat{X}_\e^n$ by applying the
  function $\psi$ componentwise to $Z^{*n}$ and $X^n$. The decoder
  seeks a codeword $\hat{Z}^n$ in Bin~$M$ that is jointly typical
  with its side-information $Y^n$ and applies the reconstruction
  function $\phi$ componentwise to $\hat{Z}^n$ and $Y^n$ to produce
  $\hat{X}_\d^n$.

  The converse is proved in Subsection~\ref{sec:th1}.  
\end{proof}

Though not identical, Steinberg's setup is very similar to our setup
when $\d_\e(\cdot,\cdot)$ is the Hamming distortion and
$D_{\e}$ is zero (Remark~\ref{rem:steinberg}). It is therefore
  not surprising that, as the following corollary shows, the two
  setups lead to identical rates:
\begin{corollary}\label{cor:steinberg}
  Let $d_\d(\cdot,\cdot)$ be arbitrary, and let $d_\e(\cdot,\cdot)$ be
  the Hamming distortion function
\begin{equation}
  d_{\mathrm{e}}(\hat{x}_{\mathrm{d}},\hat{x}_{\mathrm{e\vphantom{d}}})
  =  \; \mathrm{I}\{\hat{x}_{\mathrm{d}}
  \neq \hat{x}_{\mathrm{e\vphantom{d}}}\}, \quad \hat{x}_{\mathrm{d}}, \hat{x}_{\mathrm{e\vphantom{d}}} \in \mathcal{\hat{X}}.
\end{equation}
Then
 \begin{equation}
R(D_{\d},D_{\e})\Big|_{D_\e = 0} = R_{\textnormal{cr}}(D_\d).
\end{equation}
\end{corollary}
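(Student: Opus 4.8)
The plan is to combine Theorem~\ref{thm:finite}, which asserts $R(D_\d,D_\e)=\Rit{D_\d}{D_\e}$, with a direct comparison of the optimization problem defining $\Rit{D_\d}{0}$ (for the Hamming $d_\e$) with the one defining $R_{\textnormal{cr}}(D_\d)$ in Theorem~\ref{thm:steinberg}. Two preliminary observations make the comparison transparent. First, when $d_\e$ is the Hamming distortion and $D_\e=0$, constraint~\eqref{eq:distEr0} collapses to the requirement that $\psi(X,Z)=\phi(Y,Z)$ almost surely. Second, the Markov chain~\eqref{eq:amos400} implies $I(Y;Z\mid X)=0$ and hence $I(X;Z)-I(Y;Z)=I(X;Z\mid Y)$; likewise the chain $\hat X\markov X\markov Y$ implies $I(X;\hat X)-I(Y;\hat X)=I(X;\hat X\mid Y)$. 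With these reformulations the two problems are essentially the same, and I would prove equality by showing two inequalities.

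For $R_{\textnormal{cr}}(D_\d)\ge \Rit{D_\d}{0}$, I would take any $\hat X$ feasible for the problem in~\eqref{eq:amos_ayef200} and use it as the auxiliary~$Z$ in~\eqref{eq:rdf}, choosing both $\phi$ and $\psi$ to be the projection $(y,z)\mapsto z$ (resp.\ $(x,z)\mapsto z$). Then $\phi(Y,Z)=Z=\psi(X,Z)$, so~\eqref{eq:distEr0} holds exactly with $D_\e=0$; the Markov chain and the constraint $\Exp\big[d_\d(X,\hat X)\big]\le D_\d$ carry over verbatim to~\eqref{eq:amos400} and~\eqref{eq:distDr0}; and the objective is unchanged. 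Hence this tuple is feasible for~\eqref{eq:rdf} and attains the value $I(X;\hat X)-I(Y;\hat X)$, which gives the claimed inequality.

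For the reverse inequality $\Rit{D_\d}{0}\ge R_{\textnormal{cr}}(D_\d)$, I would take any $(Z,\phi,\psi)$ feasible for~\eqref{eq:rdf} and set $\hat X\triangleq\psi(X,Z)$, which by the collapsed constraint equals $\phi(Y,Z)$ almost surely. Since $\hat X$ is a function of $(X,Z)$ and $Z\markov X\markov Y$, the chain $\hat X\markov X\markov Y$ holds, and $\Exp\big[d_\d(X,\hat X)\big]=\Exp\big[d_\d(X,\phi(Y,Z))\big]\le D_\d$, so $\hat X$ is feasible for~\eqref{eq:amos_ayef200}. For the objective I would use that $\hat X=\phi(Y,Z)$ is (almost surely) a deterministic function of $(Y,Z)$: conditioning on $Y$ yields the Markov chain $X\markov Z\markov\hat X$, so the conditional data-processing inequality gives $I(X;\hat X\mid Y)\le I(X;Z\mid Y)$, i.e.\ $I(X;\hat X)-I(Y;\hat X)\le I(X;Z)-I(Y;Z)$. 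Combining the two inequalities gives $R_{\textnormal{cr}}(D_\d)=\Rit{D_\d}{0}$, and~\eqref{eq:amos_main100} then yields $R(D_\d,D_\e)\big|_{D_\e=0}=R_{\textnormal{cr}}(D_\d)$, which is Corollary~\ref{cor:steinberg}. I do not foresee a genuine obstacle; the only point needing a moment's care is that the $D_\e=0$ encoder constraint forces the single random variable $\hat X$ to be simultaneously a function of $(X,Z)$ and of $(Y,Z)$, which is exactly what drives both the Markov property of $\hat X$ and the data-processing bound on the rate.
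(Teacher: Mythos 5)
Your proof is correct, and one half of it coincides with the paper's: for $\Rit{D_\d}{0}\ge R_{\textnormal{cr}}(D_\d)$ the paper likewise takes an optimal $(Z,\phi,\psi)$, notes that $D_\e=0$ with Hamming $d_\e$ forces $\phi(Y,Z)=\psi(X,Z)$ almost surely, sets $\hat X=\phi(Y,Z)$, checks $\hat X\markov X\markov Y$ and the $d_\d$ constraint, and bounds the objective via $I(X;\hat X|Y)\le I(X;Z|Y)$ using the chain $\hat X\markov(Y,Z)\markov X$ --- exactly your data-processing step. Where you genuinely differ is the other direction: the paper proves $R_{\textnormal{cr}}(D_\d)\ge R(D_\d,0)$ \emph{operationally}, observing (Remark~\ref{rem:steinberg}) that Steinberg's block constraint \eqref{eq:steinberg2} is more stringent than the per-symbol constraint \eqref{eq:distE} at $D_\e=0$, so any code for Steinberg's problem is a code for this one; you instead prove the single-letter inequality $\Rit{D_\d}{0}\le R_{\textnormal{cr}}(D_\d)$ by embedding any $\hat X$ feasible in \eqref{eq:amos_ayef200} as $Z=\hat X$ with $\phi$ and $\psi$ the projections onto $z$, and then invoke Theorem~\ref{thm:finite} to transfer the conclusion to $R(D_\d,0)$. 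Both routes are sound. The paper's operational step needs neither Theorem~\ref{thm:finite} nor Steinberg's single-letter characterization for that direction, only the comparison of constraints; your route leans on Theorems~\ref{thm:steinberg} and~\ref{thm:finite} throughout, but in exchange it establishes the purely single-letter identity $\Rit{D_\d}{0}=R_{\textnormal{cr}}(D_\d)$ directly, with the operational statement of Corollary~\ref{cor:steinberg} then following in one line.
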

\begin{proof} See Appendix~\ref{app:RDcr}.
\end{proof}

\begin{remark}Our results can be extended to a scenario where the encoder observes
not only the source sequence $\{X_{i}\}$ but also some sequence
$\{W_{i}\}$ which is correlated with the decoder's side-information
sequence $\{Y_{i}\}$. This additional sequence $\{W_{i}\}$ makes it
easier for the encoder to estimate the decoder's reconstruction
sequence and thus allows the decoder to rely more heavily on its side
information $\{Y_{i}\}$. To see how this seemingly more general
scenario reduces to our scenario assume that $\{(X_i, W_i,
Y_i)\}_{i=1}^{n}$ are IID random triples of law $P_{XWY}$ and that
$W_{i}$ takes value in the finite set $\mathcal{W}$. Consider now a
new IID source $\{\tilde{X}_{i}\}$ taking value in the set
$\tilde{\mathcal{X}} = \mathcal{X} \times \mathcal{W}$ according to
the law $P_{XW}$ with $\tilde{X}_{i} = (X_{i}, W_{i})$. The encoder
now observes the source sequence $\{\tilde{X}_{i}\}$ only and no
additional sequences. The decoder side information is still
$\{Y_{i}\}$, and the joint law of $\tilde{X}_{i}, Y_{i}$ is
$P_{XWY}$. Finally define the new decoder distortion function
$\tilde{d}_{\text{d}} \colon \tilde{\mathcal{X}} \times
\hat{\mathcal{X}} \to \mathbb{R}^{+}$ as
  \begin{equation*}
  \tilde{d}_{\text{d}}\bigl( (X_{i}, W_{i}), \hat{X}_{i} \bigr) =
    \dd(X_{i}, \hat{X}_{i}),
\end{equation*}
i.e., the distortion function $\tilde{d}_{\d}$ does not depend on the $W_i$-component.
Solving the original scenario for this new source and new decoder
distortion function is equivalent to solving the seemingly more general
problem we described.
\end{remark}

\subsection{Proof of the Converse to Theorem~\ref{thm:finite}}\label{sec:th1}

To establish the converse, we show that if a triple $(R,
\Dd, \De)$ is achievable, then for every $\epsilon > 0$
\begin{equation}\label{eq:cv-to-show}
R+\epsilon \geq \Rit{\Dd+\epsilon}{\De+\epsilon}.
\end{equation}
Since $\RitDD$ is continuous (Proposition~\ref{prop:key_properties}),
and since $\epsilon$ can be arbitrarily small, this implies that $R
\geq \RitDD$ whenever $(R, \Dd, \De)$ is achievable, and consequently
that $\RopDD \geq \RitDD$.

The first part of our proof identifying the auxiliary random variable
$Z_{i}$ \eqref{eq:Zi} and the function $\deci$ \eqref{eq:def1} is
similar to the proof of the Wyner-Ziv result
\cite{ElGamalKim2011}. For a given blocklength-$n$ code $\encn$,
$\decn$, $\ercn$ satisfying
\eqref{eq:M}--\eqref{eq:distE}, 
we have \setcounter{myieq}{1}
\begin{IEEEeqnarray}{rCl}
  \IEEEeqnarraymulticol{3}{l}{
  n(R+\epsilon)} \nonumber \\
  \; & \myieq{\geq} & H(\idx) \label{eq:25}\\
     & \myieq{\geq} & I(X^n;\idx|Y^n) \\
     & \myieq{=}    & \sum_{i = 1}^n I(X_i;\idx|Y^n,X^{i-1}) \\
     &{=}    & \sum_{i = 1}^n H(X_i|Y^n,X^{i-1}) -
                                       H(X_i|\idx,Y^n,X^{i-1}) \\
     & \myieq{=}    & \sum_{i = 1}^n H(X_i|Y_i) -
                                       H(X_i|\idx,Y^n,X^{i-1}) \\
     & \myieq{\geq} & \sum_{i = 1}^n H(X_i|Y_i) -
                                       H(X_i|\idx,Y^n) \\
     & \myieq{=}    & \sum_{i = 1}^n H(X_i|Y_i) -
                                       H(X_i|Z_i,Y_i) \\
     & =& \sum_{i = 1}^n I(X_i;Z_i|Y_i) \\
     & \myieq{=}    & \sum_{i = 1}^n H(Z_i|Y_i) -
                                       H(Z_i|X_i) \\
     &{=}    & \sum_{i = 1}^n I(X_i;Z_i) - I(Y_i;Z_i),
                      \label{equ:cv-part1-last}
\end{IEEEeqnarray}
where \setcounter{myieq}{1}
\myieqexp follows by \eqref{eq:M};
\myieqexp follows because  conditioning cannot increase entropy and because $H(M|Y^n, X^n)\geq 0$;
\myieqexp follows from the chain rule for mutual information;
\myieqexp follows because the pair $X_i, Y_i$ is independent of the tuple $(X_1^{i-1}, Y_1^{i-1}, Y_{i+1}^n)$; 
\myieqexp follows from the fact that conditioning cannot increase entropy;
\myieqexp follows by defining
\begin{equation}\label{eq:Zi}
  Z_i \triangleq (\idx,Y^{i-1},Y_{i+1}^n);
\end{equation}
and \myieqexp follows  because with the definition above
\begin{equation}
\label{eq:amos200}
Z_i \markov X_i \markov Y_i.
\end{equation}

Denote by $\phi_i^{(n)}$ the function that maps $(M,Y^n)$ to the
$i$-th component of the $n$-tuple $\phi^{(n)}(M,Y^n)$, and denote
by~$\psi_i^{(n)}$ the function that maps $X^n$ to the $i$-th component
of the $n$-tuple $\psi^{(n)}(X^n)$. Since there is a one-to-one
correspondence between the pairs $(Y_{i},Z_{i})$ and $(M,Y^{n})$, we
can define a function $\phi_{i}$ that maps $(Y_{i},Z_{i})$ to
$\phi_{i}^{(n)}(M,Y^{n})$
\begin{equation}\label{eq:def1}
  \deci(Y_i,Z_i) \triangleq \decni(\idx,Y^n).
\end{equation}
We now define
\begin{equation}\label{eq:def2}
  \Ddi \triangleq
  \E{\dd\bigl(X_i,\decni(\idx,Y^n)\bigr)},
\end{equation}
where $\Exp[\cdot]$ is with respect to
$P_{X_{\vphantom{I}}^nY_{\vphantom{I}}^n}$.
By definitions \eqref{eq:def1} and \eqref{eq:def2}, 
\begin{equation}\label{equ:cv-exp-dd-deci-ddi}
  \BigE{\dd\bigl(X_i,\deci(Y_i,Z_i)\bigr)} = \Ddi,
\end{equation}
where $\Exp[\cdot]$ is with respect to
$P_{X_iY_i\vphantom{|}}P_{Z_i|X_i}$.

We next turn to the encoder-side distortion. We will show that there
exists a deterministic function $\erci\colon \mathcal{X} \times
\mathcal{Z} \to \mathcal{\hat{X}}$ that achieves a distortion no
larger than $\Dei$, where $\Dei$ is the distortion achieved by
$\ercni(X^n)$, namely,
\begin{equation}
  \Dei \triangleq
  \E{\de\bigl(\decni(\idx,Y^n),\ercni(X^n)\bigr)}.
\end{equation}

To this end, we express $\Dei$ as
\newcommand{\Xnminusi}{{X}_{\! \backslash\!i}}
\newcommand{\xnminusi}{{x}_{\!\backslash \!i}}
\newcommand{\Xnminusistar}{{X}_{\!\backslash \!i}^*}
\newcommand{\xnminusistar}{{x}_{\!\backslash \!i}^*}
\begin{IEEEeqnarray}{rCl}
  \IEEEeqnarraymulticol{3}{l}{
    \Dei }\nonumber\\ \quad
  & = & \BigE[X^n,Y_i,Z_i]{\de\bigl(\deci(Y_i,Z_i),\ercni(X^n)\bigr)} \\
  & = & \Exp_{X^n,Z_i} \BigE[Y_i|X^n,Z_i]{\de\bigl(\deci(Y_i,Z_i),\ercni(X^n)\bigr)} \\
  & = & \Exp_{X^n,Z_i}
   \BigE[Y_i|X_i,\Xnminusi,Z_i]{\de\bigl(\deci(Y_i,Z_i),\ercni(X_i,\Xnminusi)\bigr)},
       \IEEEeqnarraynumspace \label{equ:cv-exp-de-xnminusi}
\end{IEEEeqnarray}
where $\Xnminusi \triangleq (X^{i-1},X_{i+1}^n)$.
For every $(x_i,z_i) \in \alphSRC \times \alphAUX$, we define $\xnminusistar(x_i,z_i)$ (or for short $\xnminusistar$) as:\footnote{If $\operatorname*{arg\,min}$ is not unique, $\xnminusi(x_i, z_i)$ is defined as the
first in lexicographical order.} 
\begin{multline} \label{eq:xstar1}
  \xnminusistar(x_i,z_i) \triangleq
  \operatorname*{arg\,min}_{\xnminusi \in \alphSRC^{n-1}} \\
  \BigE[Y_i|X_i=x_i,\Xnminusi=\xnminusi,Z_i=z_i]{\de \bigl(
    \deci(Y_i,z_i),\ercni(x_i,\xnminusi) \bigr) }
\end{multline}

or in any other way that guarantees
\begin{multline}
\Exp_{\Xnminusi|X_i=x_i,Z_i=z_i} \\ 
\BigE[Y_i|X_i=x_i,\Xnminusi,Z_i=z_i]
{\de\bigl(\deci(Y_i,z_i),\ercni(x_i,\Xnminusi)\bigr)}
\geq \\
\BigE[Y_i|X_i=x_i,\Xnminusi=\xnminusistar,Z_i=z_i]{\de\bigl(\deci(Y_i,z_i),\ercni(x_i,\xnminusistar)\bigr)}.
\end{multline}
We can now define the function $\erci$ as
\begin{subequations}
\label{block:amos_gym}
\begin{IEEEeqnarray}{rCl}
  \erci \colon \alphSRC \times \alphAUX & \to & \alphREC \\
                              (x_i,z_i) & \mapsto & \ercni\bigl(x_i,\xnminusistar(x_i,z_i)\bigr).
\end{IEEEeqnarray}
\end{subequations}
For every $(x_i,\xnminusi,z_i) \in \alphSRC^n \times \alphAUX$, we have
\setcounter{myieq}{1}
\begin{IEEEeqnarray}{rCl}
  \IEEEeqnarraymulticol{3}{l}{
  \BigE[Y_i|X_i=x_i,\Xnminusi=\xnminusi,Z_i=z_i]{\de
    \bigl(\deci(Y_i,z_i),\ercni(x_i,\xnminusi) \bigr)}
  } \nonumber \\
  \;\, & \myieq{\geq} &
  \BigE[Y_i|X_i=x_i,\Xnminusi=\xnminusistar,Z_i=z_i]{\de
    \bigl(\deci(Y_i,z_i),\ercni(x_i,\xnminusistar) \bigr)}
  \IEEEeqnarraynumspace\\
  & \myieq{=} & 
  \BigE[Y_i|X_i=x_i,Z_i=z_i]{\de
    \bigl(\deci(Y_i,z_i),\ercni(x_i,\xnminusistar) \bigr)}\\
  & \myieq{=} & 
  \BigE[Y_i|X_i=x_i,Z_i=z_i]{\de \bigl(\deci(Y_i,z_i),\erci(x_i,z_i)\bigr)},
  \label{equ:cv-cond-exp-de-erci}
\end{IEEEeqnarray}
where \setcounter{myieq}{1}
\myieqexp follows from the definition of $\xnminusistar$;
\myieqexp follows because
\begin{equation}
  \Xnminusi \markov (X_i,Z_i) \markov Y_i;
\end{equation}
and \myieqexp follows from the definition of~$\erci$ \eqref{block:amos_gym}.

It now follows from~\eqref{equ:cv-exp-de-xnminusi}
and~\eqref{equ:cv-cond-exp-de-erci} that
%
\begin{equation}\label{equ:cv-exp-de-erci-dei}
  \BigE[X_i,Y_i,Z_i]{\de\bigl(\deci(Y_i,Z_i),\erci(X_i,Z_i)\bigr)} \leq \Dei.
\end{equation}

Continuing from~(\ref{equ:cv-part1-last}) we thus obtain
\setcounter{myieq}{1}
\begin{IEEEeqnarray}{rCl}
  n(R+\epsilon) & \geq &
  \sum_{i = 1}^n I(X_i;Z_i) - I(Y_i;Z_i)
 \\
& \myieq{\geq} &
    \sum_{i = 1}^n \Ritsym(\Ddi,D_{\mathrm{e\vphantom{d}},i}) 
    \label{equ:cv-part2-first}\\
  & \myieq{=} & n \frac{1}{n}
    \sum_{i = 1}^n \Ritsym(\Ddi,D_{\mathrm{e\vphantom{d}},i}) \\
  & \myieq{\geq} & 
    n \Ritsym\bigg(\frac{1}{n} \sum_{i = 1}^n \Ddi\,,
        \frac{1}{n} \sum_{i = 1}^n D_{\mathrm{e\vphantom{d}},i}\bigg) \\
  & \myieq{\geq} & n \Rit{\Dd+\epsilon}{\De+\epsilon}
  \label{eq:amos300}
\end{IEEEeqnarray}
where \setcounter{myieq}{1} \myieqexp follows from the definition of
$\RitDD$ and from \eqref{eq:amos200}, \eqref{equ:cv-exp-dd-deci-ddi}, and
\eqref{equ:cv-exp-de-erci-dei};
\myieqexp follows by multiplying by $1$; \myieqexp follows from
the convexity of $\RitDD$ (Proposition~\ref{prop:key_properties});
and \myieqexp follows from the monotonicity of $\RitDD$
(Proposition~\ref{prop:key_properties}) and the fact that 
$\frac{1}{n} \sum_{i = 1}^n \Ddi\leq \Dd + \epsilon$
and
$\frac{1}{n} \sum_{i = 1}^n \Dei\leq \De + \epsilon$.
This establishes~\eqref{eq:cv-to-show} and thus concludes the proof of the converse.

%
%
%

\section{Gaussian Source and Quadratic Distortions}\label{sec:Gauss}
\subsection{Setup}\label{sec:Gausssetup}
We next consider the case where the source, side information, and
reconstruction alphabets $\mathcal{X}, \mathcal{Y}, \hat{\mathcal{X}}$
are the reals $\mathbb{R}$; the distortion functions $d_\d$ and $d_\e$ are quadratic
\begin{IEEEeqnarray}{rCl}
 d_{\d}(x,
  \hat{x}_\d)&=&(x-\hat{x}_\d)^2,\label{eq:dd}\\
  d_\e(\hat{x}_\d, \hat{x}_\e) &=&
  (\hat{x}_\d-\hat{x}_\e)^2;\label{eq:de}
  \end{IEEEeqnarray}
  and the source and side-information pair $(X,Y)$ is a centered
  bivariate Gaussian, where $X$ is of variance $\sigma_X^2$
  \begin{equation}
    \sigma_{X} > 0
  \end{equation}
  and $Y=\xi X+U$ for some centered Gaussian $U$ that is independent
  of~$X$ and that is of variance $\sigma_U^2$ and where $\xi$ is a
  nonzero constant.\footnote{The problem is not interesting when $\xi$
    is zero, because in this case the side information is independent
    of the source and is thus irrelevant.}  The rate-distortions
  function depends on $\xi$ only through the ratio $\sigma_U^2/\xi^2$,
  because the receiver can premultiply its side information by
  $\xi^{-1}$ without affecting the rate-distortions function. In the
  following we thus assume that $\xi=1$, i.e.,
\begin{equation}\label{eq:Y}
  Y=X+U.
\end{equation}
We denote the rate-distortions function for this setup by
$R^{\textnormal{G}}(D_\d, D_\e)$.

When $\sigma_{U}$ is zero the problem is not interesting, because in
this case the source sequence is determined by the side information,
and $R^{\textnormal{G}}(D_\d, D_\e)$ is thus zero for all nonnegative
values of $D_\d$ and $D_\e$. We shall henceforth thus assume
\begin{equation}
  \sigma_{U} > 0.
\end{equation}
In this case, no finite rate can allow $D_\d$ to be zero (even if we ignore
the encoder-side reconstruction constraint). Thus, we shall also
assume
\begin{equation}
D_\d > 0.  
\end{equation}

%

\subsection{Related Work}
As we have seen in Section~\ref{sec:related}, the Wyner-Ziv setup is
obtained from ours if the encoder-side reconstruction
constraint~\eqref{eq:distE} is omitted, and Steinberg's common
reconstruction setup is obtained if~\eqref{eq:distE} is replaced
by~\eqref{eq:steinberg}.

For a Gaussian source and quadratic distortion measures, Steinberg's
common reconstruction rate-distortion function is \cite{steinberg09}
 \begin{equation}\label{eq:RcrG}
   R_{\textnormal{cr}}^{\textnormal{G}}(D_\d) =  
   \frac{1}{2} \logplus 
   \frac{\sigma_X^2(\sigma_U^2 + D_{\mathrm{d}})}
   {(\sigma_X^2 + \sigma_U^2)D_{\mathrm{d}} }.
\end{equation}

The Wyner-Ziv rate-distortion function is \cite{wynerziv76}
\begin{equation}\label{eq:RWZG}
R_{\textnormal{WZ}}^{\textnormal{G}}(D_\d) = 
  \frac{1}{2} \logplus \frac{\sigma_X^2 \sigma_U^2}{(\sigma_X^2 +
  \sigma_U^2)D_{\mathrm{d}}}.
\end{equation}
This is the rate-distortion function even if the side information is
revealed not only to the decoder but also to the encoder.  

\subsection{Result}\label{sec:resGaussian}

\begin{theorem}\label{thm:gaussian}
  For a Gaussian source and quadratic distortion measures, the
  rate-distortions function $R^{\textnormal{G}}(D_\d, D_\e)$ can be
  expressed as follows:

  If $\sqrt{\De \sigma_U^2} \geq \min\left\{ \Dd, \frac{\sigma_X^2
      \sigma_U^2}{\sigma_X^2 +\sigma_U^2} \right\}$,
  then\vspace{-0.12cm}
  \begin{equation*}
    R^{\textnormal{G}}(D_\d, D_\e) = 
    \frac{1}{2} \logplus \frac{\sigma_X^2 \sigma_U^2}{(\sigma_X^2 +
        \sigma_U^2)D_{\mathrm{d}}}.
  \end{equation*}

  If $\sqrt{\De \sigma_U^2} < \min\left\{ \Dd, \frac{\sigma_X^2
      \sigma_U^2}{\sigma_X^2 +\sigma_U^2}\right\}$,
  then\vspace{-0.11cm}
\begin{IEEEeqnarray*}{lCl}
  \label{eq:RopGa2}
 R^{\textnormal{G}}(D_\d, D_\e) &   = & 
  \frac{1}{2} \logplus \! \left(\frac{\sigma_X^2}{\sigma_X^2 + \sigma_U^2}
  \frac{\sigma_U^2 + D_{\mathrm{d}} - 2 \sqrt{\sigma_U^2
  D_{\mathrm{e\vphantom{d}}}}}{D_{\mathrm{d}} -
  D_{\mathrm{e\vphantom{d}}}}\right).
  \end{IEEEeqnarray*}
  \end{theorem}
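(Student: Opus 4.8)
The plan is to establish matching achievability and converse bounds. For achievability I will use a Gaussian variant of the Wyner-Ziv scheme; for the converse I will repeat the single-letterization of Subsection~\ref{sec:th1} and then settle the resulting scalar inequality with the conditional entropy-power inequality.

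\emph{Achievability.} I fix the test channel $Z = X + W$ with $W \sim \mathcal{N}(0,N)$ independent of $(X,U)$, together with the affine reconstruction maps $\fdc(y,z) = c_1 y + c_2 z$ and $\fec(x,z) = c_1 x + c_2 z$. Since $\fec(X,Z) = \Exp[\fdc(Y,Z) \mid X,Z]$, the encoder-side distortion is $c_1^2 \sigma_U^2$, so taking $c_1 = \sqrt{\De}/\sigma_U$ meets~\eqref{eq:distEr0} with equality; optimizing $c_2$ for this $c_1$ makes $\Exp[(X - \fdc(Y,Z))^2] = (1-c_1)^2 \tfrac{N\sigma_X^2}{N+\sigma_X^2} + \De$. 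Setting this equal to $\Dd$ determines $N$, and evaluating $I(X;Z) - I(Y;Z) = \tfrac12 \log\bigl( (\tfrac{\sigma_X^2 \sigma_U^2}{\sigma_X^2+\sigma_U^2} + N)/N \bigr)$ reproduces the second expression in the statement (before the $\logplus$); when that expression is nonpositive one instead uses no $Z$ and a rate-zero affine scheme $\fdc(y) = cy$, $\fec(x) = cx$ with a suitable $c$, matching the $\logplus$ truncation. In the first regime the plain Gaussian Wyner-Ziv scheme already suffices: for $\Dd < \tfrac{\sigma_X^2 \sigma_U^2}{\sigma_X^2 + \sigma_U^2}$ its natural encoder estimate $\Exp[\fdc(Y,Z) \mid X,Z]$ incurs encoder distortion $\Dd^2/\sigma_U^2 \leq \De$ precisely because $\sqrt{\De \sigma_U^2} \geq \Dd$, and for $\Dd \geq \tfrac{\sigma_X^2 \sigma_U^2}{\sigma_X^2 + \sigma_U^2}$ a rate-zero scheme with the minimum-mean-square-error coefficient meets both constraints. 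As the distortions are unbounded here, I will either analyze these schemes directly with Gaussian codebooks in the style of~\cite{wynerziv76} or derive achievability from Theorem~\ref{thm:finite} through a truncation-and-quantization limit.

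\emph{Converse.} The bound $R^{\textnormal{G}}(\Dd,\De) \geq R_{\textnormal{WZ}}^{\textnormal{G}}(\Dd)$ of~\eqref{eq:Us_geq_WZ} already gives the claim in the first regime. For the second regime the chain~\eqref{eq:25}--\eqref{equ:cv-part1-last} and the construction~\eqref{block:amos_gym} of the per-letter function $\erci$ go through verbatim, since no finiteness is used, reducing the problem to the following: for $X \sim \mathcal{N}(0,\sigma_X^2)$, $Y = X + U$ with $U \sim \mathcal{N}(0,\sigma_U^2)$ independent of $X$, any $Z$ with $Z \markov X \markov Y$, and any $\fdc, \fec$ with $\Exp[(X - \fdc(Y,Z))^2] \leq \Dd$ and $\Exp[(\fdc(Y,Z) - \fec(X,Z))^2] \leq \De$, the quantity $I(X;Z) - I(Y;Z)$ is no smaller than the value in the statement. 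Applying the conditional entropy-power inequality to $Y = X + U$ given $Z$ (valid because $U$ is independent of $(X,Z)$) yields
\[
I(X;Z) - I(Y;Z) \;\geq\; \frac12 \log\!\Bigl( \frac{\sigma_X^2}{\sigma_X^2 + \sigma_U^2} \cdot \frac{\tau + \sigma_U^2}{\tau} \Bigr), \qquad \tau := \frac{e^{2h(X \mid Z)}}{2\pi e},
\]
which is decreasing in $\tau$; substituting the bound $\tau \leq \tfrac{(\Dd - \De)\sigma_U^2}{(\sigma_U - \sqrt{\De})^2}$ and simplifying gives exactly the asserted expression. The averaging over letters then proceeds as in~\eqref{equ:cv-part2-first}--\eqref{eq:amos300}, using convexity and monotonicity of the single-letter rate (established as for $\RitDD$), and the $\logplus$ and the two-regime split are reconciled by $R^{\textnormal{G}} \geq R_{\textnormal{WZ}}^{\textnormal{G}}$.

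\emph{Main obstacle.} The hard part is the bound $\tau \leq \tfrac{(\Dd - \De)\sigma_U^2}{(\sigma_U - \sqrt{\De})^2}$ on the conditional entropy power of $X$ given $Z$---equivalently, the optimality of a Gaussian $Z$ with affine $\fdc, \fec$ in the scalar problem. It must weave together: the decoder constraint, which (after replacing $\fec$ by $\Exp[\fdc(Y,Z) \mid X,Z]$) also forces $\Exp[(X - \fec(X,Z))^2] \leq \Dd - \Exp[(\fdc(Y,Z) - \fec(X,Z))^2]$; the encoder-side self-consistency $\Exp[\operatorname{Var}(\fdc(Y,Z) \mid X,Z)] \leq \De$, which limits how sharply $\fdc$ may respond to its first argument; and the Gaussianity and independence of $U$, which pin the extremal configuration to the Gaussian test channel and make the constant its input-given-output variance. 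I expect the delicate point to be carrying this out rigorously for a general, possibly non-Gaussian, $Z$: one conditions on $Z$, reduces to a family of coupled scalar minimum-mean-square-error problems, and must control an \emph{average} of conditional entropy powers rather than of variances. Everything else---the optimization over $c_1, c_2, N$ in achievability and the algebra in the converse---is routine.
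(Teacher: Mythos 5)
Your converse has a genuine gap, and it sits exactly where you flag it: the entire argument funnels into the unproven claim that the conditional entropy power $\tau = e^{2h(X|Z)}/(2\pi e)$ satisfies $\tau \leq \frac{(\Dd-\De)\sigma_U^2}{(\sigma_U-\sqrt{\De})^2}$ under the constraints. Nothing you establish implies this. The distortion constraints bound estimation of $X$ from $(Y,Z)$ and prediction of that estimate from $(X,Z)$; they put no direct cap on how well $X$ can be inferred from $Z$ alone. Indeed the corresponding \emph{variance} statement is false: a time-sharing $Z$ that with probability one half reveals a fine quantization of $X$ and with probability one half reveals nothing (with the rate-zero maps $\hat{X}_\d=cY$, $\hat{X}_\e=cX$ in the second mode) meets distortion pairs $(\Dd,\De)$ for which $\Exp[\operatorname{Var}(X|Z)]$ exceeds your bound. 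So the claim survives only because the entropy power is a geometric-mean-type quantity, and a proof must (i) solve, for each $z$, a scalar extremal problem over arbitrary non-Gaussian conditionals $P_{X|Z=z}$ and nonlinear $\fdc$, and (ii) average the resulting per-$z$ entropy-power bounds over $z$ against a function of $(\Dd(z),\De(z))$ that is not obviously concave. That is a problem of essentially the same depth as the theorem itself, and your proposal leaves it open. The paper's converse takes a different route that sidesteps both difficulties: it never bounds $h(X|Z)$, but writes $I(X;Z|Y)=h(X|Y)-h(X|Y,Z)$ and bounds $h(X|Y,Z)\leq h(X-\hat{X}_\d \mid X-\hat{X}_\d+U)$; the encoder constraint enters only through second moments via the identity $\operatorname{Cov}(X-\hat{X}_\d,U)=-\operatorname{Cov}(\hat{X}_\d-\hat{X}_\e,U)$ (valid because $U$ is independent of $(X,Z)$ and $\hat{X}_\e$ is a function of $(X,Z)$), which with Cauchy--Schwarz gives $|\operatorname{Cov}(X-\hat{X}_\d,U)|^2\leq\De\sigma_U^2$; the conditional max-entropy theorem then reduces everything to a two-parameter optimization over $(\operatorname{Var}(A),\operatorname{Cov}(A,U))$ solved by elementary calculus (Lemma~\ref{lemma:amos_aldi100}). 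If you keep the EPI route, you owe a proof of the entropy-power bound of comparable substance; as written, the heart of the converse is missing.

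On achievability your parameter choices coincide with the paper's (its scheme is $Z=a(X+W)$, $\hat{X}_\d=bY+Z$, $\hat{X}_\e=bX+Z$, plus the two rate-zero regimes), but the deferral is not innocuous: Theorem~\ref{thm:finite} is proved for finite alphabets and bounded distortion measures, so ``derive achievability from Theorem~\ref{thm:finite} through a truncation-and-quantization limit'' is nontrivial for unbounded squared error and is not carried out; the paper instead proves a continuous-alphabet coding result directly (Proposition~\ref{prop:Gach}, via spherical codebooks and a full error/distortion analysis). Similarly, ``the single-letterization goes through verbatim'' glosses over the points the paper does handle explicitly for the continuous case: the selection defining $\erci$ when the arg-min over $\set{X}^{n-1}$ need not exist, the replacement of the minimum by an infimum in the single-letter function, and the convexity/monotonicity/continuity needed to average the per-letter bounds (Lemma~\ref{lem:Gaus_mon_convex} and the continuity discussion at $\De=0$). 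These are secondary to the main gap above, but they also need to be filled.
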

  \begin{proof}
The direct part is proved in Section~\ref{sec:achGauss} and the converse in Section~\ref{sec:convGauss}.
  \end{proof}
\begin{remark}
  If $D_\e=0$, then our rate-distortions function
  $R^{\textnormal{G}}(D_\d,0)$ coincides with Steinberg's
  common-reconstruction rate-distortion function
  $R_{\textnormal{cr}}^{\textnormal{G}}(D_\d)$ of~\eqref{eq:RcrG}:
\begin{equation}
  R^{\textnormal{G}}(D_{\d},D_{\e})\Big|_{D_\e = 0} = 
  R^{\textnormal{G}}_{\textnormal{cr}}(D_\d).
\end{equation}
\end{remark}
\begin{remark}
  \label{rem:wzG}
  If $D_{\d}$ and $D_{\e}$ are such that
  \begin{equation}\label{eq:si1}
    \sqrt{\De \sigma_U^2} \geq \min\left\{\Dd, \frac{\sigma_X^2
        \sigma_U^2}{\sigma_X^2+\sigma_U^2}\right\}
  \end{equation}
  or 
  \begin{equation}\label{eq:si2}
    \left(1- \sqrt{\frac{D_\e}{\sigma_U^2}}\right)^2 \sigma_X^2 \leq D_\d-D_\e
  \end{equation} 
  then $R^{\textnormal{G}}(D_\d,D_\e)$ coincides with Wyner and Ziv's
  rate-distortion function
  $R_{\textnormal{WZ}}^{\textnormal{G}}(D_\d)$
  in~\eqref{eq:RWZG}. Thus, if \eqref{eq:si1} or \eqref{eq:si2} holds,
  then relaxing Constraint~\eqref{eq:distE} and/or revealing the
  side information also to the encoder does not decrease the
  rate-distortions function.
\end{remark}

\subsection{The Direct Part of Theorem~\ref{thm:gaussian}} \label{sec:achGauss}

In the two cases that we shall describe in~\eqref{eq:d3}
and~\eqref{eq:d4} ahead, no encoding is necessary because the encoder
and the decoder can produce sufficiently good reconstructions
$\hat{X}_\e^n$ and $\hat{X}_\d^n$ based solely on their observed
sequences $X^n$ and $Y^n$. In these cases
$R^{\textnormal{G}}(D_\d,D_\e)$ is thus zero.
\begin{enumerate}
  \item[1)] If 
  \begin{subequations}\label{eq:d3}
  \begin{equation}\sqrt{D_{\mathrm{e\vphantom{d}}} \sigma_U^2} \geq
           \min\left\{D_{\mathrm{d}},
             \frac{\sigma_X^2 \sigma_U^2}
                   {\sigma_X^2 + \sigma_U^2}
           \right\}
           \end{equation}
           and 
           \begin{equation}
           D_{\mathrm{d}} \geq
               \frac{\sigma_X^2 \sigma_U^2}{\sigma_X^2 + \sigma_U^2},
               \end{equation}
               \end{subequations}
             then the encoder and  decoder can produce the sequences
             \begin{IEEEeqnarray}{rCl}
             \hat{X}_\e^n &= &\frac{\sigma_X^2}{\sigma_X^2 + \sigma_U^2} X^n\\
               \hat{X}_\d^n &= &\frac{\sigma_X^2}{\sigma_X^2 + \sigma_U^2} Y^n
             \end{IEEEeqnarray}
which satisfy the distortion constraints.

  \item[2)]  If  
  \begin{subequations}\label{eq:d4}
  \begin{equation}
  \sqrt{D_{\mathrm{e\vphantom{d}}} \sigma_U^2} <
           \min\left\{D_{\mathrm{d}},
             \frac{\sigma_X^2 \sigma_U^2}
                   {\sigma_X^2 + \sigma_U^2}
           \right\}
           \end{equation}
           and 
           \begin{equation}
           D_{\mathrm{d}} \geq 
               \sigma_X^2 \left(1 -
               \sqrt{\frac{D_{\mathrm{e}}}{\sigma_U^2}}\right)^2 + D_{\mathrm{e\vphantom{d}}},
               \end{equation}
               \end{subequations}
                    then the encoder and  decoder can produce the sequences
             \begin{IEEEeqnarray}{rCl}\label{eq:Xe1}
             \hat{X}_\e^n &= & \sqrt{\frac{D_{\mathrm{e\vphantom{d}}}}{\sigma_U^2}} X^n\\
               \hat{X}_\d^n &= & \sqrt{\frac{D_{\mathrm{e\vphantom{d}}}}{\sigma_U^2}} Y^n\label{eq:Xd1}
             \end{IEEEeqnarray}
which satisfy the distortion constraints.
 \end{enumerate}

 The achievability of Theorem~\ref{thm:gaussian} in the remaining
 cases will be established using the following proposition with a
 judicious choice of the parameters.
\begin{proposition}\label{prop:Gach} 
  For the setup in Section~\ref{sec:Gausssetup} of a Gaussian source
  and quadratic distortion measures, the tuple $(R, D_\d, D_\e)$ is
  achievable whenever
\begin{IEEEeqnarray}{rCl}
R \geq   \frac{1}{2}
    \log
    \frac{\sigma_X^2 \sigma_U^2 + \sigma_X^2 \sigma_W^2 + \sigma_U^2 \sigma_W^2}
         {(\sigma_X^2 + \sigma_U^2)\sigma_W^2}       
        \label{equ:gaussian-ac-diff-i-log}
  \end{IEEEeqnarray}
 for some parameters $\sigma_W^2,a>0$ and $b \geq 0$ satisfying
 \begin{subequations}
   \label{block:amos_lunch}
  \begin{equation}\label{equ:gaussian-ac-scheme-condition-dec}
  (1-a-b)^2 \sigma_X^2
              + a^2 \sigma_W^2
              + b^2 \sigma_U^2 \leq D_{\mathrm{d}}
  \end{equation}
  and
  \begin{equation}\label{equ:gaussian-ac-scheme-condition-enc}
    b^2 \sigma_U^2 \leq D_{\mathrm{e\vphantom{d}}}.
  \end{equation}
  \end{subequations}
  Thus, 
  \begin{equation}\label{eq:Rgup}
  R^{\textnormal{G}}(D_\d, D_\e) \leq \min_{a,\, b, \, \sigma_W^2\;}   \frac{1}{2}
    \log 
    \frac{\sigma_X^2 \sigma_U^2 + \sigma_X^2 \sigma_W^2 + \sigma_U^2 \sigma_W^2}
         {(\sigma_X^2 + \sigma_U^2)\sigma_W^2},
  \end{equation}
  where the minimization is over all $\sigma_W^2,a>0$ and $b \geq 0$
  satisfying~\eqref{block:amos_lunch}.
\end{proposition}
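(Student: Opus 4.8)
The plan is to specialize the general Wyner-Ziv-type achievability scheme to a Gaussian test channel. Fix parameters $\sigma_W^2,a>0$ and $b\geq 0$ satisfying \eqref{block:amos_lunch}, let $W$ be a centered Gaussian of variance $\sigma_W^2$ independent of $(X,U)$, and set $Z\triangleq X+W$; then $Z\markov X\markov Y$. Take the affine reconstruction maps $\fdc(y,z)\triangleq az+by$ and $\fec(x,z)\triangleq az+bx$. Then $\fdc(Y,Z)-\fec(X,Z)=b(Y-X)=bU$, so $\E{d_\e(\fdc(Y,Z),\fec(X,Z))}=b^2\sigma_U^2\leq D_\e$; and $X-\fdc(Y,Z)=(1-a-b)X-aW-bU$, so $\E{d_\d(X,\fdc(Y,Z))}=(1-a-b)^2\sigma_X^2+a^2\sigma_W^2+b^2\sigma_U^2\leq D_\d$. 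Finally, $(X,Y,Z)$ is jointly Gaussian with $Y$ and $Z$ of variances $\sigma_X^2+\sigma_U^2$ and $\sigma_X^2+\sigma_W^2$ and covariance $\sigma_X^2$, so a short computation of $I(X;Z)=\frac12\log\frac{\sigma_X^2+\sigma_W^2}{\sigma_W^2}$ and of $I(Y;Z)$ yields $I(X;Z)-I(Y;Z)=\frac12\log\frac{\sigma_X^2\sigma_U^2+\sigma_X^2\sigma_W^2+\sigma_U^2\sigma_W^2}{(\sigma_X^2+\sigma_U^2)\sigma_W^2}$, which is exactly the right-hand side of \eqref{equ:gaussian-ac-diff-i-log}.

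With this choice of $Z,\fdc,\fec$ I would then argue that the rate $I(X;Z)-I(Y;Z)$ is achievable, via the Gaussian analogue of the scheme sketched after Theorem~\ref{thm:finite}: generate a codebook of codewords with IID Gaussian components of variance $\sigma_X^2+\sigma_W^2$, partitioned into $\approx 2^{n(I(X;Z)-I(Y;Z))}$ bins of $\approx 2^{nI(Y;Z)}$ codewords each; the encoder finds a codeword $Z^{*n}$ jointly typical with $X^n$ under $P_{XZ}$, transmits its bin index, and sets $\hat{X}_\e^n$ by applying $\fec$ componentwise to $(X^n,Z^{*n})$; the decoder finds a codeword $\hat{Z}^n$ in the announced bin that is jointly typical with $Y^n$ and applies $\fdc$ componentwise to $(\hat{Z}^n,Y^n)$. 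Standard covering and packing arguments give $\hat{Z}^n=Z^{*n}$ with high probability, and on that event the law of large numbers drives the two empirical per-letter distortions to the single-letter expectations computed above---here one uses that $U^n=Y^n-X^n$ is IID and independent of the codebook and of $X^n$, so the cross terms involving $U^n$ average out. The one feature absent from the finite-alphabet setting is that the quadratic distortions are unbounded; this is handled in the customary way by declaring an error when $\frac1n\|X^n\|^2$ or $\frac1n\|U^n\|^2$ is atypically large and setting both reconstructions to the all-zero sequence on that event, whose probability vanishes and on whose complement the per-letter distortions are bounded, so that its contribution to the expected distortion is negligible. (Alternatively one may quantize $\mathcal{X},\mathcal{Y},\hat{\mathcal{X}}$ and the test channel finely, invoke Theorem~\ref{thm:finite}, and pass to the limit.)

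Combining the two steps, the triple $(R,D_\d,D_\e)$ is achievable for every $R$ at least the right-hand side of \eqref{equ:gaussian-ac-diff-i-log} and every admissible $(\sigma_W^2,a,b)$; since $R^{\textnormal{G}}(D_\d,D_\e)$ is the infimum of the achievable rates, \eqref{eq:Rgup} follows, the infimum being attained---hence a genuine minimum---in the only regime in which the proposition is invoked, namely when it is positive, since there the feasible parameter set may be restricted to a compact one. I expect the sole real obstacle to be making the continuous-alphabet achievability rigorous---the unbounded-distortion truncation, or equivalently the discretize-and-pass-to-the-limit argument; the distortion and rate identities themselves are routine algebra.
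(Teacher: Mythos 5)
Your single-letter computations are correct and match the paper's own parametrization (the Remark containing \eqref{eq:choice} and \eqref{eq:equivGpar}): with $Z$ a Gaussian test channel $X+W$, $\phi(y,z)=az+by$, $\psi(x,z)=az+bx$, one gets exactly the two distortion expressions in \eqref{block:amos_lunch} and $I(X;Z)-I(Y;Z)$ equal to the right-hand side of \eqref{equ:gaussian-ac-diff-i-log}. Where you diverge from the paper is in how the coding theorem for this continuous test channel is established. You invoke the standard route: an IID Gaussian codebook with Wyner--Ziv binning, joint typicality encoding/decoding, and either a truncation (zero reconstructions on an atypical-norm event, made harmless by uniform integrability) or a quantize-and-pass-to-the-limit reduction to Theorem~\ref{thm:finite}. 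The paper instead gives a self-contained geometric proof (Appendix~\ref{sec:Gscheme}): codewords drawn uniformly on a centered $n$-sphere of radius $\sqrt{n\sigma_Z^2}$ with $\sigma_Z^2=a^2(\sigma_X^2+\sigma_W^2)$, encoding and decoding by matching prescribed angles, error probabilities controlled via spherical-cap area lemmas from Shannon's 1959 work (as in Lapidoth--Tinguely), and---importantly---the expected-distortion contribution of the error event bounded through second moments and Cauchy--Schwarz rather than by zeroing out reconstructions. Your route is legitimate and shorter on paper, but the burden sits exactly where you flagged it: continuous-alphabet typicality, the unbounded quadratic distortion, and in particular the decoding-failure event, which the encoder cannot detect, so you cannot simply ``set both reconstructions to zero'' there; you need either a negligible-rate error flag for encoder-detectable failures plus a bounded-second-moment argument for decoder-side failures (which is in effect what the paper does), or the discretization argument carried out carefully. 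The paper's sphere-based proof buys a rigorous treatment that avoids these typicality and truncation subtleties at the cost of the spherical-cap machinery; your version buys brevity and familiarity at the cost of precisely those technical steps.
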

\begin{proof} 
See Appendix~\ref{sec:Gscheme}.
\end{proof}

We can now prove the achievability part of Theorem~\ref{thm:gaussian}
for the remaining cases.
\begin{enumerate}
\item[3)] If
\begin{subequations}\label{eq:d1} 
\begin{equation}\sqrt{D_{\mathrm{e\vphantom{d}}} \sigma_U^2} \geq
           \min\left\{D_{\mathrm{d}},
             \frac{\sigma_X^2 \sigma_U^2}
                   {\sigma_X^2 + \sigma_U^2}
           \right\} 
          \end{equation}
           and
           \begin{equation} 
             \label{eq:amos_aldi10}
           D_{\mathrm{d}} <
               \frac{\sigma_X^2 \sigma_U^2}{\sigma_X^2 + \sigma_U^2},
               \end{equation} 
               \end{subequations}
                then the choice
                \begin{subequations}\label{eq:c1}
                  \begin{equation}
                    \sigma_W^2   =  \frac{D_{\mathrm{d}}}{1 - \frac{\sigma_X^2 +
                        \sigma_U^2}{\sigma_X^2 \sigma_U^2} D_{\mathrm{d}}} 
                  \end{equation}
                  (which is positive by~\eqref{eq:amos_aldi10}) and
                \begin{eqnarray}
  a     & = & \frac{D_{\mathrm{d}}}{\sigma_W^2} \nonumber \\ 
  &=& 1-\frac{\sigma_X^2 +\sigma_U^2}{\sigma_X^2\sigma_U^2} D_\d, \\
  b    & = & \frac{\sigma_X^2}{\sigma_X^2 + \sigma_U^2} (1 - a) \nonumber \\ 
  &=& \frac{D_{\mathrm{d}}}{\sigma_U^2}. 
\end{eqnarray}
\end{subequations}
satisfies \eqref{block:amos_lunch} because
\begin{align}
  & (1 - a- b)^2 \sigma_X^2 + a^2\sigma_W^2 + b^2 \sigma_U^2 \nonumber \\ & \qquad = 
  \left(\frac{\sigma_X^2 + \sigma_U^2}{\sigma_X^2} b - b\right)^2 \sigma_X^2 + \frac{D_{\mathrm{d}}^2}{\sigma_W^2} +    \frac{D_{\mathrm{d}}^2}{\sigma_U^2}\\
    & \qquad = \frac{D_{\mathrm{d}}^2}{\sigma_X^2} + D_{\mathrm{d}} \left(1 - \frac{\sigma_X^2 +
  \sigma_U^2}{\sigma_X^2 \sigma_U^2} D_{\mathrm{d}}\right) +
    \frac{D_{\mathrm{d}}^2}{\sigma_U^2}\\
    & \qquad = D_{\mathrm{d}}\label{eq:Dd2}
\end{align}
and
\begin{equation}
    b^2  \sigma_U^2
     =  \frac{D_{\mathrm{d}}^2}{\sigma_U^2}
     \leq D_{\mathrm{e\vphantom{d}}}. 
\end{equation}
Moreover, for this choice,
\begin{IEEEeqnarray}{rCl}\label{eq:rates2}
\lefteqn{\frac{1}{2}
    \log
    \frac{\sigma_X^2 \sigma_U^2 + \sigma_X^2 \sigma_W^2 + \sigma_U^2 \sigma_W^2}
         {(\sigma_X^2 + \sigma_U^2)\sigma_W^2}       
    } \qquad \nonumber \\ &    =&  \frac{1}{2} \log \frac{\sigma_X^2
                \sigma_U^2}{(\sigma_X^2 + \sigma_U^2)D_{\mathrm{d}}}.
\end{IEEEeqnarray}
Thus, by~\eqref{eq:Dd2}--\eqref{eq:rates2} and by Proposition~\ref{prop:Gach}, we conclude that when $D_\d$ and $D_\e$ satisfy~\eqref{eq:d1}, 
  \begin{equation}\label{equ:gaussian-ac-rate-wz}
             R^\textnormal{G}(D_\d, D_\e) \leq   \frac{1}{2} \log \frac{\sigma_X^2
                \sigma_U^2}{(\sigma_X^2 + \sigma_U^2)D_{\mathrm{d}}}.
                \end{equation}
                
\item[4)]  If \begin{subequations}\label{eq:d2} 
    \begin{equation}
      \label{eq:amos_aldi15}
      \sqrt{D_{\mathrm{e\vphantom{d}}} \sigma_U^2} <
      \min\left\{D_{\mathrm{d}},
        \frac{\sigma_X^2 \sigma_U^2}
        {\sigma_X^2 + \sigma_U^2}
      \right\} 
    \end{equation}
    and 
           \begin{equation}
             \label{eq:amos_aldi20}
             D_{\mathrm{d}}  <
             \sigma_X^2 \left(1 -
               \sqrt{\frac{D_{\mathrm{e}}}{\sigma_U^2}}\right)^2 + D_{\mathrm{e\vphantom{d}}},
                \end{equation}
                \end{subequations}
 then we consider the choice
                \begin{subequations}\label{eq:c2}
                \begin{eqnarray}
  b   & = & \sqrt{\frac{D_{\mathrm{e\vphantom{d}}}}{\sigma_U^2}}, \\
                    a      
  & = & \frac{\sigma_X^2}{\sigma_X^2 + \sigma_W^2} (1 -b),\\
  \sigma_W^2  & = & \frac{\sigma_X^2 (D_{\mathrm{d}} -
  b^2 \sigma_U^2)} {\sigma_X^2 (1 - b)^2 +
  b^2 \sigma_U^2 - D_{\mathrm{d}}} \nonumber \\
  &=&  \frac{\sigma_X^2 (D_{\mathrm{d}} -
  D_\e)} {\sigma_X^2 \left(1 - \sqrt{\frac{D_{\mathrm{e\vphantom{d}}}}{\sigma_U^2}}\right)^2 +
  D_\e - D_{\mathrm{d}}}. \IEEEeqnarraynumspace  \label{eq:sw}
\end{eqnarray}
\end{subequations}

To see that the RHS of \eqref{eq:sw} is positive note
that~\eqref{eq:amos_aldi20} implies that the denominator is positive,
and~\eqref{eq:amos_aldi15} implies that the numerator is positive
because
\begin{multline}
  \label{eq:amos_aldi40}
  \biggl( \sqrt{D_{\mathrm{e\vphantom{d}}} \sigma_U^2} <
      \min\left\{D_{\mathrm{d}},
        \frac{\sigma_X^2 \sigma_U^2}
        {\sigma_X^2 + \sigma_U^2}
      \right\} \biggr) \\ \Longrightarrow \\
    \biggl( D_\e < \min \bigl\{\sigma_{U}^2,D_\d \bigr\} \biggr).
\end{multline}
(Since $\sigma_{X}^{2}/(\sigma_{X}^{2} + \sigma_{U}^{2})$ is
smaller than one, the LHS of \eqref{eq:amos_aldi40} implies that $D_\e
< \sigma_{U}^{2}$. This, and the fact that the LHS
of~\eqref{eq:amos_aldi40} also
implies that $D_{\e} \sigma_{U}^{2} < D_{\d}^{2}$ demonstrates that
the LHS of~\eqref{eq:amos_aldi40}  also implies that $D_\e < D_\d$.)

This choice satisfies~\eqref{block:amos_lunch} because
\begin{align}
  & (1 - a - b)^2 \sigma_X^2 + a^2 \sigma_W^2 + b^2 \sigma_U^2 \nonumber\\
    &  =   \left(\frac{\sigma_W^2 (1 - b)}{\sigma_X^2 +    \sigma_W^2 }\right)^2\sigma_X^2  + \left(\frac{\sigma_X^2    (1 -b)}{\sigma_X^2 + \sigma_W^2}\right)^2 \sigma_W^2 +
    D_{\mathrm{e\vphantom{d}}}\\
    &  =  \frac{\sigma_X^2 (1 -
    b)^2}{\frac{\sigma_X^2}{\sigma_W^2} + 1} +
    D_{\mathrm{e\vphantom{d}}}
    \label{equ:gaussian-ac-dec-dist-our-not-zero}\\
    &  =  \frac{\sigma_X^2 (1 - b)^2 (D_{\mathrm{d}} - b^2 \sigma_U^2)}
    {\sigma_X^2 (1 - b)^2} + D_{\mathrm{e\vphantom{d}}}\\
    &  =  D_{\mathrm{d}}. \label{eq:Dd}
\end{align}
and 
\begin{equation}\label{eq:De}
  b^2  \sigma_U^2
     = D_{\mathrm{e\vphantom{d}}}.
\end{equation}
Moreover, for this choice,
\begin{IEEEeqnarray}{rCl}
\label{eq:Requal}
\lefteqn{
  \frac{1}{2}
    \log
    \frac{\sigma_X^2 \sigma_U^2 + \sigma_X^2 \sigma_W^2 + \sigma_U^2 \sigma_W^2}
         {(\sigma_X^2 + \sigma_U^2)\sigma_W^2}       
    }\qquad \nonumber \\  & = &    \frac{1}{2} \log\frac{\sigma_X^2\big(\sigma_U^2 + D_{\mathrm{d}} - 2 \sqrt{\sigma_U^2
  D_{\mathrm{e\vphantom{d}}}}\big)}{(\sigma_X^2 +
                \sigma_U^2)(D_{\mathrm{d}} -
  D_{\mathrm{e\vphantom{d}}})}.\IEEEeqnarraynumspace
\end{IEEEeqnarray}
Thus, by~\eqref{eq:Dd}--\eqref{eq:Requal} and by Proposition~\ref{prop:Gach}, we conclude that when~\eqref{eq:d2} holds, 
       \begin{equation}\label{equ:gaussian-ac-rate-malaer}
         R^{\textnormal{G}}(D_\d, D_\e) \leq      \frac{1}{2} \log \frac{\sigma_X^2\big(\sigma_U^2 + D_{\mathrm{d}} - 2 \sqrt{\sigma_U^2
  D_{\mathrm{e\vphantom{d}}}}\big)}{(\sigma_X^2 +
                \sigma_U^2)(D_{\mathrm{d}} -
  D_{\mathrm{e\vphantom{d}}})}.
                \end{equation}

\end{enumerate}

\begin{remark}
  The expressions in Proposition~\ref{prop:Gach} and their relation
  to~\eqref{eq:rdf} become more transparent when we define
\begin{subequations}\label{eq:choice}
\begin{IEEEeqnarray}{rcl}
Z&=&a(X+W)\\
\hat{X}_\d &= &bY+Z\\
\hat{X}_\e &=& bX+Z
\end{IEEEeqnarray}
\end{subequations}
for $a>0$, $b\geq0$, and $W$ a centered Gaussian of positive variance
$\sigma_W^2$ independent of the pair $(X,Y)$. With these definitions
\begin{subequations}\label{eq:equivGpar}
\begin{IEEEeqnarray}{rCl}
I(X;Z|Y) &= &  \frac{1}{2}
    \log
    \frac{\sigma_X^2 \sigma_U^2 + \sigma_X^2 \sigma_W^2 + \sigma_U^2 \sigma_W^2}
         {(\sigma_X^2 + \sigma_U^2)\sigma_W^2}       
    \IEEEeqnarraynumspace \\
     \E{(X-\hat{X}_\d)^2} & = & (1-a-b)^2 \sigma_X^2
              + a^2 \sigma_W^2
              + b^2 \sigma_U^2 
              \\
              \E{(\hat{X}_\d- \hat{X}_\e)^2} & = &     b^2 \sigma_U^2.
\end{IEEEeqnarray}
\end{subequations}
Since $Z\markov X\markov Y$ for all choices of the parameters $a>0$,
$b\geq0$, $\sigma_W^2>0$, we can also rewrite~\eqref{eq:Rgup} as:
\begin{equation}
R^{\textnormal{G}}(D_\d, D_\e) \leq \min_{Z, \hat{X}_\d, \hat{X}_\e} I(X;Z|Y)
\end{equation}
where the minimum is over all $Z, \hat{X}_\d, \hat{X}_\e$ that are of the form in~\eqref{eq:choice}
 and satisfy the distortion constraints
\begin{IEEEeqnarray}{rCl}
  \E{\bigl(X - \hat{X}_{\d}\bigr)^2} & \leq & \Dd,\\
  \E{\bigl( \hat{X}_{\d} - \hat{X}_{\e} \bigr)^2} & \leq & \De.
 \end{IEEEeqnarray}
\end{remark}


\subsection{The Converse for Theorem~\ref{thm:gaussian}} \label{sec:convGauss}
If
\begin{equation*}
  \sqrt{D_\e \sigma_U^2} \geq
  \min\Big\{D_\d, \frac{\sigma_X^2 \sigma_U^2}{\sigma_X^2+\sigma_U^2}
  \Big\}
\end{equation*}
then the converse follows by relaxing the constraint \eqref{eq:distE};
see Remark~\ref{rem:wzG}. We thus focus on the case where
\begin{equation}
  \label{eq:cond}
  \sqrt{D_\e \sigma_U^2} < 
  \min\bigg\{D_\d, \frac{\sigma_X^2 \sigma_U^2}{\sigma_X^2+\sigma_U^2} \bigg\}.
\end{equation} 
%


We define the function
$\Ritcntsym\colon \RealsPP \times \RealsP \to \RealsP$ like
$\Ritsym(\cdot,\cdot)$ except that its first argument ($D_\d$) is strictly
positive; the minimum is replaced by an infimum; and the size of the
auxiliary alphabet~$\mathcal{Z}$ can be unbounded.  Thus,
\begin{equation}
  \label{eq:amos_def_RitcntDD}
  \RitcntDD \triangleq \inf_{Z, \phi, \psi} I(X;Z|Y)
\end{equation}
where the infimum is over all choices\footnote{To be more precise we
  should specify the set where $Z$ may take value, and we must
  restrict the functions $\phi$ and $\psi$ to be measurable. In the
  converse~$Z$ will correspond to the tuple $(M, Y^{i-1},
  Y_{i+1}^{n})$, and we can therefore restrict $Z$ here to be the
  space where such tuples take value.}  of the random variable~$Z$ and
functions $\phi, \psi$ satisfying
\begin{subequations}
\label{eq:amos600}
\begin{IEEEeqnarray}{rCl}
  \E{(X - \hat{X}_{\d})^2} & \leq & \Dd,
    \label{equ:gaussian-cv-max-h-x-dec-constraint}\\
  \E{(\hat{X}_{\d} - \hat{X}_{\e})^2} & \leq & \De,
    \label{equ:gaussian-cv-max-h-x-enc-constraint}\\
  Z \markov& X& \markov Y, \label{eq:markov} 
  \end{IEEEeqnarray} 
  where
  \begin{IEEEeqnarray}{rCl}
  \hat{X}_\d & \triangleq &\phi(Y,Z),\label{eq:phi}\\
  \hat{X}_\e & \triangleq & \psi(X,Z).\label{eq:psi}
\end{IEEEeqnarray}
\end{subequations}

In analogy to Proposition~\ref{prop:key_properties} we have:
\begin{lemma}
  \label{lem:Gaus_mon_convex}
  Over $\RealsPP \times \RealsP$ the function $\RitcntDD$ is finite;
  monotonic in each of its arguments; and convex.
\end{lemma}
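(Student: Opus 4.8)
The plan is to establish the three properties—finiteness, monotonicity, and convexity—of $\RitcntDD$ over $\RealsPP \times \RealsP$ essentially by mimicking the proof of Proposition~\ref{prop:key_properties}, with only the modifications forced by the fact that we are now in a Gaussian (non-finite-alphabet) setting with the minimum replaced by an infimum. For \emph{finiteness}, the key observation is that the feasible set in \eqref{eq:amos600} is nonempty for every $\Dd>0$ and $\De\ge 0$: one may take $W$ a suitable centered Gaussian and use the parametric family \eqref{eq:choice}, choosing $a,b,\sigma_W^2$ as in Proposition~\ref{prop:Gach} (which, as the direct part shows, can always be made to satisfy the distortion constraints for any $\Dd>0$, $\De\ge 0$); this yields a finite upper bound on $I(X;Z|Y)$, hence $\RitcntDD<\infty$. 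Nonnegativity is immediate since $I(X;Z|Y)\ge 0$. \emph{Monotonicity} is essentially definitional: enlarging $\Dd$ or $\De$ enlarges the feasible set over which the infimum is taken, so $\RitcntDD$ cannot increase.

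The substantive part is \emph{convexity}. I would argue exactly as in the discrete case: given two feasible triples $(Z_1,\phi_1,\psi_1)$ and $(Z_2,\phi_2,\psi_2)$ achieving values within $\delta$ of $\Ritcnt{\Dd^{(1)}}{\De^{(1)}}$ and $\Ritcnt{\Dd^{(2)}}{\De^{(2)}}$ respectively, introduce a time-sharing variable $T$ with $\Pr[T=1]=\lambda$, $\Pr[T=2]=1-\lambda$, independent of everything, and set $Z=(Z_T,T)$ with $\phi,\psi$ acting as $\phi_T,\psi_T$ on the respective branches. One checks that $Z\markov X\markov Y$ is preserved (since $T$ is independent of $(X,Y)$ and each $Z_t$ satisfies the Markov chain), that the squared-error distortions combine convexly—so the convex combination point $(\lambda\Dd^{(1)}+(1-\lambda)\Dd^{(2)},\,\lambda\De^{(1)}+(1-\lambda)\De^{(2)})$ is feasible—and that $I(X;Z|Y)=I(X;Z_T,T|Y)=\lambda\, I(X;Z_1|Y)+(1-\lambda)\, I(X;Z_2|Y)$, using $I(X;T|Y)=0$. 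Letting $\delta\downarrow 0$ gives the convexity inequality. Note that in the infimum formulation one does not need the cardinality bound on $\mathcal Z$ that was used in Proposition~\ref{prop:key_properties}; the time-sharing construction is legitimate directly, which is precisely why the definition \eqref{eq:amos_def_RitcntDD} was set up with an infimum and an unbounded auxiliary alphabet.

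The main obstacle I anticipate is not conceptual but a matter of care with measurability and with the boundary $\Dd\to 0$: since we allow unbounded $\mathcal Z$, one must make sure the time-sharing $Z=(Z_T,T)$ still lives in an admissible (measurable) space and that $\phi,\psi$ remain measurable, which is routine given that the paper's footnote already commits $Z$ to range over a space rich enough to contain such tuples. A second minor point is that the Gaussian distortion functions are unbounded, so when combining distortions one should either restrict attention to feasible triples (which have finite expected distortion by definition) or invoke finiteness of the relevant expectations before taking convex combinations; since feasibility already forces $\E{(X-\hat X_\d)^2}\le\Dd<\infty$ and $\E{(\hat X_\d-\hat X_\e)^2}\le\De<\infty$, this causes no real difficulty. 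Apart from these bookkeeping issues, the proof is a direct transcription of the argument behind Proposition~\ref{prop:key_properties}, and I would present it as such, referring to Appendix~\ref{app:key_properties} for the details common to both.
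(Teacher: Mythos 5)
Your proposal is correct and follows essentially the same route as the paper, whose proof simply bounds $\RitcntDD$ by the Gaussian rate-distortion function without side information (your $a=1$, $b=0$, $\sigma_W^2=D_\d$ choice amounts to the same thing), reuses the monotonicity argument of Proposition~\ref{prop:key_properties}, and adapts its convexity proof to the infimum by time-sharing between near-optimizers. The $\delta$-approximation-plus-time-sharing step you spell out is exactly the ``minor change'' the paper alludes to, and your measurability/unboundedness remarks are the right bookkeeping.
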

\begin{proof}
  The function is bounded by the rate-distortion function of the Gaussian
  source without side information. The proof of monotonicity is
  identical to the proof of monotonicity in
  Proposition~\ref{prop:key_properties}. The proof of convexity is
  also very similar; only a minor change is needed to account for the
  fact that, \emph{prima facie}, the infimum need not be achieved.
\end{proof}

The following lemma provides an explicit expression for $\RitcntDD$
when~\eqref{eq:cond} holds.
\begin{lemma}
  \label{lemma:amos_aldi100}
  If $D_\d > 0$ and $D_\e \geq 0$ satisfy~\eqref{eq:cond}, then
\begin{equation}\label{equ:gaussian-cv-lb-2show-malaer}
\RitcntDD=    \frac{1}{2} \log^+\left(\frac{\sigma_X^2}{\sigma_X^2 + \sigma_U^2}
  \frac{\sigma_U^2 + D_{\mathrm{d}} - 2 \sqrt{\sigma_U^2
  D_{\mathrm{e\vphantom{d}}}}}{D_{\mathrm{d}} -
  D_{\mathrm{e\vphantom{d}}}}\right).
\end{equation} 
\end{lemma}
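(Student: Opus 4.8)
The plan is to prove the two inequalities $\RitcntDD \le \frac{1}{2}\log^{+}(\,\cdot\,)$ and $\RitcntDD \ge \frac{1}{2}\log^{+}(\,\cdot\,)$ for the infimum in~\eqref{eq:amos_def_RitcntDD} separately. For the \emph{upper bound} I would simply exhibit admissible $(Z,\phi,\psi)$. When the argument of $\log^{+}$ in~\eqref{equ:gaussian-cv-lb-2show-malaer} exceeds $1$---equivalently when $D_\d < \sigma_X^2\bigl(1-\sqrt{D_\e/\sigma_U^2}\bigr)^2 + D_\e$---I would take the Gaussian triple~\eqref{eq:choice} with the parameters~\eqref{eq:c2}: it satisfies the Markov chain~\eqref{eq:markov} trivially and the distortion constraints~\eqref{eq:amos600} by~\eqref{eq:Dd}--\eqref{eq:De}, and by~\eqref{eq:equivGpar}--\eqref{eq:Requal} its conditional mutual information $I(X;Z|Y)$ equals the claimed expression ($\log$ and $\log^{+}$ coinciding since the argument is at least~$1$). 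When the argument is at most~$1$, I would instead take $Z$ constant with $\hat{X}_\d = \sqrt{D_\e/\sigma_U^2}\,Y$ and $\hat{X}_\e = \sqrt{D_\e/\sigma_U^2}\,X$ as in~\eqref{eq:Xe1}--\eqref{eq:Xd1}; both distortion constraints then hold---the decoder's being exactly the standing hypothesis on this regime---while $I(X;Z|Y)=0=\frac{1}{2}\log^{+}(\,\cdot\,)$.

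The substance is the \emph{lower bound}. Fix any admissible triple and put $\hat{X}_\d=\phi(Y,Z)$ and $\hat{X}_\e=\psi(X,Z)$. Since $I(X;Z|Y)$ does not depend on $\psi$, I may replace $\psi$ by the conditional mean $\E{\hat{X}_\d \mid X,Z}$, which only tightens~\eqref{equ:gaussian-cv-max-h-x-enc-constraint}; so assume $\hat{X}_\e = \E{\hat{X}_\d \mid X,Z}$. Then $N := \hat{X}_\d - \hat{X}_\e$ has zero conditional mean given $(X,Z)$, hence is orthogonal to every square-integrable function of $(X,Z)$. Writing $\delta_\d := \E{(X-\hat{X}_\d)^2}\le D_\d$ and $\delta_\e := \E{N^2} = \E{(\hat{X}_\d-\hat{X}_\e)^2}\le D_\e$, this orthogonality yields $\E{(X-\hat{X}_\e)^2} = \delta_\d-\delta_\e$ and $\E{(X-\hat{X}_\e)N}=0$.

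Next I would bound $h(X|Y,Z) \le \frac{1}{2}\log\!\bigl(2\pi e\,\E{(X-\E{X| Y,Z})^2}\bigr)$ by Gaussian max-entropy and Jensen. Since $X=Y-U$ and $Y$ is observed, the conditional MMSE of $X$ given $(Y,Z)$ equals that of $U$; and since $\hat{X}_\d$ is a function of $(Y,Z)$, it is at most $\min_{c}\E{(U-c(Y-\hat{X}_\d))^2}$. From $Y-\hat{X}_\d = U + (X-\hat{X}_\e) - N$, using $U\perp(X,Z)$ and the orthogonality of $N$, one gets $\E{(X-\hat{X}_\e-N)^2}=\delta_\d$ and $\E{U(X-\hat{X}_\e-N)}=-\E{UN}=:-s$, so minimizing the quadratic in $c$ gives $\E{(X-\E{X| Y,Z})^2} \le (\sigma_U^2\delta_\d - s^2)/(\sigma_U^2+\delta_\d-2s)$. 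Now $|s|\le\sqrt{\sigma_U^2\delta_\e}\le\sqrt{\sigma_U^2 D_\e}$ by Cauchy--Schwarz, and~\eqref{eq:cond} forces $\sqrt{\sigma_U^2 D_\e} < D_\d$ and $D_\e<\sigma_U^2$; a short derivative computation shows $(\delta,s)\mapsto(\sigma_U^2\delta - s^2)/(\sigma_U^2+\delta-2s)$ is nondecreasing in $\delta$ and, on $|s|\le\sqrt{\sigma_U^2 D_\e}$, nondecreasing in $s$, so its maximum over the admissible region is attained at $\delta=D_\d$, $s=\sqrt{\sigma_U^2 D_\e}$, where it equals $\sigma_U^2(D_\d-D_\e)/(\sigma_U^2+D_\d-2\sqrt{\sigma_U^2 D_\e})$. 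Substituting this together with $h(X|Y)=\frac{1}{2}\log\!\bigl(2\pi e\,\sigma_X^2\sigma_U^2/(\sigma_X^2+\sigma_U^2)\bigr)$ into $I(X;Z|Y)=h(X|Y)-h(X|Y,Z)$ yields $I(X;Z|Y)\ge\frac{1}{2}\log(\,\cdot\,)$, and combining with $I(X;Z|Y)\ge 0$ gives $I(X;Z|Y)\ge\frac{1}{2}\log^{+}(\,\cdot\,)$; taking the infimum over $(Z,\phi,\psi)$ completes the converse.

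The main obstacle is choosing the right intermediate objects. One must notice that normalizing $\psi$ to a conditional mean produces the orthogonal decomposition $\hat{X}_\d=\hat{X}_\e+N$ under which the three relevant second moments collapse to $\delta_\d-\delta_\e$, $\delta_\d$, and $s=\E{UN}$; and one must guess the estimator $c(Y-\hat{X}_\d)$ of $U$, which is exactly the combination of $Y$ and $\hat{X}_\d$ that annihilates the dominant $X$-component. Once these are in hand, the monotonicity check that locates the extremizer on the boundary singled out by~\eqref{eq:cond}---and hence makes the bound tight against the scheme used in the upper-bound part---is routine.
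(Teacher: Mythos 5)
Your proposal is correct, and the converse half takes a mildly different technical route than the paper, so a comparison is worth recording. The achievability half is the paper's own: the same two constructions (constant $Z$ with $\hat{X}_\d=\sqrt{D_\e/\sigma_U^2}\,Y$, $\hat{X}_\e=\sqrt{D_\e/\sigma_U^2}\,X$ in the regime where the $\log^{+}$ argument is at most one, and the Gaussian triple \eqref{eq:choice} with parameters \eqref{eq:c2} otherwise), and your algebraic identification of the regime boundary with \eqref{eq:DdDe1}--\eqref{eq:DdDe2} is right. For the lower bound, the paper bounds $h(X|Y,Z)\le h(X-\hat{X}_\d\,|\,X-\hat{X}_\d+U)$, derives $|\operatorname{Cov}(X-\hat{X}_\d,U)|^2\le D_\e\sigma_U^2$ from the identity $\operatorname{Cov}(X-\hat{X}_\d,U)=-\operatorname{Cov}(\hat{X}_\d-\hat{X}_\e,U)$, and then invokes the conditional max-entropy theorem to reduce to jointly Gaussian $(A,U)$ before optimizing over $(\sigma_A^2,\kappa_{AU})$. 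You instead normalize $\psi$ to the conditional mean $\E{\hat{X}_\d\,|\,X,Z}$ (a legitimate WLOG, since it only tightens \eqref{equ:gaussian-cv-max-h-x-enc-constraint} and leaves $I(X;Z|Y)$ untouched), use the orthogonality of $N=\hat{X}_\d-\hat{X}_\e$ to functions of $(X,Z)$, pass to the MMSE via unconditional max-entropy plus Jensen, and dominate the MMSE by the suboptimal linear estimator $c(Y-\hat{X}_\d)$ of $U$. With $s=\E{UN}=-\operatorname{Cov}(X-\hat{X}_\d,U)$ this lands on exactly the paper's scalar extremal problem $(\delta,s)\mapsto(\sigma_U^2\delta-s^2)/(\sigma_U^2+\delta-2s)$ with the optimum at the same corner, so the two arguments are the same at the strategic level; what yours buys is avoiding the conditional max-entropy theorem and the explicit Gaussian conditional-entropy formula, at the price of guessing the estimator $c(Y-\hat{X}_\d)$. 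Two small points to tighten when writing it up: (i) your blanket claim that the function is nondecreasing in $s$ on $|s|\le\sqrt{\sigma_U^2 D_\e}$ is not true for every $\delta$ (the $s$-derivative is $2(s-\sigma_U^2)(s-\delta)/(\sigma_U^2+\delta-2s)^2$, which is negative when $\delta<s<\sigma_U^2$); but since you first push $\delta$ up to $D_\d$ and, under \eqref{eq:cond}, $\sqrt{\sigma_U^2 D_\e}<\min\{D_\d,\sigma_U^2\}$, the monotonicity you actually need at $\delta=D_\d$ does hold and the conclusion stands. (ii) You should record, as the paper does, that $Z\markov X\markov Y$ together with $Y=X+U$ and $U\perp X$ gives $U$ independent of $(X,Z)$ — you use this both for $\E{U(X-\hat{X}_\e)}=0$ and implicitly for the estimator computation — and note that $\E{(Y-\hat{X}_\d)^2}=\sigma_U^2+\delta_\d-2s>0$ under \eqref{eq:cond}, so the minimization over $c$ is well posed.
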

\begin{proof}[Proof of Lemma~\ref{lemma:amos_aldi100}]
We first prove
\begin{equation} \label{eq:Rlow}
\RitcntDD \leq    \frac{1}{2} \log^+\left(\frac{\sigma_X^2}{\sigma_X^2 + \sigma_U^2}
  \frac{\sigma_U^2 + D_{\mathrm{d}} - 2 \sqrt{\sigma_U^2
  D_{\mathrm{e\vphantom{d}}}}}{D_{\mathrm{d}} -
  D_{\mathrm{e\vphantom{d}}}}\right).
\end{equation}
To this end, we present a choice for $Z$, $\hat{X}_\d$, $\hat{X}_\e$
that satisfies the constraints~\eqref{eq:amos600} and is such that the
objective function $I(X;Z|Y)$ in~\eqref{eq:amos_def_RitcntDD}
evaluates to the RHS of~\eqref{eq:Rlow}. Our choice depends on whether
\begin{equation}\label{eq:DdDe1}
  D_{\mathrm{d}} \geq 
  \sigma_X^2 
  \left(1 -
    \sqrt{\frac{D_{\mathrm{e}}}{\sigma_U^2}}\right)^2 
  + D_{\mathrm{e\vphantom{d}}}
\end{equation}
or 
\begin{equation}\label{eq:DdDe2}
  D_{\mathrm{d}} < 
  \sigma_X^2 \left(1 -
    \sqrt{\frac{D_{\mathrm{e}}}{\sigma_U^2}}\right)^2 + 
  D_{\mathrm{e\vphantom{d}}}.
\end{equation}  
In the first case~\eqref{eq:DdDe1} the RHS of~\eqref{eq:Rlow}
evaluates to 0, whereas in the second case~\eqref{eq:DdDe2} it is
positive.

When $D_\d$ and $D_\e$ satisfy~\eqref{eq:DdDe1}, a suitable
choice is---as in \eqref{eq:Xe1} and \eqref{eq:Xd1} in the proof of the
direct part---
\begin{equation}\label{eq:cc1}
  Z=\emptyset, \qquad \hat{X}_\e^n =  \sqrt{\frac{D_{\mathrm{e\vphantom{d}}}}{\sigma_U^2}} X^n, \qquad \hat{X}_\d^n =  \sqrt{\frac{D_{\mathrm{e\vphantom{d}}}}{\sigma_U^2}} Y^n.
\end{equation}

When $D_\d$ and $D_\e$ satisfy~\eqref{eq:DdDe2}, a suitable
choice is---as in~\eqref{eq:c2} and \eqref{eq:choice} in the
direct part---
\begin{IEEEeqnarray}{rCl}
  Z&=&a(X+W),\quad  \hat{X}_\e=bX+Z, \quad \hat{X}_\d=bY+Z,\IEEEeqnarraynumspace
\end{IEEEeqnarray}
where $W$ is a centered Gaussian of variance $\sigma_W^2=
\frac{\sigma_X^2 (D_{\mathrm{d}} - D_\e)} {\sigma_X^2 (1 -
  \sqrt{D_{\mathrm{e\vphantom{d}}}/\sigma_U^2})^2 + D_\e -
  D_{\mathrm{d}}}$ and independent of the pair $(X,Y)$ and where $b =
\sqrt{ D_{\mathrm{e\vphantom{d}}}/\sigma_U^2}$ and $a =
\frac{\sigma_X^2}{\sigma_X^2 + \sigma_W^2} (1 -b)$. That this choice
has the desired properties follows by~\eqref{eq:Dd}--\eqref{eq:Requal}
and~\eqref{eq:equivGpar}.
               
Having established~\eqref{eq:Rlow}, we now complete the proof of the
lemma by proving the reverse inequality
\begin{equation}
\RitcntDD \geq     \frac{1}{2} \log^+\left(\frac{\sigma_X^2}{\sigma_X^2 + \sigma_U^2}
  \frac{\sigma_U^2 + D_{\mathrm{d}} - 2 \sqrt{\sigma_U^2
  D_{\mathrm{e\vphantom{d}}}}}{D_{\mathrm{d}} -
  D_{\mathrm{e\vphantom{d}}}}\right).
\end{equation}
Since rates are nonnegative, it suffices to prove
\begin{equation}\label{eq:convGa}
R^{\textnormal{G}}(D_\d, D_\e) \geq
  \frac{1}{2} \log\left(\frac{\sigma_X^2}{\sigma_X^2 + \sigma_U^2}
  \frac{\sigma_U^2 + D_{\mathrm{d}} - 2 \sqrt{\sigma_U^2
  D_{\mathrm{e\vphantom{d}}}}}{D_{\mathrm{d}} -
  D_{\mathrm{e\vphantom{d}}}}\right)
\end{equation}
where $\logplus$ has been replaced by $\log$.

Since the joint law of $(X,Y)$ is fixed and is a bivariate Gaussian law
\begin{IEEEeqnarray}{rCl}
  I(X;Z|Y)
& = & h(X|Y) - h(X|Y,Z) \nonumber \\
           & = & \frac{1}{2}
                 \log\left(2 \pi e
                 \frac{\sigma_X^2 \sigma_U^2}
                      {\sigma_X^2 + \sigma_U^2}
                     \right)
                 - h(X|Y,Z).\IEEEeqnarraynumspace
                 \label{equ:gaussian-cv-i-as-diff-log}
\end{IEEEeqnarray}
Consequently, \eqref{eq:convGa} is equivalent to 
\begin{equation}
  \label{eq:up}
  \Omega \leq \frac{1}{2} \log\left( 2\pi e \sigma_U^2 
  \frac{D_{\mathrm{d}} -
  D_{\mathrm{e\vphantom{d}}}}{\sigma_U^2 + D_{\mathrm{d}} - 2 \sqrt{\sigma_U^2
  D_{\mathrm{e\vphantom{d}}}}}\right),
\end{equation}
where $\Omega$ is defined as
\begin{equation}
  \label{eq:omega}
  \Omega \triangleq \sup_{Z, \phi, \psi} h(X|Y,Z)
\end{equation}
under the same constraints~\eqref{eq:amos600} that define $\RitcntDD$ in~\eqref{eq:amos_def_RitcntDD}.
 
To prove~\eqref{eq:up} we first note that, since $\hat{X}_\d$ is a
deterministic function of $(Y,Z)$,
\begin{IEEEeqnarray}{rCl}
  h(X|Y,Z) & = & h(X - \hat{X}_{\mathrm{d}}|Y,Z, \hat{X}_\d)\\
    & = & h(X - \hat{X}_{\mathrm{d}}|X - \hat{X}_{\mathrm{d}} + U,
                                     Z,\hat{X}_{\mathrm{d}})\\
    & \leq & h(X - \hat{X}_{\mathrm{d}}|X - \hat{X}_{\mathrm{d}} + U)
             \label{equ:gaussian-cv-h-x-upper-bounded-by-h-xxd-cond}
             \label{equ:gaussian-cv-h-x-upper-bounded-by-log-var}
\end{IEEEeqnarray}
where in the second line we recalled that $Y = X + U$~\eqref{eq:Y}, and
where the last line follows because conditioning cannot increase
differential entropy.

The Markov condition $Z \markov X \markov Y$ \eqref{eq:markov} and the
fact that $Y = X + U$~\eqref{eq:Y} imply that
\begin{equation}
  Z \markov X \markov U.
\end{equation}
This, combined with the assumption that $U$ is independent of~$X$,
implies that $U$ is independent of $(X,Z)$. And since $\hat{X}_\e$ is
a function of $(X,Z)$, 
\begin{equation}
  \text{$U$ and $(\hat{X}_\e,X,Z)$ are independent.}
\end{equation}
This independence implies that $U$ is independent of $(X -
\hat{X}_\e)$. This latter independence and the fact that
$X-\hat{X}_\d$ can be expressed as $-\bigl( \hat{X}_\d - \hat{X}_\e - (
X - \hat{X}_\e) \bigr)$ implies that
\begin{equation}\label{eq:conv}
   \operatorname{Cov}(X-\hat{X}_\d,U)=- \operatorname{Cov}(\hat{X}_\d-\hat{X}_\e,U). 
\end{equation}
From \eqref{eq:conv}, \eqref{equ:gaussian-cv-max-h-x-enc-constraint},
the fact that
the variance of a random variable cannot exceed its second moment, and
the fact that the magnitude of a correlation coefficient cannot exceed
$1$, it follows that
\begin{eqnarray}\label{eq:conddesu}
|\operatorname{Cov}({X}-\hat{X}_\d, U)|^2&\leq &\De \, \sigma_U^2.
\end{eqnarray}

From~\eqref{equ:gaussian-cv-h-x-upper-bounded-by-log-var} and
\eqref{eq:conddesu} we thus obtain
\begin{equation}\label{eq:GO}
\Omega  \leq   \Gamma
     \end{equation}
     where $\Gamma$ is defined as 
\begin{equation}\label{eq:gammap}
  \Gamma \triangleq
  \sup_{\hat{X}_\d} h(X-\hat{X}_\d|X -\hat{X}_{\d} + U)
\end{equation}
subject to the relaxed constraints
\begin{subequations}
  \label{eq:amos650}
  \begin{eqnarray}
    \operatorname{Var}(X - \hat{X}_{\d})\label{eq:gameq1}
    & \leq & \Dd,\\
    \big|\operatorname{Cov}(X - \hat{X}_{\d}, U)\big|^2
    & \leq & \De \, \sigma_U^2.\label{eq:gameq2}
  \end{eqnarray}
\end{subequations}

We now proceed to study $\Gamma$. Define
\begin{equation}
  A \triangleq X-\hat{X}_\d
\end{equation}
so
\begin{equation}\label{eq:A_gammap}
  \Gamma  = 
  \sup_{A} h(A | A +  U)
\end{equation}
subject to 
\begin{subequations}
  \label{eq:Aamos650}
  \begin{eqnarray}
    \operatorname{Var}(A)\label{eq:Agameq1}
    & \leq & \Dd,\\
    \big|\operatorname{Cov}(A, U)\big|^2
    & \leq & \De \, \sigma_U^2.\label{eq:Agameq2}
  \end{eqnarray}
\end{subequations}

By the conditional max-entropy theorem~\cite{thomas87}, the supremum
in \eqref{eq:A_gammap} is achieved when $(A,U)$ are jointly Gaussian,
as we henceforth assume.
As we next argue, the lemma's hypothesis that~\eqref{eq:cond} holds
implies that the choice of $A$ as $-U$ is not in the feasible
set. Indeed, with this choice $|\operatorname{Cov}(A, U)|^2$ is equal
to~$\sigma_U^4$, which violates \eqref{eq:Agameq2}
because~\eqref{eq:cond} and~\eqref{eq:amos_aldi40} imply
  \begin{equation}\label{eq:cond2}
 D_\e < \min\{\sigma_{U}^2,D_\d\}.
 \end{equation}
We thus assume in the following that $A$ is jointly Gaussian with $U$
and that $A\neq -U$. Consequently,
 \begin{IEEEeqnarray}{rCl}
\lefteqn{h(A|A+U)}\qquad \nonumber \\& =&
 \frac{1}{2} \log \left(2 \pi e\left( \sigma_U^2 - \frac{( \sigma_{U}^2 +
                \kappa_{AU})^2}
               {\sigma_{A}^2 + \sigma_{U}^2 + 2\kappa_{AU}}\right) \right)   \label{eq:diffentropy} \\
              &=& \frac{1}{2} \log \left(2 \pi e \frac{\sigma_A^2 \sigma_{U}^2 -
                \kappa_{AU}^2}
               {\sigma_{A}^2 + \sigma_{U}^2 + 2\kappa_{AU}} \right)
\end{IEEEeqnarray}
 where $\sigma_A^2\triangleq \operatorname{Var}(A)$ and $\kappa_{AU}\triangleq
\operatorname{Cov}(A,U)$. 

We can thus rewrite the optimization problem in \eqref{eq:gammap} as
 \begin{equation}\label{equ:gaussian-cv-max-h-a}
  \Gamma = \sup_{\kappa_{AU}, \sigma_A^2} \frac{1}{2} \log \left(2 \pi e \frac{\sigma_A^2 \sigma_{U}^2 -
                \kappa_{AU}^2}
               {\sigma_{A}^2 + \sigma_{U}^2 + 2\kappa_{AU}} \right)
\end{equation}
subject to 
\begin{eqnarray}\label{eq:var}
 0\leq  &\sigma_A^2   & \leq 
  \Dd,\\ \label{eq:cov}\label{eq:cond3}
0\leq & |\kappa_{AU}|^2 & \leq 
  \De  \sigma_U^2, \\
  0\leq & |\kappa_{AU}|^2 & \leq  \sigma_A^2\sigma_U^2.
  \label{equ:gaussian-cv-max-h-a-cov-constraint}
\end{eqnarray}
(We have to add the last constraint because the magnitude of a
correlation coefficient cannot exceed one.)  For fixed $\kappa_{AU}$,
the objective function in \eqref{equ:gaussian-cv-max-h-a} is
monotonically increasing in $\sigma_{A}^2$ (see
also~\eqref{eq:diffentropy}), and so is the RHS of
Constraint~\eqref{equ:gaussian-cv-max-h-a-cov-constraint}. Therefore,
it is optimal to choose in~\eqref{equ:gaussian-cv-max-h-a}
\begin{equation}\label{eq:opts}
\sigma_{A}^2 = \Dd.
\end{equation}
Substituting this choice in \eqref{equ:gaussian-cv-max-h-a} and
\eqref{equ:gaussian-cv-max-h-a-cov-constraint} yields
\begin{equation}\label{eq:opt2}
\Gamma = \sup_{\kappa_{AU}}  \frac{1}{2} \log \left(2 \pi e \frac{D_\d \sigma_{U}^2 -
                \kappa_{AU}^2}
               {D_\d + \sigma_{U}^2 + 2\kappa_{AU}} \right)
\end{equation}
subject to \eqref{eq:cov} and 
\begin{equation}
0\leq  |\kappa_{AU}|^2  \leq 
  \Dd \, \sigma_U^2 .
  \label{eq:cond3a}
\end{equation}
Notice that, whenever~\eqref{eq:cond} holds, the RHS
of~\eqref{eq:cond3} is upper-bounded by the square of
$\min\{D_\d,\sigma_U^2\}$. Consequently,
\begin{equation}\label{eq:cc}
\Bigl( \textnormal{\eqref{eq:cond} and \eqref{eq:cond3}} \Bigr)
\Rightarrow
\Bigl( |\kappa_{AU}| < \min\{D_\d,\sigma_U^2\} \Bigr).
\end{equation} 
Since the RHS of~\eqref{eq:cc} implies \eqref{eq:cond3a}, 
\begin{equation}
  \Bigl( \textnormal{\eqref{eq:cond} and \eqref{eq:cond3}} \Bigr)
  \Rightarrow \textnormal{\eqref{eq:cond3a}},
\end{equation}
and Constraint~\eqref{eq:cond3a} is redundant.  We therefore ignore
Constraint~\eqref{eq:cond3a} and study the maximization
in~\eqref{eq:opt2} subject to \eqref{eq:cov} only.

To this end, we compute the derivative of the objective function in
\eqref{eq:opt2} with respect to $\kappa_{AU}$:
\begin{IEEEeqnarray}{rCl}
\lefteqn{\frac{ \d }{\d \kappa_{AU}} \left(  \frac{1}{2} \log \left(2 \pi e \frac{D_\d \sigma_{U}^2 -
                \kappa_{AU}^2}
               {D_\d + \sigma_{U}^2 + 2\kappa_{AU}} \right)\right)}\qquad \nonumber \\
               & = & 
    \frac{- (D_\d+\kappa_{AU})(\sigma_U^2+\kappa_{AU}) }{(\Dd + \sigma_{U}^2 + 2 \kappa_{AU})(D_\d\sigma_U^2-\kappa_{AU}^2)}.\label{eq:dev}
\end{IEEEeqnarray}
By~\eqref{eq:cc}, the derivative in \eqref{eq:dev} is negative for all
feasible $\kappa_{AU}$. Hence, the objective function in
\eqref{eq:opt2} is decreasing on the (symmetric) interval of
interest~\eqref{eq:cond3}, and it is optimal to choose
\begin{equation}\label{eq:optk}
\kappa_{AU}= - \sqrt{D_\e \sigma_U^2}.
\end{equation}
The optimality of this choice allows us to evaluate $\Gamma$ via
\eqref{eq:opt2} and hence to upper-bound $\Omega$
via~\eqref{eq:GO}. This yields the desired bound~\eqref{eq:up}, which
establishes the lemma.
 \end{proof}


\begin{proof}[Proof of Converse when \eqref{eq:cond} holds]
  Using Lemma~\ref{lem:Gaus_mon_convex} and Lemma~\ref{lemma:amos_aldi100}
  we can follow the steps of the proof in Section~\ref{sec:th1} of the
  converse part of Theorem~\ref{thm:finite}. The remaining
  technicality is continuity. Continuity in the interior, i.e., on
  $\RealsPP \times \RealsPP$ follows from convexity. It thus only
  remains to establish continuity when $D_{\d} > 0$, \eqref{eq:cond}
  holds, and $D_{\e}$ is zero. This can be done by
  inspecting~\eqref{equ:gaussian-cv-lb-2show-malaer}.
\end{proof}



\section{More and More-General Constraints}\label{sec:Kdistortions}

So far we have only studied settings with two distortion functions,
one of which---the decoder-side distortion function $d_{\d}(x,
\hat{x}_\d)$---depends on the source symbol and the decoder's
reconstruction, and the other---the encoder-side distortion function
$d_{\e}(\hat{x}_{\d}, \hat{x}_\e)$---depends on the decoder's and the
encoder's reconstruction symbols. In this section we extend our
setting to allow for more than two distortion functions and to allow
for distortions that depend on all three symbols: the source symbol
$x$, the decoder's reconstruction symbol $\hat{x}_{\d}$, and the
encoder's reconstruction symbol $\hat{x}_{\e}$ . We shall also allow
the reconstruction alphabets to differ. But all alphabets are assumed
finite.


\subsection{Problem Statement}\label{sec:problem_extended}
The new setup differs from the setup in Section~\ref{sec:setup} in two
ways.
\begin{itemize}
\item The encoder-side reconstruction $\hat{X}_\e^n$ and the
  decoder-side reconstruction $\hat{X}_\d^n$ take value in the finite
  alphabets $\hat{\mathcal X}_\e^n$ and $\hat{\mathcal X}_\d^n$ which
  can be different.
\item There are $K$ (possibly larger than $2$) distortion constraints
  specified by the $K$ distortion functions
\begin{equation}\label{eq:distme}
d_k \colon \mathcal{X} \times \mathcal{X}_\d\times  \mathcal{X}_\e \to \RealsP, \quad k\in\{1,\ldots, K\}
\end{equation} 
and the corresponding $K$ maximal-allowed distortions $D_1,\ldots,
D_K$ (all of which are assumed to be nonnegative).
\end{itemize}
We say that the tuple $(R, D_1, \ldots, D_K)$ is \emph{achievable} if
for every $\epsilon>0$ and sufficiently large $n$ there exist a
message set $\set{M}$ of size $|\set{M}| \leq 2^{n(R+\epsilon)}$ 
and functions
\begin{subequations}\label{eq:funK}
\begin{IEEEeqnarray}{rCl}
f^{(n)}& \colon& \set{X}^{n} \to \set{M}\\
\phi^{(n)} & \colon & \set{M} \times \set{Y}^n \to \hat{\set{X}}_{\d}^n\\
\psi^{(n)} & \colon & \set{X}^n \to \hat{\set{X}}^n_\e
\end{IEEEeqnarray}
\end{subequations}
such that  the message $M=\encn(X^n)$ and the reconstruction
sequences $\hat{X}_\d^n=\decn(M,Y^n)$ and
$\hat{X}_\e^n=\ercn(X^n)$ satisfy:
\begin{IEEEeqnarray}{rCl} \label{eq:distck}
\frac{1}{n} \sum_{i=1}^n \E{d_{k}(X_i, \hat{X}_{\d,i}, \hat{X}_{\e,i}) } \leq D_{k}+\epsilon, \quad k\in\{1,\ldots, K\}.\nonumber \\
\end{IEEEeqnarray}

In analogy to Assumption~\ref{assumption}, we shall assume:
\begin{assumption}\label{ass:extended}
To each $x\in \set{X}$ corresponds some $\hat{x}_\d\in\hat{X}_\d$ and
some $\hat{x}_\e\in \hat{X}_\e$ satisfying
\begin{equation}\label{eq:assumptionK}
d_k(x, \hat{x}_\d, \hat{x}_\e)=0, \quad k\in\{1,\ldots, K\}.
\end{equation} 
\end{assumption}

We seek the smallest rate $R$ for which the tuple $(R, D_1,\ldots,
D_K)$ is achievable. This is defined as follows.  Given
a maximal-allowed-distortion tuple $(D_1,\ldots, D_K)$, let
\begin{eqnarray}\lefteqn{
\set{R}_{\textnormal{Ext}}(D_1,\ldots, D_K)} \qquad\nonumber  \\
&\triangleq& \{R \in \RealsP\colon (R, D_1,\ldots, D_K) \textnormal{ is achievable}\}. \IEEEeqnarraynumspace
\end{eqnarray}
Assumption~\ref{ass:extended} implies that the set
$\set{R}_{\textnormal{Ext}}(D_1, \ldots, D_K)$ contains all rates
exceeding $H(X|Y)$ and is thus nonempty.  The
rate-distortions function $R_{\textnormal{Ext}}$ can now be defined as
\begin{equation}
R_{\textnormal{Ext}}(D_1, \ldots, D_K) \triangleq \min_{R\in \rateregion_{\textnormal{Ext}}(D_1,\ldots, D_K)}{R},
\end{equation}
where the  minimum exists because the region  $\set{R}_{\textnormal{Ext}}(D_1,\ldots, D_K) \subset \RealsP$ is nonempty, closed, and bounded from below by 0.

\subsection{Result}

To describe the rate-distortions function for the extended setup of
Section~\ref{sec:problem_extended}, we next introduce the function
$\tilde{R}_{\textnormal{Ext}}(D_1,\ldots, D_K)$.

Given the joint law $P_{XY}$ of the source and side information, and
given the distortion functions $d_1, \ldots, d_K$, this function is
defined as
\begin{equation}
\tilde{R}_{\textnormal{Ext}}(D_1,\ldots, D_K) =\min_{ U,Z, \fdc, \fec} \bigl( I(X;Z)-I(Y;Z) \bigr)
\end{equation}
where the minimization is over 
all discrete auxiliary random variables $Z$ and $U$ satisfying
\begin{equation}
\label{eq:amos4001}
(U,Z)\markov X\markov Y
\end{equation}
and over all functions $\fdc\colon \mathcal{Y} \times \mathcal{Z} \to
\hat{\mathcal{X}}_\d$ and $\fec\colon \mathcal{X} \times
\mathcal{Z}\times \mathcal{U} \to \hat{\mathcal{X}}_\e$ that
simultaneously satisfy the $K$ distortion constraints
\begin{IEEEeqnarray}{rCl}
\E{ d_k \bigl( X, \fdc(Y,Z), \psi(X,Z,U) \bigr)} \leq D_k, 
\quad k\in\{1,\ldots, K\}. \nonumber\\
\label{eq:distErk}
\end{IEEEeqnarray}

The following proposition provides cardinality bounds on the support
sets of the auxiliary random variables.
\begin{proposition}[Cardinality Bounds]
  \label{prop:AmosCardBound}
  The minimum defining $\tilde{R}_{\textnormal{Ext}}(D_1,\ldots, D_K)$
  is not increased if we restrict the cardinality of the support
  set~$\mathcal{Z}$ of $Z$ to
  \begin{equation}
    |\mathcal{Z}| \leq |\mathcal{X}||\set{U}|+K+1
  \end{equation}
  and the cardinality of the support set $\set{U}$ of $U$ to
  \begin{equation}
    \label{eq:amos_cardU}
    |\mathcal{U}| \leq K.
  \end{equation}
\end{proposition}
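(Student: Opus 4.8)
The plan is to prove both cardinality bounds by the convex cover method \cite{ElGamalKim2011}, applied in two stages: first bound $|\mathcal{U}|$ while holding the joint law of $(X,Y,Z)$ and the decoder function $\phi$ fixed, and then bound $|\mathcal{Z}|$ while holding only $P_{XY}$ fixed. In both stages I would argue in the strong form: starting from an arbitrary feasible tuple $(Z,U,\phi,\psi)$ for $\tilde{R}_{\textnormal{Ext}}(D_1,\ldots,D_K)$, exhibit another feasible tuple with the same value of $I(X;Z)-I(Y;Z)$ whose auxiliary alphabets obey the claimed bounds.

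\emph{Bounding $|\mathcal{U}|$.} Neither the objective $I(X;Z)-I(Y;Z)$ nor the law $P_{XY}$ depends on $U$ or $\psi$; the pair $(U,\psi)$ serves only to produce $\hat{X}_\e=\psi(X,Z,U)$ and hence only to meet the $K$ distortion constraints. Since $\psi$ is deterministic and $\hat{\mathcal{X}}_\e$ is finite, I would first replace $(U,\psi)$ by the single finite-alphabet variable $\hat{X}_\e$, observing that $(U,Z)\markov X\markov Y$ forces $\hat{X}_\e$ to be conditionally independent of $Y$ given $(X,Z)$. For every deterministic $\eta\colon\mathcal{X}\times\mathcal{Z}\to\hat{\mathcal{X}}_\e$ the vector $\big(\E{d_k(X,\phi(Y,Z),\eta(X,Z))}\big)_{k=1}^{K}$ lies in $\RealsP^{K}$, and since each distortion is affine in the kernel $P_{\hat{X}_\e|X,Z}$ (whose extreme points are exactly these deterministic maps), the achievable distortion region is the convex hull of these finitely many points. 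The original tuple shows that $(D_1,\ldots,D_K)$ componentwise dominates some point of this convex hull, and the key step is that such a dominated point can be realized using at most $K$ of the maps: if it is written as a convex combination of $K+1$ affinely independent points, then, because the distortion constraints are inequalities, one may perturb the barycentric weights so as to move the realized distortion vector in the strictly decreasing direction $(-1,\ldots,-1)$ until a weight vanishes, which keeps the vector dominated by $(D_1,\ldots,D_K)$ and lowers the support; a combination of $K+2$ or more points is reduced by the usual Carathéodory step. Realizing the surviving $\le K$ maps through an independent time-sharing variable $U$ with $\psi(x,z,u)=\eta_u(x,z)$ yields $|\mathcal{U}|\le K$, and $(U,Z)\markov X\markov Y$ is preserved because $U$ is now independent of $(X,Y,Z)$.

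\emph{Bounding $|\mathcal{Z}|$.} With $\mathcal{U}$ finite and $|\mathcal{U}|\le K$, I would apply the Fenchel--Eggleston--Carathéodory form of the convex cover method, attaching to each value $z$ the conditional law $P_{XU|Z=z}$ together with the functions $\phi(\cdot,z)$ and $\psi(\cdot,z,\cdot)$, and replacing $Z$ by a variable $Z'$ supported on finitely many of the original values $z$ (each retaining its own functions) whose distribution preserves: (i) the joint pmf $P_{XU}=\sum_z P_Z(z)P_{XU|Z=z}$, which pins down $P_X$ and hence $P_{XY}$ and contributes $|\mathcal{X}||\mathcal{U}|-1$ real parameters; (ii) the objective, for which it suffices---using $I(X;Z)-I(Y;Z)=I(X;Z|Y)=H(X|Y)-\sum_z P_Z(z)H(X|Y,Z=z)$---to preserve the single scalar $\sum_z P_Z(z)H(X|Y,Z=z)$; and (iii) the $K$ achieved distortions $\E{d_k(X,\phi(Y,Z),\psi(X,Z,U))}$, each a weighted sum over $z$ of a quantity determined by the data attached to $z$. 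This fixes $|\mathcal{X}||\mathcal{U}|+K$ parameters, so Carathéodory produces a $Z'$ with $|\mathcal{Z}'|\le|\mathcal{X}||\mathcal{U}|+K+1$, and $(U,Z')\markov X\markov Y$ survives because each retained atom keeps its original conditional law and functions.

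The step I expect to be the main obstacle is the second stage, specifically the dependence of $\phi(\cdot,z)$ and $\psi(\cdot,z,\cdot)$ on $z$: the per-letter distortion is affine in $P_{XU|Z=z}$ only once the functions at $z$ are fixed, so it is not a single continuous functional of the conditional law and one cannot freely merge values of $z$ carrying different functions. The remedy is to keep each retained value's functions bound to it and preserve the corresponding conditional distortions as opaque scalars, which is also why the full conditional law $P_{XU|Z=z}$ (rather than merely its $\mathcal{X}$-marginal) is carried through and hence why the factor $|\mathcal{X}||\mathcal{U}|$ appears. A secondary, more routine point is the first-stage refinement of Carathéodory that exploits the inequality form of the distortion constraints to reach $|\mathcal{U}|\le K$ instead of $K+1$, together with the standard approximation argument---working with a version of $Z$ whose conditional laws sweep out a connected set---needed to pass from membership in a closed convex hull to an honest finite-support representation.
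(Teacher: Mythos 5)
Your argument is correct, but for the $|\mathcal{U}|\leq K$ part---the only part the paper proves in detail---it takes a genuinely different route. The paper works \emph{conditionally on each pair} $(x,z)$: it forms the vector of conditional distortions $\bigl(D_1^{(x,z)},\ldots,D_K^{(x,z)}\bigr)$, notes that it lies in the convex hull of the at most $|\hat{\mathcal{X}}_\e|$ points $h^{(x,z)}(u)$, picks a componentwise-dominated \emph{boundary} point of that hull, and invokes a boundary refinement of Carath\'eodory's theorem (its Lemma~8, proved via a supporting hyperplane) to write it with at most $K$ atoms; the resulting $\tilde{U}$ has a conditional law $P_{\tilde U|X,Z}(\cdot|x,z)$ that varies with $(x,z)$. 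You instead argue \emph{globally}: you view the induced kernel $P_{\hat X_\e|X,Z}$ as a mixture of deterministic maps $\eta\colon\mathcal{X}\times\mathcal{Z}\to\hat{\mathcal{X}}_\e$, apply Carath\'eodory to the unconditional distortion vector, and shave the count from $K+1$ to $K$ by perturbing the weights along $(-1,\ldots,-1)$---which plays exactly the role of the paper's boundary lemma, exploiting that the constraints are inequalities---and then realize the surviving maps through a time-sharing $U$ \emph{independent} of $(X,Y,Z)$. Both constructions leave $P_{XYZ}$, hence the objective and the Markov chain $(U,Z)\markov X\markov Y$, untouched, so both are valid. What each buys: your global version is conceptually cleaner (an unconditional time-sharing variable) and dovetails directly with your second stage, but it needs the family of deterministic maps to be finite, i.e.\ it implicitly assumes $\mathcal{Z}$ (and $\mathcal{U}$) finite before the reduction, or else an extra functional-representation/limiting step of the kind you gesture at in your closing caveat; the paper's per-$(x,z)$ argument is immune to this because the relevant set $\set{S}^{(x,z)}$ has at most $|\hat{\mathcal{X}}_\e|$ elements no matter how large the auxiliary alphabets are. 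For the $|\mathcal{Z}|$ bound the paper gives no proof beyond citing the convex cover method, and your sketch---reweighting the original atoms of $Z$ with their attached conditional laws and functions so as to preserve $P_{XU}$, the scalar $\sum_z P_Z(z)H(X|Y,Z=z)$, and the $K$ distortions---is sound, reproduces the stated constant $|\mathcal{X}||\mathcal{U}|+K+1$, and correctly identifies why one must carry the functions along with each retained atom rather than treat the distortion as a functional of the conditional pmf alone.
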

\begin{proof}
  The cardinality bound on $\mathcal{Z}$ can be justified using the
  convex cover method \cite{ElGamalKim2011}. The cardinality bound on
  $\set{U}$ is proved in Appendix~\ref{app:Amos}.
\end{proof}

\begin{remark}[Improved Cardinality Bound]
  The cardinality bound on $\set{U}$ can be strengthened: $|\set{U}|$
  need not exceed the number of distortion constraints
  in~\eqref{eq:distck} that depend on $\hat{X}_{\e,i}$.  The latter
  number equals $1$ in the original setup of Section~\ref{sec:setup}
  thus allowing us to recover Theorem~\ref{thm:finite}.
\end{remark}

\begin{proposition}[Key Properties of the Function $\tilde{R}_{\textnormal{Ext}}$]
  \label{prop:key_propertiesK}
  The function $\tilde{R}_{\textnormal{Ext}}\colon \RealsP^{K} \to
  \RealsP$ is bounded from above by $H(X|Y)$; it is nondecreasing in
  the distortions
\begin{multline*}
\Bigl( D_1'\geq D_1,\; \ldots,  \; D_K' \geq D_K \Bigr) \\
\Longrightarrow \quad 
 \Bigl( \tilde{R}_{\textnormal{Ext}}(D_1', \ldots, D_K') \leq \tilde{R}_{\textnormal{Ext}}(D_1, \ldots, D_K) \Bigr);
\end{multline*}
and it is convex and continuous.
\end{proposition}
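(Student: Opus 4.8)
The plan is to prove the three properties of $\tilde{R}_{\textnormal{Ext}}$ in turn, following closely the template of Proposition~\ref{prop:key_properties}, whose proof is assumed available.

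\textbf{Boundedness.} By Assumption~\ref{ass:extended}, for each $x$ there exist $\hat{x}_\d, \hat{x}_\e$ with $d_k(x,\hat{x}_\d,\hat{x}_\e)=0$ for all $k$. First I would take $Z=X$, $U$ trivial (constant), and choose $\fdc$, $\fec$ so that $\fec(x,z,u)$ and $\fdc(y,z)$ recover these zero-distortion symbols from the value $z=x$. (Note $\fdc$ takes $(Y,Z)$, but with $Z=X$ the chain $Z\markov X\markov Y$ still holds and $\fdc(y,x)$ may ignore $y$.) All $K$ distortion constraints in~\eqref{eq:distErk} are then met with value $0$, and the objective equals $I(X;X)-I(Y;X)=H(X)-I(X;Y)=H(X|Y)$. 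Hence $\tilde{R}_{\textnormal{Ext}}(D_1,\ldots,D_K)\le H(X|Y)$.

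\textbf{Monotonicity.} This is immediate: if $D_k'\ge D_k$ for every $k$, then every $(U,Z,\fdc,\fec)$ feasible for $(D_1,\ldots,D_K)$ is also feasible for $(D_1',\ldots,D_K')$, so the minimum over the larger feasible set cannot be larger.

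\textbf{Convexity.} This is the heart of the argument and follows the standard time-sharing construction. Given two distortion tuples $\vect{D}^{(0)}$ and $\vect{D}^{(1)}$ with respective optimizers $(U^{(0)},Z^{(0)},\fdc^{(0)},\fec^{(0)})$ and $(U^{(1)},Z^{(1)},\fdc^{(1)},\fec^{(1)})$, and $\lambda\in[0,1]$, introduce a time-sharing random variable $T$ with $\Pr[T=1]=\lambda$, independent of everything, and set $Z=(Z^{(T)},T)$ and $U=U^{(T)}$, with $\fdc$ and $\fec$ applying the $T$-th functions (the value of $T$ being available through $Z$). One checks $(U,Z)\markov X\markov Y$. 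Then $I(X;Z)-I(Y;Z)=\sum_{t} \Pr[T=t]\bigl(I(X;Z^{(t)})-I(Y;Z^{(t)})\bigr)$ because $T$ is independent of $(X,Y)$, which equals $\lambda\tilde{R}_{\textnormal{Ext}}(\vect{D}^{(1)})+(1-\lambda)\tilde{R}_{\textnormal{Ext}}(\vect{D}^{(0)})$; and each distortion $\E{d_k}$ is the corresponding convex combination, hence at most $\lambda D_k^{(1)}+(1-\lambda)D_k^{(0)}$. Therefore $\tilde{R}_{\textnormal{Ext}}$ evaluated at the convex combination of the distortion tuples is bounded by the convex combination of the values, i.e.\ $\tilde{R}_{\textnormal{Ext}}$ is convex on $\RealsP^K$.

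\textbf{Continuity.} A finite convex function on $\RealsP^K$ is automatically continuous on the interior $\RealsPP^K$; the only issue is continuity up to the boundary. The main obstacle is exactly this boundary continuity, and I would handle it the same way as in Proposition~\ref{prop:key_properties}: use the cardinality bounds of Proposition~\ref{prop:AmosCardBound} to replace the minimization by a minimization of the continuous objective $I(X;Z)-I(Y;Z)$ over a compact parameter set (the finitely many functions $\fdc,\fec$ on the bounded alphabets, together with the simplex of distributions $P_{UZ|X}$ on the bounded-size alphabets), with the constraints~\eqref{eq:distErk} defining a closed set that varies upper-semicontinuously with $(D_1,\ldots,D_K)$; a routine argument (lower semicontinuity of the value from feasibility of a perturbed optimizer after a small additional time-share with the zero-distortion configuration guaranteed by Assumption~\ref{ass:extended}, upper semicontinuity from compactness) then gives continuity on all of $\RealsP^K$. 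Since this is the same reasoning already carried out for Proposition~\ref{prop:key_properties}, I would keep it brief and refer to that proof, noting that the only change is the presence of the extra auxiliary $U$, which is harmless because $\set{U}$ is of bounded cardinality by~\eqref{eq:amos_cardU}.
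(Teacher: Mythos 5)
Your proposal is correct and, in structure, is exactly what the paper intends when it says the proof "is similar to that of Proposition~\ref{prop:key_properties} and is omitted": boundedness from the zero-distortion configuration of Assumption~\ref{ass:extended} with $Z=X$ and constant $U$; monotonicity from nesting of feasible sets; convexity by time-sharing (your $T$ plays the role of the paper's $Q$, and carrying $U=U^{(T)}$ along with $Z=(Z^{(T)},T)$ is precisely the extra bookkeeping needed); and boundary continuity on $\RealsP^{K}$ via the cardinality bounds of Proposition~\ref{prop:AmosCardBound} together with a compactness/subsequence argument over the finitely many functions $\fdc,\fec$ and the compact set of laws $P_{UZ|X}$. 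The one place you genuinely diverge is the other half of the boundary argument: the paper gets upper semicontinuity relative to $\RealsP^{2}$ (and, analogously, $\RealsP^{K}$) from convexity plus the fact that the orthant is locally simplicial \cite{Rockafellar}, whereas you propose a direct perturbation—time-share the optimizer with the zero-distortion configuration of Assumption~\ref{ass:extended} to get $\tilde{R}_{\textnormal{Ext}}\bigl((1-\delta)\vect{D}\bigr)\leq(1-\delta)\tilde{R}_{\textnormal{Ext}}(\vect{D})+\delta H(X|Y)$ and then invoke monotonicity—which is equally valid and has the advantage of being self-contained. One caveat: you have the two semicontinuity labels swapped. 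The compactness/subsequence step (limit points of optimizers for $\vect{D}^{(\kappa)}\to\vect{D}$ remain feasible for $\vect{D}$) yields \emph{lower} semicontinuity, $\tilde{R}_{\textnormal{Ext}}(\vect{D})\leq\liminf_{\kappa}\tilde{R}_{\textnormal{Ext}}(\vect{D}^{(\kappa)})$, while the time-share-with-zero-distortion perturbation combined with monotonicity yields \emph{upper} semicontinuity, $\limsup_{\kappa}\tilde{R}_{\textnormal{Ext}}(\vect{D}^{(\kappa)})\leq\tilde{R}_{\textnormal{Ext}}(\vect{D})$; together they give continuity, so your plan goes through, but the write-up should attribute each direction to the correct mechanism.
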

\begin{proof} 
  The proof is similar to the proof of
  Proposition~\ref{prop:key_properties} in
  Appendix~\ref{app:key_properties} and is omitted.
\end{proof}

\begin{theorem}\label{thm:extension}
  The rate-distortions function for the setup in
  Section~\ref{sec:problem_extended} is equal to
  $\tilde{R}_{\textnormal{Ext}}(D_1, \ldots, D_K)$:
\begin{equation}
  \label{eq:rdf1}
  R_{\textnormal{Ext}}(D_1, \ldots, D_K) = 
  \tilde{R}_{\textnormal{Ext}}(D_1, \ldots, D_K).
\end{equation}
\end{theorem}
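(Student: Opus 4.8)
The plan is to establish the two inequalities $R_{\textnormal{Ext}}(D_1,\ldots,D_K)\le\tilde R_{\textnormal{Ext}}(D_1,\ldots,D_K)$ (achievability) and $R_{\textnormal{Ext}}(D_1,\ldots,D_K)\ge\tilde R_{\textnormal{Ext}}(D_1,\ldots,D_K)$ (converse), in both cases mimicking the corresponding part of the proof of Theorem~\ref{thm:finite} and pointing out where the auxiliary variable~$U$ enters.

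\emph{Achievability.} Fix auxiliaries $(U,Z)$ and functions $\phi,\psi$ that are feasible for the minimization defining $\tilde R_{\textnormal{Ext}}(D_1,\ldots,D_K)$, i.e.\ that satisfy~\eqref{eq:amos4001} and~\eqref{eq:distErk}; by Assumption~\ref{ass:extended} this feasible set is nonempty. Run the Wyner-Ziv binning scheme sketched in the proof of Theorem~\ref{thm:finite}: a random codebook of roughly $2^{nI(X;Z)}$ length-$n$ codewords with components drawn IID $P_Z$, partitioned into roughly $2^{n(I(X;Z)-I(Y;Z))}$ bins. On observing $X^n$ the encoder picks a codeword $Z^{\ast n}$ jointly typical with $X^n$ and transmits the index of its bin. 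The one new ingredient over Theorem~\ref{thm:finite} is that the encoder then generates, using local randomness, a sequence $U^n$ by drawing its components conditionally independently from $P_{U|XZ}(\cdot\,|\,X_i,Z_i^{\ast})$, and forms $\hat X_\e^n$ by applying $\psi$ componentwise to $(X^n,Z^{\ast n},U^n)$. The decoder recovers $Z^{\ast n}$ from the received bin using $Y^n$ and forms $\hat X_\d^n$ by applying $\phi$ componentwise to $(Y^n,Z^{\ast n})$. By the conditional typicality lemma, with high probability $(X^n,Z^{\ast n},U^n,Y^n)$ is jointly typical with respect to $P_XP_{Z|X}P_{U|XZ}P_{Y|X}$, so each of the $K$ empirical average distortions $\frac1n\sum_i d_k(X_i,\hat X_{\d,i},\hat X_{\e,i})$ concentrates around $\E{d_k\bigl(X,\phi(Y,Z),\psi(X,Z,U)\bigr)}\le D_k$. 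The standard error-event analysis (omitted, as in Theorem~\ref{thm:finite}) then makes the rate $I(X;Z)-I(Y;Z)$ achievable, and minimizing over feasible $(U,Z,\phi,\psi)$ gives~$R_{\textnormal{Ext}}\le\tilde R_{\textnormal{Ext}}$.

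\emph{Converse.} Suppose $(R,D_1,\ldots,D_K)$ is achievable and fix, for large $n$ and some $\epsilon>0$, a code $(f^{(n)},\phi^{(n)},\psi^{(n)})$ as in~\eqref{eq:funK} meeting~\eqref{eq:distck}. The chain of inequalities~\eqref{eq:25}--\eqref{equ:cv-part1-last} of Section~\ref{sec:th1}, with $Z_i$ as in~\eqref{eq:Zi}, yields verbatim
\begin{equation}
  n(R+\epsilon)\ \geq\ \sum_{i=1}^{n}\bigl(I(X_i;Z_i)-I(Y_i;Z_i)\bigr),
  \qquad Z_i\markov X_i\markov Y_i .
\end{equation}
The point where the present argument departs from — and is actually simpler than — that of Theorem~\ref{thm:finite} is the encoder side: put $U_i\triangleq(X^{i-1},X_{i+1}^n)$. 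Since $\{(X_j,Y_j)\}$ is IID and $M=f^{(n)}(X^n)$, the Markov chain $(U_i,Z_i)\markov X_i\markov Y_i$ holds; moreover $\hat X_{\d,i}=\phi_i^{(n)}(M,Y^n)$ is a function $\phi_i(Y_i,Z_i)$ (as in Section~\ref{sec:th1}) and $\hat X_{\e,i}=\psi_i^{(n)}(X^n)$ is literally a function $\psi_i(X_i,Z_i,U_i)$ of $(X_i,Z_i,U_i)$ (not even depending on $Z_i$), so the ``representative-sequence'' argument around~\eqref{eq:xstar1} is unnecessary here. Hence $(Z_i,U_i,\phi_i,\psi_i)$ is feasible for the minimization defining $\tilde R_{\textnormal{Ext}}(D_{1,i},\ldots,D_{K,i})$ with $D_{k,i}\triangleq\E{d_k(X_i,\hat X_{\d,i},\hat X_{\e,i})}$, and therefore $I(X_i;Z_i)-I(Y_i;Z_i)\ge\tilde R_{\textnormal{Ext}}(D_{1,i},\ldots,D_{K,i})$; the large cardinality of $\set{U}_i$ is harmless because, by Proposition~\ref{prop:AmosCardBound}, $\tilde R_{\textnormal{Ext}}$ equals the infimum of the same objective over \emph{all} discrete $(U,Z)$. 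Averaging over $i$ and using the convexity and monotonicity of $\tilde R_{\textnormal{Ext}}$ (Proposition~\ref{prop:key_propertiesK}) together with $\frac1n\sum_i D_{k,i}\le D_k+\epsilon$ gives $n(R+\epsilon)\ge n\,\tilde R_{\textnormal{Ext}}(D_1+\epsilon,\ldots,D_K+\epsilon)$; letting $\epsilon\downarrow0$ and invoking continuity of $\tilde R_{\textnormal{Ext}}$ (Proposition~\ref{prop:key_propertiesK}) yields $R\ge\tilde R_{\textnormal{Ext}}(D_1,\ldots,D_K)$, which together with achievability proves~\eqref{eq:rdf1}.

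\emph{Main obstacle.} The only genuinely new content — and the part most in need of care — is the choice $U_i=(X^{i-1},X_{i+1}^n)$ and the verification that it simultaneously preserves the Markov chain $(U_i,Z_i)\markov X_i\markov Y_i$ and lets the single-letter function $\psi_i$ reproduce $\hat X_{\e,i}$ exactly; once this is in place the extended converse is in fact easier than the original one, and the achievability is essentially verbatim Wyner-Ziv with the extra conditional generation of $U^n$ and the corresponding joint-typicality bookkeeping for the three-argument distortion functions.
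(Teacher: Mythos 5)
Your proposal is correct and follows essentially the same route as the paper's proof: the same binning scheme with $U^n$ generated conditionally IID from $P_{U|XZ}$ for achievability, and for the converse the same identifications $Z_i=(M,Y^{i-1},Y_{i+1}^n)$, $U_i=(X^{i-1},X_{i+1}^n)$, the single-letterized $\phi_i,\psi_i$, and the convexity/monotonicity/continuity of $\tilde R_{\textnormal{Ext}}$ from Proposition~\ref{prop:key_propertiesK}. Your added remarks — that the representative-sequence argument of~\eqref{eq:xstar1} is no longer needed and that the unbounded alphabet of $U_i$ is harmless since the definition of $\tilde R_{\textnormal{Ext}}$ places no cardinality restriction — are accurate and consistent with the paper.
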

\begin{proof}
 The achievability, i.e., that 
 \begin{equation}
 R_{\textnormal{Ext}}(D_1, \ldots, D_K)\leq \tilde{R}_{\textnormal{Ext}}(D_1, \ldots, D_K),
\end{equation} 
can be proved using a scheme that is similar to the one that was
sketched in the proof of Theorem~\ref{thm:finite}. The only difference
is that, to produce the reconstruction sequence $\hat{X}_\e^n$, the
encoder applies the function~$\psi$ component-wise to the tuple $(X^n,
Z^{*n}, U^n)$, where, conditional on $(X^n, Z^{*n})$, the components
of the sequence $U^n$ are generated independently according to the
conditional law $P_{U|Z,X}$. The analysis of this scheme is omitted.
 
 
 We next prove the converse, i.e., that
 \begin{equation}\label{eq:ext_converse}
   R_{\textnormal{Ext}}(D_1, \ldots, D_K)\geq 
   \tilde{R}_{\textnormal{Ext}}(D_1, \ldots, D_K).
 \end{equation}
 
 Fix some positive $\epsilon$, a blocklength $n$, and a rate $R$. Let
 $\set{M}$ be a message set of size $|\set{M}|\leq 2^{n(R+\epsilon)}$,
 and let $\encn$, $\phi^{(n)}$, and $\psi^{(n)}$ be encoding and
 reconstruction functions as in \eqref{eq:funK} that satisfy the $K$
 distortion constraints in \eqref{eq:distck}.  For every
 $i\in\{1,\ldots, n\}$, define $Z_i$ in
 \eqref{eq:Zi} 
 \begin{equation} Z_i \triangleq (\idx,Y^{i-1},Y_{i+1}^n)
 \end{equation}
 and  define $U_i$ as 
 \begin{equation} U_i\triangleq (X_1^{i-1}, X_{i+1}^n).
 \end{equation}
 Notice that  for every $i\in\{1,\ldots, n\}$
 \begin{equation}
   (U_i, Z_i ) \markov X_i \markov Y_i.
 \end{equation}
 Also, following the steps
 in~\eqref{eq:25}--\eqref{equ:cv-part1-last}, we can conclude that
 \begin{equation}\label{eq:rI}
   n(R+\epsilon) \geq \sum_{i=1}^n I(X_i;Z_i)- I(Y_i;Z_i).
 \end{equation}
 
We further define---as in Section~\ref{sec:th1}---$\phi_i^{(n)}$ to be the function that maps $(M,Y^n)$ to the $i$-th symbol of  $\phi^{(n)}(M,Y^n)$ and 
$\psi_i^{(n)}$ to be the function that maps $X^n$ to the $i$-th symbol of  $\psi^{(n)}(X^n)$.
             Then, the symbol $\phi_i^{(n)}(M,Y^n)$ can be written as
\begin{equation}\label{eq:def0}
  \deci(Y_i,Z_i) \triangleq \decni(\idx,Y^n),
\end{equation}
and $\psi_i^{(n)}(X^n)$  can be written as
\begin{equation}\label{eq:def00}
\psi_i(X_i,Z_i,U_i) \triangleq \psi_{i}^{(n)}(X^n),
\end{equation}
for some functions $\deci$ and $\psi_i$ with arguments in the respective domains.
We finally define for each $k\in\{1,\ldots, K\}$ and $i\in\{1,\ldots, n\}$
\begin{equation}\label{eq:def20}
  D_{k,i} \triangleq
  \E{d_{k}(X_i, \decni(\idx,Y^n), \psi_{i}^{(n)}(X^n))},
\end{equation}
where $\Exp[\cdot]$ is with respect to
$P_{X_{\vphantom{I}}^nY_{\vphantom{I}}^n}$. Notice that 
\begin{equation}\label{eq:Dktot}
\sum_{i=1}^n D_{k,i} \leq D_k +\epsilon, \qquad k\in\{1, \ldots, K\}
\end{equation}
because the chosen encoding and reconstruction functions 
$\encn$, $\phi^{(n)}$, and $\psi^{(n)}$ satisfy~\eqref{eq:distck}.
Moreover, by definitions \eqref{eq:def0}--\eqref{eq:def20}, 
\begin{equation}\label{eq:Dsl}
  \E{d_k \bigl( X_i,\deci(Y_i,Z_i), \psi_i(X_i, Z_i, U_i) \bigr)} = D_{k,i},
\end{equation}
where $\Exp[\cdot]$ is with respect to
$P_{X_iY_i\vphantom{|}}P_{U_iZ_i|X_i}$.

Combining~\eqref{eq:rI} and \eqref{eq:Dsl} with the definition of
$\tilde{R}_{\textnormal{Ext}}$, we
obtain 
\begin{IEEEeqnarray}{rCl}
 n(R+\epsilon) 
& {\geq} &  \sum_{i = 1}^n I(X_i;Z_i) - I(Y_i;Z_i)\\
  \quad\quad\quad & {\geq} &
    \sum_{i = 1}^n \tilde{R}_{\textnormal{Ext}}(D_{1,i}, \ldots, D_{K,i}) \label{eq:Rext1}
  \\
  & \geq & 
    n \tilde{R}_{\textnormal{Ext}}\bigg(\frac{1}{n} \sum_{i = 1}^n D_{1,i}, \ldots,
        \frac{1}{n} \sum_{i = 1}^n D_{K,i}\bigg) \\
  & {\geq} & n \tilde{R}_{\textnormal{Ext}}\big( D_1+\epsilon, \ldots, D_K+\epsilon),\label{eq:fina}
\end{IEEEeqnarray}
where the last two inequalities follow by the convexity and the
monotonicity of $\tilde{R}_{\textnormal{Ext}}$ and
by~\eqref{eq:Dktot}.  By the continuity of
$\tilde{R}_{\textnormal{Ext}}$ and because $\epsilon>0$ and the
blocklength $n$ are arbitrary, the converse~\eqref{eq:ext_converse}
follows immediately from~\eqref{eq:fina}.
\end{proof}


 \section*{Acknowledgment}
 We acknowledge helpful discussions with Prof.~G.~Kramer. 

 \appendices

\section{Proof of Corollary~\ref{cor:steinberg}}\label{app:RDcr}
When $d_\e(\cdot,\cdot)$ is the Hamming distortion and $D_\e=0$, our
average-per-symbol distortion constraint \eqref{eq:distE} is less
stringent than the block-distortion constraint~\eqref{eq:steinberg2}
in Steinberg's setup (Remark~\ref{rem:steinberg}). Consequently,
\begin{equation}\label{eq:dir1}
R_{\textnormal{cr}}(D_\d) \geq R(D_\d,0).
\end{equation}
It remains to prove the reverse inequality. Let $Z$, $\phi$, and $\psi$
be minimizers of $R(D_\d,0)$, so
\begin{subequations}
\begin{equation}
  \label{eq:amos_ayef10}
R(D_\d,0) = I(X;Z)-I(Y;Z)
\end{equation}
\begin{equation}
  \label{eq:amos_ayef15}
  \Exp\big[d_\d\bigl(X, \fdc( Y,Z)\bigr)\big]  \leq  \Dd
\end{equation}
\begin{equation}
  \label{eq:amos_ayef16}
  \phi(Y,Z)=\psi(X,Z) \quad \textnormal{w.p.~1}  
\end{equation}
\begin{equation}
  \label{eq:amos_shikor10}
  Z \markov X \markov Y.
\end{equation}
\end{subequations}
To prove the reverse inequality we shall {upper-bound}
$R_{\textnormal{cr}}(D_\d)$ by showing that 
\begin{equation}
  \label{eq:amos_def_X_hat}
  \hat{X}\triangleq \phi(Y,Z)
\end{equation}
is feasible in the minimization \eqref{eq:amos_ayef200} that defines it.

From the definition of $\hat{X}$ \eqref{eq:amos_def_X_hat} and
from~\eqref{eq:amos_ayef16}, it follows that $\hat{X}$ is computable
(almost surely) from $(X,Z)$. This combines
with~\eqref{eq:amos_shikor10} to establish that
\begin{equation}
  \label{eq:MC1} 
  (\hat{X}, Z) \markov X \markov Y
\end{equation}  
and, \emph{a fortiori}, that
\begin{subequations}
\label{block:amos_ayef30}
\begin{equation}
  \label{eq:amos_ayef20}
  \hat{X} \markov X \markov Y.
\end{equation}
And by~\eqref{eq:amos_ayef15} and~\eqref{eq:amos_def_X_hat},
\begin{equation}
  \Exp\big[d_\d\bigl(X, \hat{X}\bigr)\big]  \leq  \Dd.
\end{equation}
\end{subequations}
It follows from~\eqref{block:amos_ayef30} that $\hat{X}$ is feasible in the
minimization \eqref{eq:amos_ayef200} defining
$R_{\textnormal{cr}}(D_\d)$ and thus
\begin{IEEEeqnarray}{rCl}
R_{\textnormal{cr}}(D_\d) & \leq & I(X;\hat{X})- I(Y;\hat{X}) \\
& =  & I(X;\hat{X}|Y) \label{eq:amos_ayef30b} \\
& \leq  & I(X;Z|Y) \label{eq:amos_ayef30c} \\
& =& I(X;Y)-I(X;Z) \label{eq:amos_ayef30d} \\
& = & R(D_\d,0) \label{eq:amos_ayef30e}
\end{IEEEeqnarray}
where~\eqref{eq:amos_ayef30b} follows from~\eqref{eq:amos_ayef20};
where~\eqref{eq:amos_ayef30c} follows, by the (conditional) data
processing inequality, from
\begin{equation}
  \label{eq:MC2}
  \hat{X} \markov (Y,Z) \markov X
\end{equation} 
(which holds by~\eqref{eq:amos_def_X_hat});
where~\eqref{eq:amos_ayef30d} follows from~\eqref{eq:amos_shikor10}; 
and~\eqref{eq:amos_ayef30e} follows
from~\eqref{eq:amos_ayef10}. Inequalities~\eqref{eq:dir1}
and~\eqref{eq:amos_ayef30e} establish the corollary.

\section{Proof of Proposition~\ref{prop:key_properties}}
\label{app:key_properties}
That $\RitDD$ is bounded by $H(X|Y)$ is just a restatement
of~\eqref{eq:amos_bounded}.  Monotonicity holds because the feasible
set in the minimization defining $\RitDD$ is enlarged (or is
unaltered) when $\Dd$ and/or $\De$ are increased.

As to the convexity, let $Z^{(1)}, \phi^{(1)}, \psi^{(1)}$ and
$Z^{(2)}, \phi^{(2)}, \psi^{(2)}$ be the random variables and
functions that achieve the minima in the definitions of
$\Ritsym\big(D_{\mathrm{d}}^{(1)},D_{\mathrm{e\vphantom{d}}}^{(1)}\big)$
and
$\Ritsym\big(D_{\mathrm{d}}^{(2)},D_{\mathrm{e\vphantom{d}}}^{(2)}\big)$. Let
$Q \sim \text{Bernoulli($\lambda$)}$ be independent of $(X, Y,
Z^{(1)},Z^{(2)})$. Define 
\begin{equation}
 Z \triangleq \bigl(Q,Z^{(Q)} \bigr) 
\end{equation}
and the functions 
\begin{equation}
  \phi(Y,Z) \triangleq \phi^{(Q)}\bigl(Y, Z^{(Q)} \bigr)
\end{equation}
\begin{equation}
  \psi(X,Z) \triangleq \psi^{(Q)}\bigl(X, Z^{(Q)} \bigr).
\end{equation}
Then 
\begin{equation}
 Z \markov X \markov Y;  
\end{equation}
\begin{eqnarray}
\lefteqn{ E[d_{\mathrm{d}}(X,\phi(Y,Z))]}\quad  \\
    &  =   & \lambda E[d_{\mathrm{d}}(X,\phi^{(1)}(Y,Z^{(1)}))] \nonumber \\
    &&
             + (1-\lambda) E[d_{\mathrm{d}}(X,\phi^{(2)}(Y,Z^{(2)}))] \\
    & \leq & \lambda D_{\mathrm{d}}^{(1)} + (1-\lambda) D_{\mathrm{d}}^{(2)};
             \label{equ:cx-dd-leq-lambda}
\end{eqnarray}
and
\begin{eqnarray}
\lefteqn{ E[d_{\mathrm{e}}(\phi(Y,Z),\psi(X,Z))] } \quad \\
    &  =   & \lambda
             E[d_{\mathrm{e}}(\phi^{(1)}(Y,Z^{(1)}),\psi^{(1)}(X,Z^{(1)}))] 
             \nonumber \\
    &      & + (1-\lambda)
             E[d_{\mathrm{e}}(\phi^{(2)}(Y,Z^{(2)}),\psi^{(2)}(X, Z^{(2)}))]\IEEEeqnarraynumspace \\
    & \leq & \lambda D_{\mathrm{e}}^{(1)} + (1-\lambda) D_{\mathrm{e}}^{(2)};
             \label{equ:cx-de-leq-lambda}
\end{eqnarray}
so $Z,\phi, \psi$ are feasible for the distortions
\begin{equation*}
\Bigl(\lambda \Dd^{(1)} + (1-\lambda) \Dd^{(2)} \, , \,
     \lambda \De^{(1)} + (1-\lambda) \De^{(2)}\Bigr).  
\end{equation*}

Consequently,
\begin{IEEEeqnarray}{rCl}
  \IEEEeqnarraymulticol{3}{l}{
    \Ritsym\big(\lambda D_{\mathrm{d}}^{(1)} + (1-\lambda)
  D_{\mathrm{d}}^{(2)}, \lambda D_{\mathrm{e\vphantom{d}}}^{(1)} +
  (1-\lambda) D_{\mathrm{e\vphantom{d}}}^{(2)}\big) }\nonumber\\\quad
& \leq & 
  I(X;Z) - I(Y;Z) \nonumber \\
  & = & H(X) - H(X|Z) - H(Y) + H(Y|Z) \nonumber \\
  &  = & H(X) - H(X|Z^{(Q)},Q) - H(Y) + H(Y|Z^{(Q)},Q) \nonumber \\
  & =  & H(X) - \lambda H(X|Z^{(1)}) - (1-\lambda) H(X|Z^{(2)})
  \nonumber\\
  && -\: H(Y) + \lambda H(Y|Z^{(1)}) + (1-\lambda) H(Y|Z^{(2)})
  \nonumber \\
  & = & \lambda \big(I(X;Z^{(1)}) - I(Y;Z^{(1)})\big) \nonumber\\
  && +\: (1-\lambda) \big(I(X;Z^{(2)}) - I(Y;Z^{(2)})\big).
    \nonumber  \label{equ:cx-i-eq-lambda} \\
    & = & \lambda \, 
    \Ritsym\big(D_{\mathrm{d}}^{(1)},D_{\mathrm{e\vphantom{d}}}^{(1)}\big)
     + (1-\lambda) \,
     \Ritsym\big(D_{\mathrm{d}}^{(2)},D_{\mathrm{e\vphantom{d}}}^{(2)}\big).
\end{IEEEeqnarray}

To conclude the proof it remains to prove that $\RitDD$ is continuous
on $\RealsP^{2}$. (Continuity on $\RealsPP^{2}$ is a consequence of
the convexity, but we also claim continuity in the closed set
$\RealsPP^{2}$.) Since $\RealsP^{2}$ is locally simplicial (as can be
verified by the definition in \cite[Section 10, p.~84]{Rockafellar} or
using \cite[Theorem 20.5, p~184]{Rockafellar}), the convexity of
$\RitDD$ on $\RealsP^{2}$ implies its upper-semicontinuity relative to
$\RealsP^{2}$. It thus remains to prove lower-semicontinuity relative
to $\RealsP^{2}$. That is, we need to show that
\begin{equation*}
 \bigl( D_{\d}^{(\kappa)}, D_{\e}^{(\kappa)} \bigr) \to \bigl(
 D_{\d}, D_{\e} \bigr)
\end{equation*}
implies that there is a subsequence $\{\kappa_{\nu}\}$ such that 
\begin{equation*}
  \Rit{D_{d}}{D_{\e}} \leq 
\lim_{\nu \to \infty} 
\Rit{D_{\d}^{(\kappa_{\nu})}}{D_{\e}^{(\kappa_{\nu})}}.
\end{equation*}
Let $\fdc^{(\kappa)}$, $\fec^{(\kappa)}$, $P_{Z|X}^{(\kappa)}$ achieve
$\Rit{D_{\d}^{(\kappa)}}{D_{\e}^{(\kappa)}}$ with $\mathcal{Z} = \{1,
\ldots, |\mathcal{X}| + 3\}$. Since there are only a finite number of
functions from $\mathcal{Y} \times \mathcal{Z}$ to $\hat{\mathcal{X}}$
and only a finite number of functions from $\mathcal{X} \times
\mathcal{Z}$ to $\hat{\mathcal{X}}$, we can choose a subsequence
$\{\kappa_{\nu}\}$ along which: the mappings $\fdc^{(\kappa_{\nu})}$
do not depend on $\nu$ and can be thus denoted $\fdc$; the mappings
$\fec^{(\kappa_{\nu})}$ do not depend on~$\nu$ and can be thus denoted
$\fec$; and the conditional laws $P_{Z|X}^{(\kappa_{\nu})}$ converge
to some conditional law that we denote $P_{Z|X}^{(0)}$.
By the continuity of mutual information,
$\Rit{D_{\d}^{(\kappa_{\nu})}}{D_{\e}^{(\kappa_{\nu})}}$ converges to
$I(X;Z) - I(Y;Z)$ evaluated with respect to $P_{Z|X}^{(0)} P_{XY}$, and 
$\Rit{D_{d}}{D_{\e}}$ cannot exceed this value because
$P_{Z|X}^{(0)}$, $\fec$, and $\fdc$ are in the feasible set defining it.

\section{Proof of Proposition~\ref{prop:Gach}}\label{sec:Gscheme}
We  present and analyze a scheme that achieves the rate-distortions tuples in Proposition~\ref{prop:Gach}.
Before describing the scheme, we introduce some notation and lemmas on $n$-dimensional spheres.
\subsection{On $n$-dimensional Spheres}
An \emph{$n$-sphere of radius $r>0$ centered at
  $\boldsymbol{\xi}\in\Reals^n$} is the set of all vectors
  $\vect{x} \in \Reals^n$ satisfying 
\begin{equation*}
\|\vect{x}-\boldsymbol{\xi}\| =r.
\end{equation*}
When the center of the sphere $\boldsymbol{\xi}$ is the origin
$\vect{0}$, we call it a \emph{centered} sphere, and when the radius
of the sphere is $1$, we call it a \emph{unit} sphere. 

We denote the angle between two nonzero vectors $\mathbf{u},
\mathbf{v} \in \mathbb{R}^n$ by
$\sphericalangle(\mathbf{u},\mathbf{v})$. Its cosine is
\begin{equation}
  \cos \sphericalangle(\mathbf{u},\mathbf{v}) \triangleq
  \frac{\langle\mathbf{u},\mathbf{v}\rangle}{\|\mathbf{u}\|\|\mathbf{v}\|}.
\end{equation}
Given a nonzero vector $\bfmu$ on an $n$-sphere $\mathcal{S}$, the
\emph{spherical cap of half-angle $\theta$ centered at $\bfmu$} is the
set of all vectors $\vect{x}$ on $\mathcal{S}$ satisfying
\begin{equation*}
\sphericalangle(\bfmu,\mathbf{x}) \geq  \theta.
\end{equation*}
The surface area of such a spherical cap does not depend on the vector
$\boldsymbol{\mu}$ but only on the dimension $n$, the radius of the sphere $r$, and the angle
$\theta$. If the radius $r=1$, we denote this surface area by
$C_n(\theta)$.

We say that a random $n$-vector is uniformly distributed over an $n$-sphere, if it is drawn according to a uniform probability measure over the surface of this sphere.

The proofs of the following four lemmas are based on results in \cite{shannon59} and omitted. 
\begin{lemma}\label{lem:Cnuniform}
  Let $\boldsymbol\Psi$ be uniformly distributed over the centered
  unit $n$-sphere, and let $\boldsymbol\mu$ be a deterministic
  unit-length vector in $\Reals^n$. Then,
\begin{equation}
\Prv{ \inner{\boldsymbol\Psi}{\boldsymbol\mu} \geq \tau}
=\frac{C_n{(\arccos (\tau))}}{C_n(\pi)}, \qquad 0 \leq \tau \leq 1.
\end{equation} 
\end{lemma}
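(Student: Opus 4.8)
The plan is to deduce the identity directly from the definition of the uniform distribution on the sphere together with the rotational invariance of surface area; this is the kind of elementary geometric fact that one would normally import from \cite{shannon59} rather than re-derive, which is why the paper states it without an explicit proof.

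First I would record two elementary observations. Since $\boldsymbol\Psi$ and $\boldsymbol\mu$ both have unit Euclidean norm,
\[
  \inner{\boldsymbol\Psi}{\boldsymbol\mu} = \cos\sphericalangle(\boldsymbol\mu,\boldsymbol\Psi);
\]
and $\tau\mapsto\arccos\tau$ is a strictly decreasing bijection of $[-1,1]$ onto $[0,\pi]$ whose inverse is $\theta\mapsto\cos\theta$ on $[0,\pi]$. Combining the two, for every $\tau\in[0,1]$,
\[
  \bigl\{\,\inner{\boldsymbol\Psi}{\boldsymbol\mu}\geq\tau\,\bigr\}
  = \bigl\{\,\sphericalangle(\boldsymbol\mu,\boldsymbol\Psi)\leq\arccos\tau\,\bigr\},
\]
and the set on the right is exactly the spherical cap of half-angle $\arccos\tau$ centered at $\boldsymbol\mu$ on the centered unit $n$-sphere.

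Next I would unpack the hypothesis that $\boldsymbol\Psi$ is \emph{uniformly distributed} over the centered unit $n$-sphere: by definition, $\boldsymbol\Psi$ is distributed according to the normalized surface-area measure, i.e., the unique rotation-invariant probability measure on that sphere, so the probability that $\boldsymbol\Psi$ falls in a given Borel subset of the sphere equals the ratio of the surface area of that subset to the total surface area of the sphere. The total surface area is $C_n(\pi)$, the area of the cap of half-angle $\pi$ (the whole sphere), while the area of the cap of half-angle $\arccos\tau$ is, by definition, $C_n(\arccos\tau)$ --- a quantity that, again by rotational invariance, does not depend on the center $\boldsymbol\mu$. Putting this together with the set identity of the previous paragraph gives
\[
  \Prv{\inner{\boldsymbol\Psi}{\boldsymbol\mu}\geq\tau}
  = \frac{C_n(\arccos\tau)}{C_n(\pi)},\qquad 0\leq\tau\leq 1,
\]
which is the claim.

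I do not expect any genuinely hard step here. The only points that warrant a sentence of care are (i) making explicit that the phrase \emph{uniformly distributed over the sphere} is to be read as the normalized surface-area measure, so that probabilities are literally area ratios, and (ii) the identification of the inner-product level set with a spherical cap, together with the harmless observation that its boundary $\{\inner{\boldsymbol\Psi}{\boldsymbol\mu}=\tau\}$ --- the intersection of the sphere with a hyperplane --- is a lower-dimensional set of surface-area measure zero, so it is immaterial whether one uses $\geq$ or $>$ (or includes or excludes the boundary of the cap). Both ingredients are exactly what \cite{shannon59} supplies.
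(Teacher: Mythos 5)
Your proof is correct and is essentially the argument the paper intends: the paper itself omits the proof of this lemma, deferring to \cite{shannon59}, and your identification of $\{\inner{\boldsymbol\Psi}{\boldsymbol\mu}\geq\tau\}$ with the cap of half-angle $\arccos\tau$ centered at $\boldsymbol\mu$, followed by the area-ratio computation under the normalized (rotation-invariant) surface measure, is exactly that elementary argument. One remark: your cap identification uses the convention $\sphericalangle(\boldsymbol\mu,\vect{x})\leq\theta$, which is the reading the lemma's formula requires; the paper's stated definition with ``$\sphericalangle(\boldsymbol\mu,\vect{x})\geq\theta$'' is evidently a typo, since it would make $C_n(\pi)$ the area of a single point rather than of the whole sphere.
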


\begin{lemma}\label{lem:Cnlim}
For $0\leq  \tau <1 $:
\begin{equation}
\lim_{n\rightarrow \infty} \frac{1}{n} \log \left(\frac{C_n(\arccos(
    \tau))}{C_n(\pi)} \right) = \frac{1}{2}\log(1-\tau^2).
\end{equation}
\end{lemma}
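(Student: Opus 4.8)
\emph{Proof proposal.} The plan is to reduce the claim to a one‑dimensional Laplace‑type estimate and to control the normalizing constant separately. By Lemma~\ref{lem:Cnuniform}, taking $\boldsymbol\mu$ to be the first standard basis vector, the ratio $C_n(\arccos(\tau))/C_n(\pi)$ equals $\Pr[\Psi_1 \geq \tau]$, where $\Psi_1$ is the first coordinate of a vector uniformly distributed over the centered unit $n$-sphere. Writing the surface measure of this cap in polar coordinates about $\boldsymbol\mu$ and substituting $t=\cos\phi$ --- equivalently, using that $\Psi_1$ has density proportional to $(1-t^2)^{(n-3)/2}$ on $[-1,1]$ --- one obtains
\begin{equation*}
  \frac{C_n(\arccos(\tau))}{C_n(\pi)} = \frac{\int_\tau^1 (1-t^2)^{(n-3)/2}\,\d t}{\int_{-1}^1 (1-t^2)^{(n-3)/2}\,\d t}.
\end{equation*}
Denoting the numerator by $N_n$ and the denominator by $D_n$, it then suffices to show that $\frac1n\log N_n \to \frac12\log(1-\tau^2)$ and that $\frac1n\log D_n \to 0$; the claimed identity is then immediate and, being a statement about exponential growth rates, is insensitive to the base of the logarithm.

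For the numerator, note that for $n\geq 3$ and $\tau\in(0,1)$ the integrand $(1-t^2)^{(n-3)/2}$ is nonincreasing on $[\tau,1]$, so $N_n \leq (1-\tau^2)^{(n-3)/2}$ and hence $\limsup_n \frac1n\log N_n \leq \frac12\log(1-\tau^2)$. For the reverse bound I would fix $\tau'\in(\tau,1)$ and use $N_n \geq \int_\tau^{\tau'}(1-t^2)^{(n-3)/2}\,\d t \geq (\tau'-\tau)(1-\tau'^2)^{(n-3)/2}$, giving $\liminf_n \frac1n\log N_n \geq \frac12\log(1-\tau'^2)$, and then let $\tau'\downarrow\tau$. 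In the boundary case $\tau=0$ one has $N_n\leq 1$ and $N_n \geq \int_0^{1/\sqrt n}(1-t^2)^{(n-3)/2}\,\d t \geq \tfrac{1}{\sqrt n}(1-1/n)^{(n-3)/2}$, so $\frac1n\log N_n \to 0$, which matches $\frac12\log(1-0^2)$.

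For the denominator, $D_n$ is the Beta integral $\mathrm{B}\!\left(\tfrac12,\tfrac{n-1}{2}\right)$, which is of order $n^{-1/2}$ by Stirling's formula; alternatively, without special functions, $D_n\leq 2$ while $D_n \geq N_n\big|_{\tau=0}$ is only polynomially small by the previous paragraph, so $\frac1n\log D_n\to 0$ in either case. Combining these, $\frac1n\log\bigl(C_n(\arccos(\tau))/C_n(\pi)\bigr)=\frac1n\log N_n-\frac1n\log D_n \to \frac12\log(1-\tau^2)$. (The same estimate can also be extracted from Shannon's spherical-cap asymptotics in \cite{shannon59}, on which the paper's spherical lemmas are already based.) The computation is entirely routine once the reduction is in place; the only places needing a little care are the boundary value $\tau=0$, where $\log(1-\tau^2)=0$ and one must check that $N_n$ decays only polynomially rather than exponentially, and the ``shift $\tau$ to $\tau'$ and pass to the limit'' device used for the lower bound on $N_n$.
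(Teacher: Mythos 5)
Your proof is correct. The paper itself does not prove this lemma at all: it states that the four spherical-geometry lemmas ``are based on results in \cite{shannon59} and omitted,'' i.e., it leans on Shannon's asymptotics for the solid angle of a spherical cap. Your argument is a genuinely self-contained alternative: you use Lemma~\ref{lem:Cnuniform} only to identify $C_n(\arccos\tau)/C_n(\pi)$ with $\Pr[\Psi_1\geq\tau]$, pass to the one-dimensional marginal density proportional to $(1-t^2)^{(n-3)/2}$, and then carry out elementary Laplace-type bounds: the upper bound $N_n\leq(1-\tau^2)^{(n-3)/2}$ from monotonicity of the integrand on $[\tau,1]$, the matching lower bound via the $\tau'\downarrow\tau$ device, the explicit check that at $\tau=0$ the decay is only polynomial, and the observation that the normalizing integral $D_n=\mathrm{B}\bigl(\tfrac12,\tfrac{n-1}{2}\bigr)$ contributes nothing on the exponential scale (with the nice touch that Stirling can be avoided entirely by sandwiching $D_n$ between $N_n|_{\tau=0}$ and $2$). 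All steps check out, including the restriction $n\geq3$ needed for monotonicity of the integrand and the base-independence of the exponential-rate statement. What your route buys is transparency and independence from \cite{shannon59}; what the paper's citation buys is brevity and sharper (non-logarithmic) cap asymptotics that are not actually needed for this lemma.
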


\begin{lemma}\label{lem:ft}
Let $f\colon \Reals \to(0,1]$ be such that the limit
\begin{equation}
-\eta_1 \triangleq \lim_{n\rightarrow \infty} \frac{1}{n} \log f(n)
\end{equation}
exists and $\eta_1>0$. Then, 
\begin{equation}
\lim_{n\rightarrow \infty} \bigl( 1-f(n) \bigr)^{2^{n\eta_2}} = \begin{cases} 1 &
  \textnormal{if } \eta_1 >\eta_2\\ 0 & \textnormal{if } \eta_1< \eta_2.
\end{cases}
\end{equation}
\end{lemma}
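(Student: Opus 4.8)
The plan is to pass to logarithms and reduce the statement to the asymptotics of a single exponent. Write the quantity of interest as $e^{L_n}$ with $L_n \triangleq 2^{n\eta_2}\ln\bigl(1-f(n)\bigr)$, where $\ln$ denotes the natural logarithm. Since $t\mapsto e^t$ is continuous with $e^0=1$ and $\lim_{t\to-\infty}e^t=0$, it suffices to show that $L_n\to 0$ when $\eta_1>\eta_2$ and that $L_n\to-\infty$ when $\eta_1<\eta_2$.

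First I would convert the hypothesis $\frac{1}{n}\log f(n)\to-\eta_1$ into a two-sided exponential envelope: for every $\delta>0$ there is an $n_0(\delta)$ such that
\begin{equation*}
2^{-n(\eta_1+\delta)}\;\le\; f(n)\;\le\; 2^{-n(\eta_1-\delta)}, \qquad n\ge n_0(\delta).
\end{equation*}
In particular $f(n)\to 0$, so for all sufficiently large $n$ we have $f(n)\le\tfrac12$, and on $[0,\tfrac12]$ the elementary inequalities $-2x\le\ln(1-x)\le-x$ hold. Combining these, for $n$ large,
\begin{equation*}
-2\cdot 2^{n\eta_2}f(n)\;\le\; L_n\;\le\; -2^{n\eta_2}f(n),
\end{equation*}
while the envelope gives $2^{\,n(\eta_2-\eta_1-\delta)}\le 2^{n\eta_2}f(n)\le 2^{\,n(\eta_2-\eta_1+\delta)}$.

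I would then finish by choosing $\delta$ strictly smaller than $|\eta_1-\eta_2|$, which pins down the sign of the exponent $\eta_2-\eta_1\pm\delta$. If $\eta_1>\eta_2$, take $\delta<\eta_1-\eta_2$; then $\eta_2-\eta_1+\delta<0$, so $2^{n\eta_2}f(n)\to0$, the displayed sandwich forces $L_n\to0$, and hence $\bigl(1-f(n)\bigr)^{2^{n\eta_2}}=e^{L_n}\to1$. If $\eta_1<\eta_2$, take $\delta<\eta_2-\eta_1$; then $\eta_2-\eta_1-\delta>0$, so $2^{n\eta_2}f(n)\ge 2^{n(\eta_2-\eta_1-\delta)}\to\infty$, hence $L_n\le-2^{n(\eta_2-\eta_1-\delta)}\to-\infty$ and $e^{L_n}\to0$.

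This is essentially a routine real-analysis computation, and I do not foresee a genuine obstacle; the only step needing a little care is the first one — extracting from the rate hypothesis on $\tfrac1n\log f(n)$ the uniform (for all large $n$) exponential envelope for $f(n)$, and then choosing $\delta$ smaller than $|\eta_1-\eta_2|$ so that the relevant exponent keeps a fixed sign. Note that nothing in the argument uses the sign of $\eta_2$, so the subcase $\eta_2\le 0$ (where $2^{n\eta_2}$ stays bounded) is absorbed automatically into the case $\eta_1>\eta_2$, and the lemma deliberately makes no claim when $\eta_1=\eta_2$, where the limit would depend on finer behaviour of $f$.
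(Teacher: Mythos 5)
Your proof is correct. Note that the paper itself does not prove Lemma~\ref{lem:ft}: it states that the proofs of Lemmas~\ref{lem:Cnuniform}--\ref{lem:lim0} are ``based on results in \cite{shannon59} and omitted.'' Your argument is therefore a welcome self-contained replacement: sandwiching $\ln\bigl(1-f(n)\bigr)$ between $-2f(n)$ and $-f(n)$ for $f(n)\le\tfrac12$ and combining this with the two-sided exponential envelope $2^{-n(\eta_1+\delta)}\le f(n)\le 2^{-n(\eta_1-\delta)}$ reduces everything to the sign of $\eta_2-\eta_1\pm\delta$, exactly as the standard $(1-x)^N\approx e^{-Nx}$ heuristic suggests, and it needs nothing beyond elementary calculus. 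One tiny point of bookkeeping: in the case $\eta_1>\eta_2$ with $\eta_2<0$, a $\delta<\eta_1-\eta_2$ need not satisfy $\delta<\eta_1$, so the envelope with that particular $\delta$ does not by itself give $f(n)\to0$; either take $\delta<\min\{\eta_1,\,|\eta_1-\eta_2|\}$ or simply observe that $f(n)\to0$ follows directly from $\tfrac1n\log f(n)\to-\eta_1<0$ (as you in effect do). With that reading the argument is complete, including your correct remark that the lemma says nothing about $\eta_1=\eta_2$.
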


\begin{lemma}\label{lem:lim0}
For $\theta \in (0,\pi/2)$
\begin{equation}\label{eq:lim110}
\lim_{n \rightarrow \infty} \frac{C_n(\theta)}{C_n(\pi)} = 0, 
\end{equation}
whereas for $\theta \in (\pi/2, \pi)$
\begin{equation}\label{eq:lim112}
\lim_{n \rightarrow \infty} \frac{C_n(\theta)}{C_n(\pi)} = 1.
\end{equation}
\end{lemma}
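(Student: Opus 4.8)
The plan is to read off both limits from the preceding Lemmas~\ref{lem:Cnuniform} and~\ref{lem:Cnlim}, treating the two angular regimes separately.

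For $\theta\in(0,\pi/2)$ we have $\cos\theta\in(0,1)$, so Lemma~\ref{lem:Cnuniform}, applied with threshold $\tau=\cos\theta$ and using $\arccos(\cos\theta)=\theta$, gives
\[
\frac{C_n(\theta)}{C_n(\pi)}=\Pr\bigl[\inner{\boldsymbol\Psi}{\boldsymbol\mu}\ge\cos\theta\bigr],
\]
where $\boldsymbol\Psi$ is uniform on the centered unit $n$-sphere and $\boldsymbol\mu$ a fixed unit vector. Lemma~\ref{lem:Cnlim}, again with $\tau=\cos\theta<1$, then yields
\[
\lim_{n\to\infty}\frac1n\log\frac{C_n(\theta)}{C_n(\pi)}=\frac12\log(1-\cos^2\theta)=\log\sin\theta,
\]
which is strictly negative because $0<\sin\theta<1$ on $(0,\pi/2)$. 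Hence $C_n(\theta)/C_n(\pi)\to0$ (exponentially fast), which is~\eqref{eq:lim110}.

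For $\theta\in(\pi/2,\pi)$ I would use a complementation identity. A point $\vect x$ on the unit sphere with $\sphericalangle(\boldsymbol\mu,\vect x)>\theta$ necessarily has $\sphericalangle(-\boldsymbol\mu,\vect x)=\pi-\sphericalangle(\boldsymbol\mu,\vect x)<\pi-\theta$, so the cap of half-angle $\theta$ about $\boldsymbol\mu$ together with the cap of half-angle $\pi-\theta$ about $-\boldsymbol\mu$ cover the whole sphere, overlapping only on $\{\vect x:\sphericalangle(\boldsymbol\mu,\vect x)=\theta\}$, an $(n-2)$-sphere of zero $(n-1)$-dimensional surface measure. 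Since a cap's area depends only on its half-angle, this gives
\[
C_n(\theta)+C_n(\pi-\theta)=C_n(\pi).
\]
Dividing by $C_n(\pi)$ and noting $\pi-\theta\in(0,\pi/2)$, the first part gives $C_n(\pi-\theta)/C_n(\pi)\to0$, and therefore $C_n(\theta)/C_n(\pi)\to1$, which is~\eqref{eq:lim112}.

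The statement is essentially a corollary of Lemmas~\ref{lem:Cnuniform} and~\ref{lem:Cnlim}; the only point needing some care is that Lemma~\ref{lem:Cnuniform} only covers thresholds $\tau=\cos\theta$ with $\theta\le\pi/2$, so the regime $\theta>\pi/2$ has to be handled separately---here via the complementation identity, whose verification uses the rotational invariance of $C_n(\cdot)$ (already recorded above) and the fact that the boundary $(n-2)$-sphere is a null set. Converting Lemma~\ref{lem:Cnlim}'s statement about $\frac1n\log$ into a genuine limit is immediate, since $\frac1n\log a_n\to c<0$ forces $a_n\to0$.
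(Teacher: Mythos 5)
Your proposal is correct. Note that the paper itself gives no argument for this lemma: it simply declares that the proofs of Lemmas~\ref{lem:Cnuniform}--\ref{lem:lim0} are based on results in \cite{shannon59} and omits them, so there is no in-paper proof to compare against. Your route is a legitimate self-contained derivation from the paper's own auxiliary lemmas: for $\theta\in(0,\pi/2)$, Lemma~\ref{lem:Cnlim} with $\tau=\cos\theta\in(0,1)$ gives the exponential rate $\tfrac12\log(1-\cos^2\theta)=\log\sin\theta<0$, and a strictly negative normalized-log limit indeed forces the ratio to zero; for $\theta\in(\pi/2,\pi)$, the complementation identity $C_n(\theta)+C_n(\pi-\theta)=C_n(\pi)$ (caps about $\boldsymbol\mu$ and $-\boldsymbol\mu$ covering the sphere and overlapping in a set of zero surface measure) reduces the claim to the first regime. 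This is exactly the kind of elementary argument Shannon's cap-area asymptotics would give, but packaged so that only Lemma~\ref{lem:Cnlim} is imported. One small remark: you implicitly read ``spherical cap of half-angle $\theta$'' as the set of points at angle \emph{at most} $\theta$ from the center, which is the reading consistent with Lemma~\ref{lem:Cnuniform} and with the lemma being proved (the paper's definition literally writes $\sphericalangle(\boldsymbol\mu,\mathbf{x})\geq\theta$, evidently a typo, since it would make $C_n(\pi)$ degenerate); flagging this convention explicitly would make your complementation identity unambiguous.
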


\subsection{Scheme}

Our scheme has parameters
\begin{equation}
a, \;\delta,\; \sigma_W^2 > 0  \quad \textnormal{and} \quad b \geq 0
\end{equation}
that must satisfy
Conditions~\eqref{equ:gaussian-ac-scheme-condition-dec} and
\eqref{equ:gaussian-ac-scheme-condition-enc}, which we repeat for
convenience here:
  \begin{eqnarray}\label{equ:gaussian-ac-scheme-condition-dec2}
  (1-a-b)^2 \sigma_X^2
              + a^2 \sigma_W^2
              + b^2 \sigma_U^2 \leq D_{\mathrm{d}}\;\\
              \label{equ:gaussian-ac-scheme-condition-enc2}
    b^2 \sigma_U^2 \leq D_{\mathrm{e\vphantom{d}}}.
  \end{eqnarray}
  To describe and analyze the scheme we use vector notation. Let
  $\vect{X}$ denote the $n$-dimensional column-vector that results
  when the source symbols are stacked on top of each other
  \begin{equation}
    \vect{X} \triangleq \trans{\begin{pmatrix} X_{1} & X_2 & \ldots & X_n\end{pmatrix}}.
  \end{equation}
  Likewise define the side-information vector $\vect{Y}$ and the
  reconstruction vectors $\vect{\hat{X}}_\d$, and
  $\vect{\hat{X}}_{\e}$.

\subsubsection{Codebook generation}
Let 
\begin{IEEEeqnarray}{rCl}
\sigma_Z^2 & \triangleq &a^2( \sigma_W^2+ \sigma_X^2),\\
  R' &\triangleq & \frac{1}{2}
            \log\left(
              \frac{\sigma_X^2 + \sigma_W^2}{\sigma_W^2}
            \right),\\ 
R &\triangleq & \frac{1}{2}
 \log\left(
    \frac{\sigma_X^2 \sigma_U^2 + \sigma_X^2 \sigma_W^2 +  \sigma_W^2\sigma_U^2 }
         {(\sigma_X^2 + \sigma_U^2)\sigma_W^2}
    \right). 
\end{IEEEeqnarray}
Draw $\lceil 2^{nR'}\rceil$ independent random $n$-vectors
$\{\mathbf{Z}(1),\mathbf{Z}(2),\ldots,\mathbf{Z}(\lceil 2^{nR'}
\rceil)\}$ uniformly over the centered $n$-sphere of radius $r =
\sqrt{n \sigma_Z^2}$. Assign these vectors to $\lfloor 2^{n(R+\delta)}\rfloor$ bins: the first $\lceil 2^{(R'-R-\delta)} \rceil$ are assigned to bin~$1$, the following $\lceil 2^{(R'-R-\delta)} \rceil$ vectors are assigned to bin~$2$, etc. 
More specifically, if 
$\mathcal{B}(m)$ denotes the set of vectors assigned to bin~$m\in\{1,\ldots, \lfloor 2^{n(R+\delta)}\rfloor\}$, then
\begin{equation*}
\mathcal{B}(m) =  \big\{ \vect{Z}_{(m-1)\lceil 2^{(R'-R-\delta)} \rceil+1}, \ldots, \vect{Z}_{m \lceil 2^{(R'-R-\delta)} \rceil}\big\}
\end{equation*} 
for $m=1,\ldots, \lfloor 2^{n(R+\delta)}\rfloor-1$
and 
\begin{equation*}
\mathcal{B}\big( \lfloor 2^{n(R+\delta)}\rfloor \big) \triangleq\big\{ \vect{Z}_{(\lfloor 2^{n(R+\delta)}\rfloor-1)+1}, \ldots, \vect{Z}_{\lceil 2^{nR'}\rceil}\big\}.
\end{equation*}

The codebook $\set{C}\triangleq \{\mathbf{Z}(1),\mathbf{Z}(2),\ldots,\mathbf{Z}(\lceil 2^{nR'} \rceil)\}$.

\subsubsection{Encoder}
Given the source sequence $\mathbf{X}=\mathbf{x}$, the encoder looks for
the codeword $\mathbf{z}^* \in \mathcal{C}$ that is closest to having the
``correct'' angle with $\mathbf{x}$:
\begin{equation}\label{eq:enc2}
  \mathbf{z}^* = 
    \operatorname*{arg\,min}_{ \substack{\mathbf{z} \in \mathcal{C}}}
      \left| \cos \sphericalangle(\mathbf{x},\mathbf{z})
             -
             \sqrt{1 - 2^{-2 R'}}
      \right|.
\end{equation}
The encoder then sends $M=m^*$, where $m^*$ denotes the index of the bin containing $\vect{z}^*$.
It also
produces the reconstruction sequence $\hat{\mathbf{x}}_{\mathrm{e}} = \mathbf{z}^* + b \mathbf{x}$.

\subsubsection{Decoder} Given $M=m^*$ and the side-information vector
$\mathbf{Y}=\mathbf{y}$, the decoder chooses
\begin{equation}\label{eq:dec2}
  \hat{\mathbf{z}} = 
    \operatorname*{arg\,min}_{\mathbf{z} \in\mathcal{B}(m^*)}
      \left| \cos \sphericalangle(\mathbf{y},\mathbf{z})
             - 
             \sqrt{1 - 2^{-2 (R' - R)}}
      \right|,
\end{equation}
and
produces the reconstruction sequence $\hat{\mathbf{x}}_{\mathrm{d}} = \hat{\mathbf{z}} + {b}
\mathbf{y}$.

With probability 1 the $\argmin$s in \eqref{eq:enc2} and \eqref{eq:dec2} are unique.
\subsection{Analysis}

We fix  $\epsilon>0$ sufficiently small such that 
\begin{equation}\label{eq:epssmall}
(1-4\epsilon) \sqrt{1 - 2^{-2 (R' - R)}}  > \sqrt{1 - 2^{-2 (R' - R-\delta/2)}},
\end{equation}
 and define the following four events: 
\begin{enumerate}
\item $ \mathcal{E}_{\mathrm{src}}:\;$ ``The source and
  side information are atypical'',
  i.e., 
  \begin{subequations}
  \begin{IEEEeqnarray}{rCl}
   & \Big|\frac{1}{n}\|\mathbf{X}\|^2 - \sigma_X^2\Big | > \epsilon 
    \sigma_X^2 \quad\text{ or }  \\
    &
    \Big|\frac{1}{n}\|\mathbf{Y}\|^2 - \sigma_Y^2\Big| > \epsilon
    \sigma_Y^2 \quad\text{ or }   \\
    &
    |\cos \sphericalangle(\mathbf{X},\mathbf{Y}) - \rho_{XY}| > \epsilon 
    \rho_{XY}\label{eq:rxy}
  \end{IEEEeqnarray}
  \end{subequations}
  where $\rho_{XY}$ denotes the correlation coefficient between $X$ and $Y$:
  \begin{IEEEeqnarray}{rCl}
  \rho_{XY}=\sqrt{ \frac{ \sigma_X^2}{\sigma_X^2 +\sigma_U^2}}.\IEEEeqnarraynumspace
  \end{IEEEeqnarray}

\item $\mathcal{E}_{\mathrm{enc}}: \;$ ``No codeword has a good angle
  with the source sequence'', i.e.,
  \begin{IEEEeqnarray}{rCl}\Big| \cos \sphericalangle(\mathbf{X},\mathbf{Z^*}) - \sqrt{1-2^{-2R'}}\Big| > \epsilon \sqrt{1-2^{-2R'}}. \IEEEeqnarraynumspace
  \end{IEEEeqnarray}
\item $ \mathcal{E}_{\mathrm{dec}1} :\;$ ``The chosen codeword
  $\vect{Z}^*$ does not have the correct angle with the
  side-information sequence'', i.e.,
  \begin{IEEEeqnarray}{rCl}\hspace{-5mm}\Big| \cos \sphericalangle(\mathbf{Y},\mathbf{Z^*}) - \sqrt{1-2^{-2(R'-R)}}\Big| > 4 \epsilon\sqrt{1-2^{-2(R'-R)}} .\nonumber \\\label{eq:dec1} \IEEEeqnarraynumspace
  \end{IEEEeqnarray}

\item $ \mathcal{E}_{\mathrm{dec}2} :\;$ ``The decoder does not find
  the correct codeword'', i.e.,
  \begin{equation}
\hat{\mathbf{Z}} \neq \mathbf{Z}^*.
  \end{equation}
  \end{enumerate}
%
  
  Also, we define the event 
  \begin{equation*}
  \mathcal{E} \triangleq \mathcal{E}_{\mathrm{src}} \cup \mathcal{E}_{\mathrm{enc}}\cup \mathcal{E}_{\mathrm{dec}1} \cup \mathcal{E}_{\mathrm{dec}2}.
  \end{equation*}

    \begin{lemma}\label{lem:limE}
 \begin{equation}\label{eq:lemE} \lim_{n\to\infty} \Prv{\mathcal{E}}=0.\end{equation}
  \end{lemma}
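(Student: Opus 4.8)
The plan is to show each of the four events $\mathcal{E}_{\mathrm{src}}$, $\mathcal{E}_{\mathrm{enc}}$, $\mathcal{E}_{\mathrm{dec}1}$, $\mathcal{E}_{\mathrm{dec}2}$ has vanishing probability and then invoke the union bound. First I would dispose of $\mathcal{E}_{\mathrm{src}}$: since $\{X_i\}$ and $\{U_i\}$ are i.i.d.\ Gaussian, $\frac1n\|\mathbf{X}\|^2\to\sigma_X^2$, $\frac1n\|\mathbf{Y}\|^2\to\sigma_Y^2$, and $\frac1n\langle\mathbf{X},\mathbf{Y}\rangle\to\sigma_X^2$ almost surely (or in probability) by the weak law of large numbers, and on the event that all three empirical quantities are close to their means, the empirical correlation coefficient $\cos\sphericalangle(\mathbf{X},\mathbf{Y})$ is close to $\rho_{XY}$ by continuity; hence $\Pr[\mathcal{E}_{\mathrm{src}}]\to 0$.

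Next I would handle $\mathcal{E}_{\mathrm{enc}}$, the covering step. Conditioned on $\mathbf{X}=\mathbf{x}$, the direction $\mathbf{Z}(j)/\|\mathbf{Z}(j)\|$ of each codeword is uniform on the centered unit sphere and independent across $j$. By Lemma~\ref{lem:Cnuniform}, the probability that a single codeword falls inside the spherical cap of directions making angle at most $\arccos(\tau)$ with $\mathbf{x}$ — i.e.\ $\cos\sphericalangle(\mathbf{x},\mathbf{Z}(j))$ lying in an interval around $\sqrt{1-2^{-2R'}}$ of width $\propto\epsilon$ — is, up to the relevant cap-area ratios, of exponential order $2^{-n(R'-\eta)}$ for some $\eta>0$ that can be made arbitrarily small by shrinking $\epsilon$ (this is where Lemma~\ref{lem:Cnlim} is used: $\frac1n\log$ of the cap-area ratio for $\tau=\sqrt{1-2^{-2R'}}$ tends to $\frac12\log(1-\tau^2)=\frac12\log 2^{-2R'}=-R'$). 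Since there are $\lceil 2^{nR'}\rceil$ independent codewords, the probability that \emph{none} of them lands in this annular cap is $(1-f(n))^{2^{nR'}}$ with $-\frac1n\log f(n)\to R'-\eta<R'$, so by Lemma~\ref{lem:ft} this tends to $0$. (One should also rule out, via Lemma~\ref{lem:lim0} or a second-moment/Chebyshev argument, the unlikely event that the closest codeword overshoots to the \emph{far} side; but with high probability the annular shell around the target angle is non-empty and the minimizer lies in it.)

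Then $\mathcal{E}_{\mathrm{dec}1}$: given that $\mathbf{Z}^*$ has nearly the correct angle with $\mathbf{X}$ (i.e.\ $\mathcal{E}_{\mathrm{enc}}^{c}$) and that the source/side-information pair is typical (i.e.\ $\mathcal{E}_{\mathrm{src}}^{c}$), decompose $\mathbf{Y}=\mathbf{X}+\mathbf{U}$ and compute $\cos\sphericalangle(\mathbf{Y},\mathbf{Z}^*)$ from $\langle\mathbf{X},\mathbf{Z}^*\rangle$, $\langle\mathbf{U},\mathbf{Z}^*\rangle$, $\|\mathbf{Y}\|$, $\|\mathbf{Z}^*\|$; the cross term $\langle\mathbf{U},\mathbf{Z}^*\rangle$ is negligible relative to $n$ with high probability because $\mathbf{U}$ is independent of $(\mathbf{X},\mathcal{C})$ and hence of the selected $\mathbf{Z}^*$ (conditioned on the selection, $\mathbf{U}$ is still zero-mean and its inner product with a fixed-length vector concentrates at $0$). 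A short algebraic computation then shows that $\cos\sphericalangle(\mathbf{Y},\mathbf{Z}^*)$ concentrates at $\rho_{XY}\sqrt{1-2^{-2R'}}$, and by the choice $R=R'+\frac12\log(\sigma_X^2/(\sigma_X^2+\sigma_U^2))=R'+\log\rho_{XY}$ one has $\rho_{XY}\sqrt{1-2^{-2R'}}=\sqrt{1-2^{-2(R'-R)}}$; the $4\epsilon$ slack in \eqref{eq:dec1} is exactly what is needed to absorb the $\epsilon$-sized deviations from $\mathcal{E}_{\mathrm{src}}^{c}$, $\mathcal{E}_{\mathrm{enc}}^{c}$, and the $\langle\mathbf{U},\mathbf{Z}^*\rangle$ term. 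Finally, for $\mathcal{E}_{\mathrm{dec}2}$: on $\mathcal{E}_{\mathrm{dec}1}^{c}$ the true codeword $\mathbf{Z}^*$ lies in bin $\mathcal{B}(m^*)$ and has $\cos\sphericalangle(\mathbf{Y},\mathbf{Z}^*)$ within $4\epsilon$ of the target; any \emph{other} codeword in that bin is, conditioned on $\mathbf{Y}$ and on being a wrong codeword, still uniform in direction and independent of $\mathbf{Y}$, so the probability it has $\cos\sphericalangle(\mathbf{Y},\cdot)$ within $(1-4\epsilon)$ of the target is again $2^{-n(R'-R-\eta')}$ by Lemmas~\ref{lem:Cnuniform}–\ref{lem:Cnlim}; since a bin contains only about $2^{n(R'-R-\delta)}$ codewords, a union bound over the bin gives a failure probability $\lesssim 2^{-n(\delta-\eta')}\to 0$ once $\epsilon$ (hence $\eta'$) is small — this is exactly what inequality \eqref{eq:epssmall} is arranged to guarantee. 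Combining the four bounds by the union bound yields \eqref{eq:lemE}. The main obstacle is the concentration argument showing $\langle\mathbf{U},\mathbf{Z}^*\rangle/n\to 0$ \emph{after} the data-dependent selection of $\mathbf{Z}^*$, and more generally keeping careful track of the $\epsilon$-slacks so that the exponents in Lemma~\ref{lem:ft} come out strictly on the correct side; everything else is routine spherical geometry and large deviations.
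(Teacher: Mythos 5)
Your proposal follows essentially the same route as the paper's proof: the same conditional union bound over the four events, the weak law of large numbers for $\mathcal{E}_{\mathrm{src}}$, a sphere-covering argument via Lemmas~\ref{lem:Cnuniform}, \ref{lem:Cnlim}, and \ref{lem:ft} for $\mathcal{E}_{\mathrm{enc}}$ (which the paper simply delegates to the corresponding limit in \cite{lapidoth_sending_2010}), a decomposition of $\cos\sphericalangle(\mathbf{Y},\mathbf{Z}^*)$ into a product term plus a vanishing cross term for $\mathcal{E}_{\mathrm{dec}1}$, and a per-bin union bound combined with \eqref{eq:epssmall} and Lemma~\ref{lem:ft} for $\mathcal{E}_{\mathrm{dec}2}$. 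One genuinely different (and somewhat more elementary) ingredient is your handling of the cross term: you use that $\mathbf{U}$ is independent of $(\mathbf{X},\mathcal{C})$, hence of the selected $\mathbf{Z}^*$, so $\tfrac{1}{n}\langle\mathbf{U},\mathbf{Z}^*\rangle$ concentrates at zero; the paper instead conditions on $\mathbf{X}$, $\mathbf{Y}$, and $\cos\sphericalangle(\mathbf{X},\mathbf{Z}^*)$ and exploits the uniformity of the orthogonal component $\mathbf{Z}^{*\perp}$ on a lower-dimensional sphere together with Lemmas~\ref{lem:Cnuniform} and \ref{lem:lim0}. Both work. Two corrections, though. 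First, your claimed relation $R=R'+\log\rho_{XY}$ is false in general (it would require $\sigma_X^4=\sigma_U^2(\sigma_X^2+\sigma_W^2)$); the identity that the concentration argument actually needs, and that the paper uses, is $\rho_{XY}\sqrt{1-2^{-2R'}}=\sqrt{1-2^{-2(R'-R)}}$, which does follow directly from the definitions of $R$ and $R'$, so your concentration value is correct but your justification of it is not. Second, in the $\mathcal{E}_{\mathrm{dec}2}$ step the other codewords of bin $\mathcal{B}(m^*)$ are \emph{not} exactly uniform in direction once you condition on $\mathbf{Z}^*$ being the encoder's minimizer: their conditional law excludes a region determined by $\mathbf{X}$ and $\cos\sphericalangle(\mathbf{X},\mathbf{Z}^*)$, so they are neither uniform nor independent of $\mathbf{Y}$ as you assert. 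The paper repairs exactly this point by bounding their conditional density on the unit sphere by $2/C_n(\pi)$ before applying Lemma~\ref{lem:Cnuniform}; your argument needs that (routine) fix, after which the exponent comparison $R'-R-\delta/2>R'-R-\delta$ in Lemma~\ref{lem:ft} goes through as you describe.
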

   \begin{proof}
    We note
  \begin{IEEEeqnarray}{rCl}\label{eq:sumE}
    \mathrm{Pr}[\mathcal{E}]
    & \leq &\mathrm{Pr}[\mathcal{E}_{\mathrm{src}}]
    + \mathrm{Pr}[\mathcal{E}_{\mathrm{enc}} | \mathcal{E}_{\mathrm{src}}^c]
    +
    \mathrm{Pr}[\mathcal{E}_{\mathrm{dec}1}
    | \mathcal{E}_{\mathrm{src}}^c \cap \mathcal{E}_{\mathrm{enc}}^c] \nonumber \\
    & &+  \mathrm{Pr}[\mathcal{E}_{\mathrm{dec}2}
    | \mathcal{E}_{\mathrm{src}}^c \cap \mathcal{E}_{\mathrm{enc}}^c].
  \end{IEEEeqnarray}
  In the following we show that each term on the RHS
  of~\eqref{eq:sumE} tends to zero as the blocklength $n$ tends to
  infinity. The first limit
    \begin{equation}\label{eq:wll1}
    \lim_{n\to\infty} \mathrm{Pr}[\mathcal{E}_{\mathrm{src}}] = 0 \end{equation}
      follows directly from the weak law of large numbers. 
%
The second limit
 \begin{equation}\label{eq:wll2}
\lim_{n\to \infty} \mathrm{Pr}[\mathcal{E}_{\mathrm{enc}}|\mathcal{E}_{\mathrm{src}}^c]=0
\end{equation}
can be shown following the same steps as in the proof of Limit~(134) in  \cite{lapidoth_sending_2010}.
The third limit 
 \begin{equation}\label{eq:wll3}
\lim_{n\to \infty}
    \mathrm{Pr}[\mathcal{E}_{\mathrm{dec}1}
    | \mathcal{E}_{\mathrm{src}}^c \cap \mathcal{E}_{\mathrm{enc}}^c]=0
\end{equation}
is proved as follows.
We have
  \begin{align}\label{eq:coYZ}
    \cos \sphericalangle(\mathbf{Y},\mathbf{Z}^*) 
    & =
    \cos \sphericalangle(\mathbf{X},\mathbf{Y})
    \cos \sphericalangle(\mathbf{X},\mathbf{Z}^*)
    +
    \frac{\langle\mathbf{Y}^{\perp},\mathbf{Z}^{*\perp}\rangle}
    {\|\mathbf{Y}\|\|\mathbf{Z}^*\|}
  \end{align}
where $\mathbf{Y}^{\perp}$ and $\mathbf{Z}^{*\perp}$ denote the components of $\mathbf{Y}$ and $\mathbf{Z}$ that are orthogonal to $\vect{X}$:
  \begin{eqnarray}
    \mathbf{Y}^{\perp} & \triangleq & \vect{Y}- \frac{\inner{\vect{X}}{\vect{Y}}}{\|\vect{X}\|^2} \vect{X}\\& = & 
    \mathbf{Y} -     \cos \sphericalangle(\mathbf{X},\mathbf{Y}) \|\mathbf{Y}\|
    \frac{\mathbf{X}}{\|\mathbf{X}\|}, 
  \end{eqnarray}
  and 
  \begin{eqnarray}
      \mathbf{Z}^{*\perp} & \triangleq & \vect{Z}^*- \frac{\inner{\vect{X}}{\vect{Z}^*}}{\|\vect{X}\|^2} \vect{X}\\& = & 
    \mathbf{Z}^* -    \cos \sphericalangle(\mathbf{X},\mathbf{Z}^*) \|\mathbf{Z}^*\|
    \frac{\mathbf{X}}{\|\mathbf{X}\|}.
      \end{eqnarray}
Let $t_{XZ^*}$ satisfy
\begin{equation}\label{eq:txz}
t_{XZ^*} \in \left[ (1-\epsilon) \sqrt{2^{-2R'}}, (1+ \epsilon) \sqrt{2^{-2R'}}\right]
\end{equation} 
and let $\vect{x}$ and $\vect{y}$ be vectors in $\Reals^n$ satisfying
  \begin{subequations}
  \begin{IEEEeqnarray}{rCl}
   & \Big|\frac{1}{n}\|\mathbf{x}\|^2 - \sigma_X^2\Big| \leq \epsilon \sigma_X^2\\
    &
    \Big|\frac{1}{n}\|\mathbf{y}\|^2 - \sigma_Y^2\Big| \leq \epsilon \sigma_Y^2
    \sigma_Y^2 \\
    &
    |\cos \sphericalangle(\mathbf{x},\mathbf{y}) - \rho_{XY}| \leq \epsilon 
    \rho_{XY}. \label{eq:rhoxy}
  \end{IEEEeqnarray}
  \end{subequations}
Then, conditional on  events
\begin{equation}\label{eq:condevents}
  \mathcal{E}_{\mathrm{src}}^c,\;\;\mathcal{E}_{\mathrm{enc}}^c,\;\;
        \mathbf{X}=\mathbf{x},\;\;\mathbf{Y}=\mathbf{y},\; \;\cos \sphericalangle(\mathbf{X},\mathbf{Z^*})=t_{XZ^*},
\end{equation}
by~\eqref{eq:txz} and \eqref{eq:rhoxy}, we have
\begin{subequations}\label{eq:product}
\begin{IEEEeqnarray}{rCl}
    \cos \sphericalangle(\mathbf{X},\mathbf{Y})
    \cos \sphericalangle(\mathbf{X},\mathbf{Z}^*) &\leq &(1+\eps) \rho_{XY}(1+ \epsilon) \sqrt{2^{-2R'}}\nonumber \\
    &\stackrel{(a)}{\leq} & \sqrt{1- 2^{-(R'-R)}} (1+3\epsilon)\IEEEeqnarraynumspace
\end{IEEEeqnarray}
and 
\begin{IEEEeqnarray}{rCl}
    \cos \sphericalangle(\mathbf{X},\mathbf{Y})
    \cos \sphericalangle(\mathbf{X},\mathbf{Z}^*) &\geq &(1-\eps) \rho_{XY}(1- \epsilon) \sqrt{2^{-2R'}}\nonumber \\
    & \stackrel{(a)}{\geq} & \sqrt{1- 2^{-(R'-R)}} (1-3\epsilon), \nonumber \\\IEEEeqnarraynumspace
\end{IEEEeqnarray}
\end{subequations}
where Inequalities~$(a)$ follow because 
\begin{equation}
\rho_{XY} \cdot\sqrt{1- 2^{-2R'}}=\sqrt{1-2^{-(R'-R)}}
\end{equation} and because $\epsilon\in(0,1)$.
Moreover, conditional on the events in~\eqref{eq:condevents}, 
  the vector $\mathbf{Z}^{*\perp}$ is uniformly distributed over a centered $(n-1)$-dimensional sphere of radius 
  $\sigma_Z^2(1-t_{XZ^*}^2)$, and thus  Limit~\eqref{eq:limperp} on top of the next page follows by Lemmas~\ref{lem:Cnuniform} and \ref{lem:lim0}.
  
We can combine Limit~\eqref{eq:limperp} and Inequalities~\eqref{eq:product} to obtain the limit~\eqref{eq:limitE2}  on top of the next page. 
\begin{figure*}
    \begin{equation}\label{eq:limperp}
    \lim_{n \rightarrow \infty}
    \mathrm{Pr}\left[
      \left|
    \langle{\mathbf{y}}^{\perp},{\mathbf{Z}}^{*\perp}\rangle
       \right|
       \leq  \epsilon \sqrt{1- 2^{-2(R'-R)}}   \|\vect{y}\|\sqrt{\sigma_Z^2}
        \Big|
        \mathbf{X}=\mathbf{x}, \mathbf{Y}=\mathbf{y}, \cos \sphericalangle(\mathbf{X},\mathbf{Z^*})=t_{XZ^*}
    \right]
    = 1
  \end{equation}
  \begin{equation}\label{eq:limitE2}
    \lim_{n \rightarrow \infty}
    \mathrm{Pr}\left[
      \Big|  \cos \sphericalangle(\mathbf{Y},\mathbf{Z}^*) - \sqrt{1- 2^{-2(R'-R)}}  \Big|
       \leq4 \epsilon \sqrt{1- 2^{-2(R'-R)}}   \Big|
       \mathcal{E}_{\mathrm{src}}^c,\mathcal{E}_{\mathrm{enc}}^c,
        \mathbf{X}=\mathbf{x},\mathbf{Y}=\mathbf{y}, \cos \sphericalangle(\mathbf{X},\mathbf{Z^*})=t_{XZ^*}
    \right]
    = 1
  \end{equation}
    \hrule 
   \end{figure*}
If in~\eqref{eq:limitE2} we take the expectation with respect to $\vect{X}, \vect{Y}$, and $\cos \sphericalangle(\mathbf{X},\mathbf{Z^*})$ (but keep the conditioning on events $\mathcal{E}_{\textnormal{src}}^{c}$ and $\mathcal{E}_{\textnormal{enc}}^{c}$), we obtain the desired third limit~\eqref{eq:wll3}. 

%
We finally prove the fourth limit 
 \begin{equation}\label{eq:wll4}
\lim_{n\to \infty}
    \mathrm{Pr}[\mathcal{E}_{\mathrm{dec}2}
    | \mathcal{E}_{\mathrm{src}}^c \cap \mathcal{E}_{\mathrm{enc}}^c]=0.
\end{equation}
To this end, we define event $\mathcal{E}_2$ as
\begin{equation}
   \cos \sphericalangle(\mathbf{Y},\mathbf{Z}')
          < \sqrt{1 - 2^{-2 (R' - R-\delta/2)}}, \quad \forall \mathbf{Z}' \in \left(\mathcal{B}(M) \backslash \mathbf{Z}^*\right).
\end{equation} 
Recalling the decoding rule in~\eqref{eq:dec2} and the definition of event $\mathcal{E}_{\textnormal{dec}1}$ in~\eqref{eq:dec1}, we see that  when $\mathcal{E}_{\mathrm{dec}1}^{c}$ and $\mathcal{E}_2$
occur simultaneously, then by condition~\eqref{eq:epssmall} the decoder finds the correct codeword $\hat{\vect{Z}}=\vect{Z}^*$. Therefore, 
 \begin{IEEEeqnarray}{rCl}\label{eq:sumU}
\Prv{\mathcal{E}_{\mathrm{dec}2}|\mathcal{E}_{\mathrm{src}}^c,\mathcal{E}_{\mathrm{enc}}^c} & \leq &1 -  \Prv{\mathcal{E}_{\mathrm{dec}1}^c \cap \mathcal{E}_2|\mathcal{E}_{\mathrm{src}}^c,\mathcal{E}_{\mathrm{enc}}^c}, \IEEEeqnarraynumspace
\end{IEEEeqnarray}  
and thus~\eqref{eq:wll3} and the limit
  \begin{equation}\label{equ:gaussian-ac-er-other-closer-pr-others-close}      \lim_{n \to \infty}    \Prv{\mathcal{E}_{2}^c |\mathcal{E}_{\mathrm{src}}^c,\mathcal{E}_{\mathrm{enc}}^c}=0
    \end{equation}
   establish~\eqref{eq:wll4}.
   
  We now prove~\eqref{equ:gaussian-ac-er-other-closer-pr-others-close}. 
For each $m\in\big\{1,\ldots, \lfloor 2^{n(R+\delta)} \rfloor\big\}$, we index the vectors in the $m$-th bin from $1$ to $|\set{B}(m)|$ and we shall refer to the $k$-th vector in this  $m$-th bin by $\vect{Z}_{m,k}$. Let $K^*$ be the index of $\vect{Z}^*$, i.e., $\vect{Z}_{M,K^*}=\vect{Z}^*$. 
  By the symmetry of the code construction and the encoding rule, the probability $\Prv{\set{E}^{c}| \mathcal{E}_{\mathrm{src}}^c,\mathcal{E}_{\mathrm{enc}}^c, M=m, K^*=k}$ does not depend on the values $m$ and $k$. We therefore, assume in the following that $M=1$ and $K^*=1$.  If we additionally condition on $\vect{X}=\vect{x}$ and on $\cos \sphericalangle(\mathbf{X},\mathbf{Z^*})=t_{XZ^*}>0$, the vectors $\vect{Z}_{1,2}, \ldots, \vect{Z}_{1,|\set{B}(1)|}$ (i.e., the vectors in bin~$1$  that are not $\vect{Z}^*$) are independent and uniformly distributed over the centered $n$-sphere of radius $\sqrt{n\sigma_Z^2}$  without the spherical cap of half-angle $\arccos(t_{XZ^*})$ centered at $\vect{x}$. Thus, $\frac{2}{C_n(\pi)}$ is an upper bound on the conditional density of the normalized vectors  $\frac{1}{\sqrt{n\sigma_Z^{2}}}\vect{Z}_{1,2}, \ldots, \frac{1}{\sqrt{n\sigma_Z^2}}\vect{Z}_{1,|\set{B}(1)|}$ on the centered unit $n$-sphere. Applying Lemma~\ref{lem:Cnuniform},  we 
 therefore obtain Inequality~\eqref{equ:gaussian-ac-er-other-closer-cond-pr-others-close}  shown on top of the next page.  \begin{figure*}
  \begin{IEEEeqnarray}{rCl}\label{eq:In1}
  \lefteqn{\mathrm{Pr}\left[\bigcup_{ k=2}^{|\mathcal{B}(1)|}
    \left(
          \cos \sphericalangle(\mathbf{Y},\mathbf{Z}_{1,k})
          \geq \sqrt{1 - 2^{-2 (R' - R-\delta/2)}}
    \right)
    \left|
    \begin{aligned}
      & \mathbf{X} = \mathbf{x},M=1, K^*=1,
      \mathcal{E}_{\mathrm{src}}^c,\mathcal{E}_{\mathrm{enc}}^c
    \end{aligned}
    \right.
    \right] }\qquad \nonumber  \\
    &  =&
1 -\prod_{ k=2}^{|\mathcal{B}(1)|} \left(1 - \mathrm{Pr}\left[
               \cos \sphericalangle(\mathbf{Y},\mathbf{Z}_{1,k})
               \geq \sqrt{1 - 2^{-2 (R' - R-\delta/2)}}
              \left|
    \begin{aligned}
      & \mathbf{X} = \mathbf{x}, M=1,
      K^*=1,\mathcal{E}_{\mathrm{src}}^c,\mathcal{E}_{\mathrm{enc}}^c
    \end{aligned}
    \right.
             \right]\right)\IEEEeqnarraynumspace
\\&&  <    1 -        
                      \left(1 -
               \frac{2 \, C_n(\arccos(\sqrt{1 - 2^{-2 (R' - R-\delta/2)}}))}{C_n(\pi)}\right)^{|\set{B}(1)|-1}
       \label{equ:gaussian-ac-er-other-closer-cond-pr-others-close}
       \\
       & & \leq 1 -
    \left(1-\frac{2 \, C_n(\arccos(\sqrt{1 - 2^{-2 (R' - R-\delta/2)}}))}{C_n(\pi)} \right)^{2^{n (R' - R-\delta)}}
    \label{equ:gaussian-ac-er-other-closer-pr-others-close-bound}
  \end{IEEEeqnarray}
  \hrule
  \end{figure*}
  We note
  that for any $\gamma\in[0,1]$ 
  \begin{equation}0\leq \left(1-\frac{2 \, C_n(\arccos(\gamma))}{C_n(\pi)} \right) \leq 1
  \end{equation}
  and hence the  mapping
  $t \mapsto \left(1-\frac{2 \, C_n(\arccos(\gamma))}{C_n(\pi)} \right)^t$
  is decreasing in $t>0$. Therefore, since 
  \begin{equation}
  |\mathcal{B}(1)|-1 <  2^{n (R' - R-\delta)}
  \end{equation} we further obtain~\eqref{equ:gaussian-ac-er-other-closer-pr-others-close-bound}. 
  If now we take the expectation with respect to $\mathbf{X}$, $M$, and $K^*$ (but keep the conditioning on $ \mathcal{E}_{\mathrm{src}}^c$ and $\mathcal{E}_{\mathrm{enc}}^c$), \eqref{equ:gaussian-ac-er-other-closer-pr-others-close-bound} results in
  \begin{align}
    & \mathrm{Pr}\left[\left.\mathcal{E}_2
    \right|
    \mathcal{E}_{\mathrm{src}}^c,\mathcal{E}_{\mathrm{enc}}^c
    \right] \nonumber\\
    &  <
    1 -
    \left(1-\frac{2 \, C_n\left(\arccos\sqrt{1 - 2^{-2 (R' - R-\frac{\delta}{2})}}\right)}{C_n(\pi)} \right)^{2^{n (R' - R-\delta)}}\!\!.
       \label{equ:gaussian-ac-er-other-closer-pr-others-close-bound2}
  \end{align}
The desired limit~(\ref{equ:gaussian-ac-er-other-closer-pr-others-close}) follows by \eqref{equ:gaussian-ac-er-other-closer-pr-others-close-bound2} and by Lemma~\ref{lem:ft}. In fact, applying Lemma~\ref{lem:ft} to 
\begin{equation}\label{eq:eta2}
\eta_2=R' - R-\delta
\end{equation} and to the function 
\begin{equation}
f\colon n \to \frac{2 \, C_n(\arccos(
      \sqrt{1 - 2^{-2 (R' - R-\delta/2)}}))}{C_n(\pi)},
\end{equation}
 we obtain that the right-hand side of~\eqref{equ:gaussian-ac-er-other-closer-pr-others-close-bound2} tends to 1 as $n$ tends to infinity because
  \begin{IEEEeqnarray}{rCl}\label{equ:gaussian-ac-er-other-closer-lim-log-cn}
  \eta_1&\triangleq &- \lim_{n\rightarrow \infty} \frac{1}{n} \log \left(\frac{2 \, C_n(\arccos(
      \sqrt{1 - 2^{-2 (R' - R-\delta/2)}}))}{C_n(\pi)} \right) \nonumber \\ & =& R' - R-\delta/2\\
      & >& \eta_2. \hspace{4cm}
  \end{IEEEeqnarray}
  Here, the equality holds by Lemma~\ref{lem:Cnlim} and because the factor 2 in the logarithm does not change the limit, and the inequality holds by~\eqref{eq:eta2} and because $\delta>0$.
%

This concludes the proof of limit~(\ref{equ:gaussian-ac-er-other-closer-pr-others-close}) and thus of the fourth limit~\eqref{eq:wll4}. 
Combining finally~\eqref{eq:sumE} with \eqref{eq:wll1}--\eqref{eq:wll3} and \eqref{eq:wll4}  establishes the proof of the lemma.
  \end{proof}

%
We can now bound the expected distortions of our scheme. We have
    \begin{IEEEeqnarray}{rCl}\label{equ:gaussian-ac-exp-dd}
    \E{\distn{\mathrm{d}}(\mathbf{X},\hat{\mathbf{X}}_{\mathrm{d}})} & 
    =& 
   \Prv{\mathcal{E}^c} \E{\distn{\mathrm{d}}(\mathbf{X},\hat{\mathbf{X}}_{\mathrm{d}})\big|\mathcal{E}^c} \nonumber\\
    && 
    + 
    \Prv{\mathcal{E}} \E{\distn{\mathrm{d}}(\mathbf{X},\hat{\mathbf{X}}_{\mathrm{d}})\big|\mathcal{E}},\IEEEeqnarraynumspace
  \end{IEEEeqnarray}
  and
  \begin{IEEEeqnarray}{rCl}\label{equ:gaussian-ac-exp-de}
    \E{\distn{\mathrm{e}}(\hat{\mathbf{X}}_{\mathrm{d}},\hat{\mathbf{X}}_{\mathrm{e}})}
   & =&
    \mathrm{Pr}[\mathcal{E}^c]
    \E{\distn{\mathrm{e}}(\hat{\mathbf{X}}_{\mathrm{d}},\hat{\mathbf{X}}_{\mathrm{e}})\big|\mathcal{E}^c}\nonumber \\
    && + 
    \mathrm{Pr}[\mathcal{E}]
    \E{\distn{\mathrm{e}}(\hat{\mathbf{X}}_{\mathrm{d}},\hat{\mathbf{X}}_{\mathrm{e}})\big|\mathcal{E}}.\IEEEeqnarraynumspace
  \end{IEEEeqnarray}
The decoder-side distortion satisfies
\begin{IEEEeqnarray}{rCl}
d_{\mathrm{d}}^{(n)}(\mathbf{x},\hat{\mathbf{x}}_{\mathrm{d}})
     & = &\frac{1}{n} \|\mathbf{x} - \mathbf{z}^* - {b}\vect{y} \|^2\\
     & \leq & \frac{3}{n}  \|\mathbf{x}\|^2 +\frac{3}{n} \|\mathbf{z}^* \|^2 + \frac{3}{n} {b^2} \|\vect{y}\|^2,\label{eq:ddrough}
\end{IEEEeqnarray}
where the inequality holds by the Cauchy-Schwarz Inequality and because an arithmetic mean of two nonnegative numbers cannot be smaller than it's geometric mean. 
Therefore, 
    \begin{IEEEeqnarray}{rCl}\lefteqn{
    \Prv{\mathcal{E}} \E{\distn{\mathrm{d}}(\mathbf{X},\hat{\mathbf{X}}_{\mathrm{d}})\big|\mathcal{E}}} \; \nonumber \\& \leq &   \frac{3}{n}  \Prv{\mathcal{E}} \E{   \|\mathbf{X}\|^2 + \|\mathbf{Z}^* \|^2 +  {b^2} \|\vect{Y}\|^2 \big|\mathcal{E}} \\
    & = & \frac{3}{n}\E{  \|\mathbf{X}\|^2 +\|\mathbf{Z}^* \|^2 +{b^2} \|\vect{Y}\|^2 }\nonumber \\ & & -    \frac{3}{n}  \Prv{\mathcal{E}^c} \E{  \|\mathbf{X}\|^2 +\|\mathbf{Z}^* \|^2 +{b^2} \|\vect{Y}\|^2 \big|\mathcal{E}^c} \IEEEeqnarraynumspace \\
&\leq	 &     3\Big(\sigma_X^2+\sigma_Z^2 +{b^2}( \sigma_X^2+\sigma_U^2)\Big)\nonumber \\ & & -   3\big( \sigma_X^2(1-\epsilon) +\sigma_Z^2 +{b^2}( \sigma_X^2+\sigma_U^2)(1-\epsilon)\Big) \Prv{\mathcal{E}^c} \\
& \leq&  3\Big(\sigma_X^2+\sigma_Z^2 +{b^2}( \sigma_X^2+\sigma_U^2)\Big)\big( 1- (1-\epsilon) \Prv{\mathcal{E}^c}\big).\label{eq:ddE}
    \IEEEeqnarraynumspace
  \end{IEEEeqnarray}
In the event $\mathcal{E}^c$,  we can derive a bound on the decoder-side distortion $d_\d^{(n)}(\mathbf{x},\hat{\mathbf{x}}_{\mathrm{d}})$ that is tighter than~\eqref{eq:ddrough}:
  \begin{align}
    & d_{\mathrm{d}}^{(n)}(\mathbf{x},\hat{\mathbf{x}}_{\mathrm{d}}) \nonumber\\
     &\quad = \frac{1}{n} \|\mathbf{x} - \mathbf{z}^* - {b}
     \mathbf{y}\|^2 \\
     &\quad = \frac{1}{n} \|\mathbf{x}\|^2 +\frac{1}{n}  \|\mathbf{z}^*\|^2  + \frac{b^2}{n} \|\mathbf{y}\|^2\nonumber \\
     & \quad \quad -\frac{2}{n} \langle \mathbf{x}, \mathbf{z}^* \rangle
     - \frac{2b}{n}  \langle \mathbf{x}, \mathbf{y} \rangle
+ \frac{2b}{n} \langle \mathbf{z}^*, \mathbf{y} \rangle
     \\
     &\quad \leq (1+\epsilon) \sigma_X^2 +\sigma_Z^2 
     + (1+\epsilon) b^2 (\sigma_X^2 + \sigma_U^2)
 \nonumber \\
     &\quad \quad     - 2(1-\epsilon)^2 a \sigma_X^2 
     - 2 (1-\epsilon)^3 b \sigma_X^2
     \nonumber \\ &\quad \quad   + 2 (1+\epsilon)(1+4 \epsilon) ab \sigma_X^2 \\
     &\quad \leq
     (1+a^2+b^2-2a-2b+2ab)\sigma_X^2 
     + a^2 \sigma_W^2
     +b^2  \sigma_U^2 \nonumber \\
     &\quad \quad
     +\epsilon (\sigma_X^2  
     + b^2 (\sigma_X^2 + \sigma_U^2)
     + 4 a \sigma_X^2
     + 6  b \sigma_X^2+10  ab \sigma_X^2
     ) 
    \nonumber \\
     &\quad \quad  +8  \epsilon^2 ab \sigma_X^2 +2 \epsilon^3 b\sigma_X^2 
     \\
     &\quad \leq
              D_{\mathrm{d}}\nonumber \\ & \qquad 
     +\epsilon (\sigma_X^2  
     + b^2 (\sigma_X^2 + \sigma_U^2)
     + 4 a \sigma_X^2
     + 8  b \sigma_X^2
     + 18  ab \sigma_X^2)
  \end{align}
  where the first inequality follows from the definition of the event $\mathcal{E}^c$, the second by throwing away some negative $\epsilon$-terms,  and the third from
   Condition~\eqref{equ:gaussian-ac-scheme-condition-dec2} and because $\epsilon<1$. Since $  \mathrm{Pr}[\mathcal{E}^c] \leq 1$, we thus have: 
   \begin{IEEEeqnarray}{rCl}\label{eq:ddEc}
\lefteqn{\mathrm{Pr}[\mathcal{E}^c]
    \E{\distn{\mathrm{e}}(\hat{\mathbf{X}}_{\mathrm{d}},\hat{\mathbf{X}}_{\mathrm{e}})\big|\mathcal{E}^c}}\nonumber \\
& \leq& D_d
     +\epsilon(\sigma_X^2  
     + b^2 (\sigma_X^2 + \sigma_U^2)
     + 4 a \sigma_X^2
     + 8  b \sigma_X^2
     + 18  ab \sigma_X^2).\nonumber \\
   \end{IEEEeqnarray}
Combining \eqref{equ:gaussian-ac-exp-dd}, \eqref{eq:ddE}, and \eqref{eq:ddEc}, we obtain
\begin{IEEEeqnarray}{rCl}
\lefteqn{\E{d_\textnormal{d}^{(n)}(\vect{X}, \hat{\vect{X}}_{\textnormal{d}})} }\quad \\ &\leq &D_{\d}+ 3\Big(\sigma_X^2+ \sigma_Z^2+ {b^2}\sigma_Y^2\Big)\big(1- (1+\epsilon)\Prv{\set{E}^{c}} \big)   \nonumber \\ & & +\epsilon (\sigma_X^2  
     + b^2 (\sigma_X^2 + \sigma_U^2)
     + 4 a \sigma_X^2
     + 8  b \sigma_X^2
     + 18  ab \sigma_X^2).
     \nonumber \\
 \label{eq:concDd}
\end{IEEEeqnarray}

Similarly, we have for the encoder-side distortion:
\begin{IEEEeqnarray}{rCl}
d_{\mathrm{e}}^{(n)}(\mathbf{x},\hat{\mathbf{x}}_{\mathrm{d}})
     & = &\frac{1}{n}\left \|{b}\vect{y}-b\vect{x} \right\|^2\\
     & \leq & \frac{2}{n} {b^2} \|\vect{y}\|^2  + \frac{2}{n}b^2 \|\mathbf{x}\|^2, \label{eq:derough}
\end{IEEEeqnarray}
and thus, 
    \begin{IEEEeqnarray}{rCl}\lefteqn{
    \Prv{\mathcal{E}} \E{\distn{\mathrm{e}}(\mathbf{X}_\d,\hat{\mathbf{X}}_{\mathrm{e}})\big|\mathcal{E}}} \qquad \nonumber \\
    & \leq & \frac{2}{n}\E{  {b^2}\|\vect{Y}\|^2  +b^2 \|\mathbf{X}\|^2 }\nonumber \\ & & -    \frac{2}{n}  \Prv{\mathcal{E}^c} \E{{b^2} \|\vect{Y}\|^2  +b^2 \|\mathbf{X}\|^2 \Big|\mathcal{E}^c} \IEEEeqnarraynumspace \\
& \leq&   2\Big({b^2} (\sigma_X^2+\sigma_U^2)+ b^2 \sigma_X^2\Big)\big( 1- (1-\epsilon) \Prv{\mathcal{E}^c}\big).\label{eq:deE}
    \IEEEeqnarraynumspace
  \end{IEEEeqnarray}
Moreover, in the event $\mathcal{E}^c$ we can derive a bound on the encoder-side distortion $d_\e^{(n)}(\hat{\mathbf{x}}_{\mathrm{d}},\hat{\mathbf{x}}_{\mathrm{e}})$ that is tighter than~\eqref{eq:derough}:
  \begin{align}
    d_{\mathrm{e}}^{(n)}(\hat{\mathbf{x}}_{\mathrm{d}},\hat{\mathbf{x}}_{\mathrm{e}})
     &  = \frac{1}{n} \left\|{b} \mathbf{y}
      - b \mathbf{x}\right\|^2 \\
     &  = \frac{1}{n} b^2 \Big( \|\mathbf{x}\|^2 +
  \|\mathbf{y}\|^2
     - 2\langle \mathbf{x}, \mathbf{y} \rangle
     \Big)\\
     & \leq 
     (1+\epsilon)b^2\sigma_X^2 + (1+\epsilon) b^2(\sigma_X^2 +\sigma_U^2)  
\nonumber \\ & \quad     - 2b^2 (1-\epsilon)^3 \sigma_X^2
    \\
     & \leq b^2 \sigma_U^2
     + \epsilon b^2 (8\sigma_X^2 + \sigma_U^2)+ \epsilon^3 b^2 \sigma_X^2
     \\
     & \leq 
              D_{\mathrm{e}}
              + \epsilon b^2 (9\sigma_X^2 + \sigma_U^2),
  \end{align}
  where the last inequality follows by Assumption~\eqref{equ:gaussian-ac-scheme-condition-enc2} and because $\epsilon<1$. Since $\mathrm{Pr}[\mathcal{E}^c]\leq 1$, we thus have  
  \begin{IEEEeqnarray}{rCl}\label{eq:deEc}
  {\mathrm{Pr}[\mathcal{E}^c]
    \E{\distn{\mathrm{e}}(\hat{\mathbf{X}}_{\mathrm{d}},\hat{\mathbf{X}}_{\mathrm{e}})\big|\mathcal{E}^c}} 
         & \leq &
               D_{\mathrm{e}}
              + \epsilon b^2 (9\sigma_X^2 + \sigma_U^2). \IEEEeqnarraynumspace
    \end{IEEEeqnarray}
Combining finally~\eqref{equ:gaussian-ac-exp-de}, \eqref{eq:deE}, and \eqref{eq:deEc}, we obtain
\begin{IEEEeqnarray}{rCl}\lefteqn{
\E{\distn{\mathrm{e}}(\mathbf{X}_\d,\hat{\mathbf{X}}_{\mathrm{e}})} }\nonumber \quad \\ & \leq &   D_{\mathrm{e}} +2\Big({b^2} \sigma_Y^2+ b^2 \sigma_X^2\Big)\big( 1- (1-\epsilon) \Prv{\mathcal{E}^c}\big)  \nonumber \\ &&+ \epsilon b^2 (9\sigma_X^2 + \sigma_U^2). \label{eq:concDe}
\end{IEEEeqnarray}

Recall that the rate of our scheme is smaller than $R+\delta$ and that $\epsilon, \delta>0$ can be chosen arbitrarily close to 0. Therefore, from \eqref{eq:concDd}, \eqref{eq:concDe}, and Lemma~\ref{lem:limE} we  conclude that when $a, \sigma_W^2>0$ and $b\geq 0$ satisfy~\eqref{equ:gaussian-ac-scheme-condition-dec2} and \eqref{equ:gaussian-ac-scheme-condition-enc2}, then our scheme can achieve  
the triple 
\begin{equation}
\left(R=
 \frac{1}{2}
 \log\left(
    \frac{\sigma_X^2 \sigma_U^2 + \sigma_X^2 \sigma_W^2 + \sigma_U^2 \sigma_W^2}
         {(\sigma_X^2 + \sigma_U^2)\sigma_W^2}
    \right),
 D_{\mathrm{d}},D_{\mathrm{e\vphantom{d}}}
\right).
\end{equation} This establishes Proposition~\ref{prop:Gach}.

\section{The Cardinality Bound on $\set{U}$}
\label{app:Amos}
To prove the cardinality bound~\eqref{eq:amos_cardU} on $\set{U}$, we
shall need the following variation on Carath\'eodory's theorem.
\begin{lemma}\label{lem:card}
  Any point on the boundary of the convex hull of a compact set in
  $\Reals^d$ can be expressed as a convex combination of $d$ or fewer
  points in the set.
\end{lemma}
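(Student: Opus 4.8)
The plan is to reduce the claim to the classical Carath\'eodory theorem applied one dimension lower, pinning down the location of the point by means of a supporting hyperplane. Write $\set{S}\subseteq\Reals^{d}$ for the compact set and let $x$ be a point on the boundary of $\textnormal{conv}(\set{S})$. Since $\set{S}$ is compact, so is $\textnormal{conv}(\set{S})$ (the convex hull of a compact subset of $\Reals^{d}$ is compact), and in particular it is closed and convex. If $\textnormal{conv}(\set{S})$ has empty interior, then it lies in an affine subspace of $\Reals^{d}$ of dimension at most $d-1$, and the claim follows at once from Carath\'eodory's theorem applied inside that subspace; I therefore assume from now on that $\textnormal{conv}(\set{S})$ is full-dimensional. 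Then the supporting-hyperplane theorem \cite{Rockafellar} furnishes a nonzero $a\in\Reals^{d}$ and a scalar $\beta$ with $\inner{a}{y}\le\beta$ for every $y\in\textnormal{conv}(\set{S})$ and $\inner{a}{x}=\beta$.

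First I would show that $x$ already lies in the convex hull of the ``face'' $\set{S}'\triangleq\set{S}\cap\{s\in\Reals^{d}\colon\inner{a}{s}=\beta\}$. By the very definition of the convex hull we may write $x=\sum_{i}\lambda_{i}s_{i}$ as a \emph{finite} convex combination of points $s_{i}\in\set{S}$ with all $\lambda_{i}>0$ and $\sum_{i}\lambda_{i}=1$. Then $\beta=\inner{a}{x}=\sum_{i}\lambda_{i}\inner{a}{s_{i}}$; since each summand obeys $\inner{a}{s_{i}}\le\beta$ while the weighted average equals $\beta$ and every weight is strictly positive, we must have $\inner{a}{s_{i}}=\beta$ for all $i$, i.e.\ $s_{i}\in\set{S}'$. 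Hence $x\in\textnormal{conv}(\set{S}')$.

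Finally, $\set{S}'$ is a closed subset of the compact set $\set{S}$, hence itself compact, and it is contained in the hyperplane $\{s\colon\inner{a}{s}=\beta\}$, which is an affine subspace of dimension $d-1$. Identifying that hyperplane affinely with $\Reals^{d-1}$ and applying the ordinary Carath\'eodory theorem, we obtain $x$ as a convex combination of at most $(d-1)+1=d$ points of $\set{S}'$, and these points lie in $\set{S}$; this is precisely the asserted bound. The only point requiring any care is the degenerate case in which $\textnormal{conv}(\set{S})$ is not full-dimensional, which is why I peel it off at the outset; beyond that the argument is a routine combination of standard convexity facts, and I do not anticipate a real obstacle.
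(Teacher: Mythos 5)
Your proof is correct and follows essentially the same route as the paper's: pick a supporting hyperplane at the boundary point, argue via the strict positivity of the convex weights that all points in the representation must lie on that hyperplane, and then invoke Carath\'eodory inside the $(d-1)$-dimensional hyperplane to get at most $d$ points. The only cosmetic difference is that you peel off the lower-dimensional case separately (which the paper's argument handles implicitly, since a supporting hyperplane exists at every boundary point regardless of the dimension of the hull), and you work with the face $\set{S}\cap\set{H}$ rather than with a fixed Carath\'eodory representation; neither affects correctness.
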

\begin{proof}
  Let $\set{S}$ be a compact subset of $\Reals^{d}$, and let
  $\vect{x}$ be a boundary point of its convex hull
  $\textnormal{conv}(\set{S})$. Since $\vect{x}$ is in the convex hull
  of $\set{S}$, it follows from Carath\'eodory's theorem that there
  exist $d+1$ or fewer points
  \begin{equation}
    \vect{x}_{1}, \ldots, \vect{x}_{\nu} \in \set{S}, \qquad \nu \leq d+1
  \end{equation}
  and positive coefficients summing to $1$
  \begin{equation}
    \label{eq:tahina5}
    \lambda_{1}, \ldots, \lambda_{\nu} > 0, \qquad \sum_{i=1}^{\nu}
    \lambda_{i} = 1
  \end{equation}
  such that
  \begin{equation}
    \label{eq:tahina10}
    \vect{x}= \sum_{i=1}^{\nu}\lambda_i \, \vect{x}_i.
  \end{equation}
  We shall show that, in fact, of these $\nu$ points, we can find $d$
  or fewer points whose convex combination is $\vect{x}$.

  Since $\vect{x}$ is on the boundary of $\textnormal{conv}(\set{S})$,
  there exists a hyperplane $\set{H}$ that supports
  $\textnormal{conv}(\set{S})$ at $\vect{x}$.  Thus,
  \begin{subequations}
  \begin{equation}
    \set{H} = \bigl\{ \boldsymbol{\xi} \in \Reals^{d} \colon
    \trans{\vect{c}} \boldsymbol{\xi} = \trans{\vect{c}} \vect{x} \bigr\}
  \end{equation}
  for some vector $\vect{c} \in \Reals^{d}$ and
  \begin{equation}
    \trans{\vect{c}} \vect{{x}}= 
    \max_{\vect{\tilde{x}}\in \textnormal{conv}(\set{S})} \trans{\vect{c}}
    \vect{\tilde{x}}
  \end{equation}
  \end{subequations}
  so
  \begin{equation}
    \label{eq:tahina20}
    \trans{\vect{c}} \vect{x} \geq \trans{\vect{c}} \vect{x}_{i},
    \quad i = 1, \ldots, \nu.
  \end{equation}
  We shall next show that the points $\vect{x}_{1}, \ldots,
  \vect{x}_{\nu}$ are in $\set{H}$. To that end we note that 
  by~\eqref{eq:tahina10}
  \begin{align*}
    0 & =  \trans{\vect{c}} \biggl( \vect{x} - \sum_{i=1}^{\nu}\lambda_i
      \, \vect{x}_i \biggr) \\
    & = \sum_{i=1}^{\nu} \lambda_{i} \trans{\vect{c}}
    \vect{x} - \sum_{i=1}^{\nu} \lambda_{i} \trans{\vect{c}}
    \vect{x}_i \\
    & = \sum_{i=1}^{\nu} \lambda_{i} \Bigl( \trans{\vect{c}}
    \vect{x} - \trans{\vect{c}}
    \vect{x}_i \Bigr)
  \end{align*}
  where the second equality holds because the $\lambda$'s sum to $1$
  \eqref{eq:tahina5}. Since the $\lambda$'s are all positive, it
  follows from~\eqref{eq:tahina20} that all the terms on the RHS are
  nonnegative. Since they sum to zero, they must all be zero. And
  since the $\lambda$'s are positive, we conclude that
  \begin{equation}
    \label{eq:tahina30}
    \trans{\vect{c}} \vect{x}_i = \trans{\vect{c}} \vect{x}, \quad i
    \in\{1, \ldots, \nu\}
  \end{equation}
  and the vectors $\vect{x}_{i}$ are all in $\set{H}$. The vector
  $\vect{x}$ can thus be written as a convex combination of the $\nu$
  vectors in $\vect{x}_{1}, \ldots, \vect{x}_{\nu}$ in $\set{H}$. Since
  $\set{H}$ is $(d-1)$-dimensional, it follows from Carath\'eodory's
  theorem that $\vect{x}$ is in fact a convex combination of $d$ or
  fewer of the vectors $\vect{x}_{1}, \ldots, \vect{x}_{\nu}$.
\end{proof} 

The cardinality bound on $\set{U}$ can now be proved as follows. 
\begin{proof}[Proof of the Cardinality Bound on $\set{U}$ in
  Proposition~\ref{prop:AmosCardBound}]
Let the discrete random variables $U$ and $Z$ over the alphabets
$\set{U}$ and $\set{Z}$, the function $\phi\colon \set{Y} \times
\set{Z} \to \hat{\set{X}}_{\d}$, and the function $\psi\colon \set{X}
\times \set{Z} \times \set{U} \to \hat{\set{X}}_\e$
satisfy~\eqref{eq:amos4001} and \eqref{eq:distErk}. We shall exhibit a
random variable $\tilde{U}$ over the alphabet
   \begin{equation}
  \tilde{\set{U}}\triangleq \{1,\ldots, K\}
  \end{equation}  
  and a  function $\tilde{\psi}\colon \set{X}\times \set{Z} \times \tilde{\set{U}} \to \hat{\set{X}}_\e$ satisfying
  \begin{equation}\label{eq:MarkovU} 
  \tilde{U} \markov ({X}, {Z}) \markov Y
  \end{equation} 
  and the $K$ distortion constraints
   \begin{equation}\label{eq:ineqDU}
     \E{ d_k\bigl(X, \fdc(Y,Z), \tilde{\psi}(X,Z,\tilde{U})\bigr)} \leq D_{k}, 
     \quad k\in\{1,\ldots, K\}.
  \end{equation}
  Since the Markov conditions~\eqref{eq:amos4001} and \eqref{eq:MarkovU} imply~
  \begin{equation}
  (\tilde{U}, Z) \markov X \markov Y,
  \end{equation}
  this will allow us to replace $U$ and $\psi$ with $\tilde{U}$ and
  $\tilde{\psi}$ and thus conclude the proof.

  To describe  $\tilde{U}$ and $\tilde{\psi}$, we need some
  definitions.  For each pair $(x,z)\in \set{X} \times \set{Z}$
  and each $k\in\{1,\ldots, K\}$, define
  \begin{align}
    D_{k}^{(x,z)} & = \BigPrvcond{d_k\bigl(X, \fdc(Y,Z),\psi(X,Z,U)
      \bigr)}{(X,Z) = (x,z)} \nonumber \\
    & = \E{ d_k \bigl( x, \fdc(Y,z),\psi(x,z,U) \bigr)}, 
  \label{eq:dkxz}
\end{align} 
where the expectation is, by~\eqref{eq:amos4001}, with respect to
$P_{U|XZ}(\cdot|x,z) \, P_{Y|X}(\cdot|x)$. Define also the vector-valued
function
  \begin{IEEEeqnarray}{rCl}\label{eq:funh}
h^{(x,z)} \colon \set{U}  &\to& \Reals^K_+\nonumber \vspace{2mm}\\ u &\mapsto&  \begin{pmatrix}  \E{ {d}_1\bigl(x,  \fdc(Y,z), \psi(x,z,u)\bigr)}\\ \vdots\\ \E{ {d}_K\bigl(x, \fdc(Y,z), \psi(x,z,u)\bigr)}\end{pmatrix} 
\end{IEEEeqnarray} 
  where the expectation is with respect to $P_{Y|X}(\cdot|x)$. Let $\set{S}^{(x,z)}$ denote the image of $h^{(x,z)}$:  
  \begin{equation}\label{eq:Sxz}
  \set{S}^{(x,z)} \triangleq \bigl\{ \vect{s} \in \Reals^K_+ \colon\vect{s}= h^{(x,z)}(u) \textnormal{ for some  }u\in\set{U}\bigr\}.
  \end{equation}
By definitions~\eqref{eq:dkxz}--\eqref{eq:Sxz}
  \begin{equation}
  \begin{pmatrix} 
  D_1^{(x,z)} \\ \vdots \\ D_K^{(x,z)} 
  \end{pmatrix} \in \textnormal{conv}\bigl(\set{S}^{(x,z)}\bigr)
  \end{equation}
  and,
  consequently, 
  there exists a point
  \begin{equation*}
   \bar{\vect{s}}^{(x,z)}= \begin{pmatrix} 
  \bar{s}_{1}^{(x,z)} \\ \vdots\\  \bar{s}_{K}^{(x,z)}   \end{pmatrix}
  \end{equation*} 
on the boundary of conv$(\set{S}^{(x,z)})$ with
  \begin{equation}\label{eq:barsD}
    \bar{s}_{k}^{(x,z)} \leq D_{k}^{(x,z)}, \qquad k\in\{1,\ldots, K\}.
  \end{equation} 
  Since $\set{S}^{(x,z)}$ is compact (it contains at most
  $|\hat{\set{X}}_\e|$ points because $h^{(x,z)}(u)$ depends on $u$
  only via $\psi(x,z,u)$), Lemma~\ref{lem:card} implies that
  $\bar{\vect{s}}^{(x,z)}$ can be written as a convex combination of
  $K$ or fewer points in $\set{S}^{(x,z)}$:
   \begin{equation}\label{eq:linearlambda}
   \bar{\vect{s}}^{(x,z)} = \sum_{j=1}^K \lambda_j \, \vect{s}_j^{(x,z)},
   \end{equation}
   where $\vect{s}_1^{(x,z)} , \ldots, \vect{s}_K^{(x,z)} \in
   \set{S}^{(x,z)}$ and the coefficients $\lambda_1, \ldots \lambda_K
   \in [0,1]$ sum to 1.  Let $u_1^{(x,z)}, \ldots, u_K^{(x,z)}\in
   \set{U}$ be preimages of $\vect{s}_1^{(x,z)} , \ldots,
   \vect{s}_K^{(x,z)}$ so
  \begin{equation}\label{eq:imagesk}
h^{(x,z)}\big(u_j^{(x,z)}\big)= \vect{s}_j^{(x,z)}, \qquad j\in\{1,\ldots, K\}.
  \end{equation}
  
  We can now define the function $\tilde{\psi}$ as mapping every pair
  $(x,z) \in \set{X} \times \set{Z}$ and every $j\in \{1,\ldots, K\}$
  to 
  \begin{equation}\label{eq:psiti}
    \tilde{\psi}(x, z, j) \triangleq {\psi}\bigl(x, z, u_j^{(x,z)}\bigr).
  \end{equation}
  And we define the random
  variable $\tilde{U}$ to be conditionally independent of $Y$ given
  $(X,Z)$ with the conditional law
  \begin{equation}\label{eq:lawU}
  \Prv{\tilde{U} = j | X=x, Z=z} = \lambda_j^{(x,z)}, \qquad j\in \{1,\ldots, K\}.
  \end{equation}
  The Markov condition~\eqref{eq:MarkovU} thus holds by definition.
  Moreover, \eqref{eq:dkxz}, \eqref{eq:funh}, and
  \eqref{eq:barsD}--\eqref{eq:lawU} combine to prove that
  $\tilde{U}$ and $\tilde{\psi}$ also satisfy the $K$ distortion
  constraints in~\eqref{eq:ineqDU}: denoting the $k$-th
  component of the vector $\vect{s}_j$ by $s_{j,k}$, for $j,
  k\in\{1,\ldots, K\}$, 
  \begin{IEEEeqnarray}{rCl}
\lefteqn{\E{ d_k \bigl(x, \fdc(Y,z),\tilde{\psi}(x,z,\tilde{U})
    \bigr)} } 
\quad \nonumber \\
&=& \sum_{j=1}^K \lambda_j \E{ d_k\bigl(x, \fdc(Y,z),\tilde{\psi}(x,z,j)\bigr)} \\
&=& \sum_{j=1}^K \lambda_j \E{ d_k\bigl(x, \fdc(Y,z),\psi(x,z,u_{j}^{(x,z)})\bigr)} \\
&= &  \sum_{j=1}^K  s_{j,k}^{(x,z)} \\
& = & \bar{s}_{k} \\
& \leq & D_{k}^{(x,z)},\label{eq:inDkxz}
  \end{IEEEeqnarray}
  where the first equality holds by~\eqref{eq:lawU}, the second
  equality by~\eqref{eq:psiti}, the third equality by \eqref{eq:funh}
  and \eqref{eq:imagesk}, the fourth equality by
  \eqref{eq:linearlambda}, and the inequality at the end by
  \eqref{eq:barsD}. Finally, from~\eqref{eq:inDkxz} we conclude that
    \begin{IEEEeqnarray}{rCl}\lefteqn{  
        \E{d_k\bigl(X, \fdc(Y,Z),\tilde{\psi}(X,Z,\tilde{U})\bigr)} } \quad \nonumber \\
    & = & \!
 \sum_{x\in\set{X}, z\in\set{Z}}\!\Prv{X=x, Z=z} 
 \E{ d_k\bigl(x, \fdc(Y,z),\tilde{\psi}(x,z,\tilde{U})\bigr)} \nonumber \\\\ & \leq &\! \sum_{x\in\set{X}, z\in\set{Z}}\! \Prv{X=x, Z=z} D_{k}^{(x,z)}\\
 & \leq & D_k
  \end{IEEEeqnarray}
  where the last inequality follows from the definition of
  $D_k^{(x,z)}$ in~\eqref{eq:dkxz} and the fact that the tuple $(U,Z,
  \phi, \psi)$ satisfies the original distortion constraints
  in~\eqref{eq:distErk}.
\end{proof}


\end{document}